\documentclass[accepted]{uai2026} 
                        
\usepackage[american]{babel}

\usepackage{natbib} 
\usepackage{mathtools} 
\usepackage{amsfonts}
\usepackage{booktabs} 
\usepackage{tikz} 
\usepackage{amsthm}
\usepackage{delimset}
\usepackage{subcaption}

\makeatletter

\renewcommand\p@subfigure{}
\makeatother 

\DeclareMathOperator*{\argmax}{arg\,max}
\DeclareMathOperator*{\argmin}{arg\,min}
\newtheorem{theorem}{Theorem}
\newtheorem{lemma}{Lemma}

\newtheorem{proposition}{Proposition}
\newtheorem{definition}{Definition}

\newtheorem{condition}{Condition}

\newtheorem{algo}{Algorithm}

\title{Constrained Weighted Bayesian Bootstrap}

\author[1]{Sam~Rosen}
\author[2]{Jason~Xu}
\affil[1]{%
    Department of Statistical Science\\
    Duke University 
}
\affil[2]{%
    Department of Biostatistics\\
    University of California, Los Angeles 
}
  
\begin{document}
\maketitle

\begin{abstract}
    We prove the weighted Bayesian bootstrap, a method for approximate sampling of a posterior distribution, can be extended to sample from general constrained posterior distributions under mild assumptions. The method entails a simple algorithm that can take advantage of fast tools from convex optimization. Under regularity conditions, we show the asymptotic distribution of samples from the constrained weighted Bayesian bootstrap has a covariance matching  the restricted maximum likelihood estimator, an efficient estimator. We assess the method empirically on a variety of constrained Bayesian problems, demonstrating broad applicability of the method as well as advantages over existing peer methods. The constrained weighted Bayesian bootstrap quickly samples from constrained posteriors,  providing adequate uncertainty quantification for problems typically solved via optimization methods designed to deliver only a point estimate. As a case study, using constraints required in European-style option prices, uncertainty estimates of an option pricing surface are derived with constrained weighted Bayesian bootstrap.
\end{abstract}

\section{Introduction}\label{sec:intro}

The canonical formulation of statistical learning tasks from an optimization lens seeks to maximize a measure of fit subject to a constraint or regularization term imposing desired structure. Accordingly, there is a vast literature with classical theory and efficient algorithms for solving constrained problems. Generally, the solutions delivered by these methods provide a point estimate of the parameters of interest. Constructing confidence intervals can quickly become complicated by common constraints such as sparsity, order or shape restrictions. Uncertainty estimates could then necessitate post-selection inference procedures, deviate from standard asymptotic theory, or otherwise require case-by-case treatment depending on the constraint structure \citep{lee2016exact,panigrahi2018scalable}.

Alternatively, Bayesian methods yield straightforward uncertainty quantification via the posterior distribution, but the computation and sampling procedure itself is often complicated by the presence of the constraints.
When fitting a probabilistic model to data, the underlying parameters may require restriction to a subspace, $\tilde \Theta \subset \Theta$, of a larger parameter space. Exact sampling of a posterior distribution with support restricted to a general set $\tilde \Theta$ becomes difficult, even when the unconstrained posterior is well-defined. Popular generic sampling methods such as Hamiltonian Monte Carlo (HMC) cannot be easily used when imposing the constraint results in a lack of smoothness, and  variants for special cases such as restrictions along manifolds entail higher computational overhead \citep{girolami2011riemann}.  Sampling from the unconstrained posterior and then discarding samples outside the constraint set \textit{post hoc} will give valid samples, but can be extremely inefficient; in many cases $\tilde \Theta$ may be a low-probability or even zero-measure subspace of the posterior \citep{xu2023bayesian}.

Existing literature for sampling a posterior distribution with restricted support varies in the underlying mechanism. As a reasonable heuristic, transformation-based approaches sample from unconstrained posteriors and transform the samples to the constrained space \textit{ex post facto}, such as with an orthogonal projection \citep{astfalckPosteriorProjectionInference2024}, or an oblique projection determined from the posterior \citep{dunsonBayesianInferenceOrderConstrained2003, everinkBayesianInferenceProjected2023}. Orthogonal projections may ignore the geometry of the posterior and in cases where the posterior covariance or Fisher information must be estimated, oblique projections can be sensitive to this uncertainty. Specific prior forms encourage $\theta$ samples to be in $\tilde \Theta$ by heavily down-weighting the prior outside the constraint set with a hyperparameter controlling the amount of relaxation \citep{duanBayesianConstraintRelaxation2020, presmanDistancetoSetPriorsConstrained2023}; however, this may lead to samples that are close to, but not exactly in the constraint set. Although inexact, these methods are designed to interface nicely with gradient-based sampling methods. Some specialized priors take advantage of the robustness of the proximal mapping used in constrained optimization \citep{zhouProximalMCMCBayesian2024, xuBayesianInferenceUsing2024}. Application of this principled approach is limited to constraint sets with the ability to repeatedly calculate proximal mappings quickly. Altogether, constrained sampling with these methods has various trade-offs, and many share a connection to an idea from convex optimization.

In this work, we expand upon the weighted likelihood bootstrap first introduced in \citet{newtonApproximateBayesianInference1994}, which transforms a posterior sampling problem into an optimization problem (see Section \ref{sec:methods} for details). The weighted likelihood bootstrap has been generalized to a variety of contexts with success such as multimodal posteriors \citep{fongScalableNonparametricSampling2019}, nonparametric learning \citep{lyddonGeneralBayesianUpdating2019}, and latent variable models \citep{hanStatisticalInferenceMeanField2019}. In more recent works, where the prior form is explicitly considered, it is referred to as the weighted Bayesian bootstrap \citep{newtonWeightedBayesianBootstrap2021}. The weighted Bayesian bootstrap is effective with lasso-based estimators for both sparse normal mean and high-dimensional regression problems with compelling theoretical properties such as concentration towards the true underlying parameters \citep{ng_random_2022}. The work of \citet{nieBayesianBootstrapSpikeandSlab2023} produces samples of the spike-and-slab lasso posterior \citep{rockova-spike-and-slab_2018} using the weighted Bayesian bootstrap while proving a minimax optimal rate of concentration. Similar ideas lead to an expansion to handle binary responses in \citet{menacher-bayesian-2024}. 

Given the weighted likelihood/Bayesian bootstrap replaces standard posterior sampling using optimization steps, we propose a simple adaptation to enable sampling from posteriors with constrained support. This is especially salient as constraints can significantly hinder sampling-based posterior computation, while there is rich literature on constrained optimization. The weighted Bayesian bootstrap has shown robust efficacy both empirically and theoretically in a variety of domains, and this article will additionally show that many of these properties carry over to the constrained regime. Furthermore, the ease of implementation is shown for several examples, along with the difficulties of sampling in constrained spaces for other modern Bayesian posterior sampling methods\footnote{All code used to produce the experimental results and figures is found at \url{https://github.com/SamGRosen/CWBB}.}.

\section{Methods}\label{sec:methods}

We consider data generated from a probability density function $f_{\theta}$, with $\theta \in \Theta$ and prior beliefs encoded in $\pi(\theta)$. The standard form for the posterior distribution of $\theta$ given the underlying data follows
\begin{equation}
    \pi(\theta \mid x) \propto f_\theta(x) \pi(\theta). \label{eqn:standard-posterior}
\end{equation}
By restricting $\pi(\theta)$ to have positive support only on a set $\tilde \Theta \subset \Theta$, the \textit{constrained} posterior distribution can be written up to proportionality 
\begin{equation}
    \tilde \pi(\theta \mid x) \propto \pi(\theta \mid x) \times \mathbf 1(\theta \in \tilde \Theta), \label{eqn:constrained-posterior}
\end{equation}
where $\mathbf 1$ denotes the 0--1 indicator function.

As surveyed in the introduction, it is not straightforward to account for the indicator constraint in generality. To make progress, we revisit an idea introduced in \citet{newtonApproximateBayesianInference1994} that performs approximate sampling efficiently when the  \textit{maximum a posteriori} estimate of \eqref{eqn:standard-posterior} is available. They propose the weighted likelihood bootstrap, where samples are produced by first generating weights $w_{n} \sim n \times \text{Dirichlet}(1,\ldots,1)$, and then maximizing a weighted log-likelihood function,
\begin{equation}
    \argmax_{\theta \in \Theta} \sum_{i = 1}^n w_{n,i} \ell(x_i \mid \theta), \label{eqn:wlb-argmax-def}
\end{equation}
where $\ell(x_i \mid \theta) = \log f_{\theta}(x_i)$. These samples are shown to be first-order correct, sharing an asymptotic conditional distribution with the posterior distribution under a variety of priors. This may also be called the weighted Bayesian bootstrap when prior terms (also possibly weighted) are appended to \eqref{eqn:wlb-argmax-def} \citep{newtonWeightedBayesianBootstrap2021}.

\citet{newtonApproximateBayesianInference1994} posits repeated perturbations of the maximum likelihood equation on a single dataset create a path toward sampling. Motivated by the well developed literature and breadth of tools for constrained problems in the optimization literature, we devise the constrained weighted Bayesian bootstrap (CWBB), a variant of the weighted likelihood bootstrap designed to address \eqref{eqn:constrained-posterior}. To incorporate the prior constrained to $\tilde \Theta$, we add the log-prior term to \eqref{eqn:wlb-argmax-def} and write an equivalent form with an equality constraint
\begin{equation}
    \argmax_{\theta \in \Theta}\sum_{i=1}^n w_{n,i} \ell(x_i \mid \theta) + \log \pi(\theta), \quad h(\theta) = 0, \label{eqn:restricted-weighted-MAP}
\end{equation}
where $h\colon \mathbb R^{p} \mapsto \mathbb R^r$, $h(\theta) = 0$ if and only if $\theta \in \tilde \Theta$, and $h$ satisfies regularity conditions such as continuous second derivatives in a neighborhood of $\theta_0$. CWBB with this optimization problem is summarized in Algorithm \ref{algorithm:cwbb}. 

\begin{algo}\label{algorithm:cwbb}
Constrained weighted Bayesian bootstrap.
\vspace*{-8pt}
\begin{tabbing}
   \qquad \enspace For $t=1$ to number of desired samples \\
   \qquad \qquad Sample $w_n \sim n \times \text{Dirichlet}(1,\ldots,1)$. \\
   \qquad \qquad Set $\theta^{(t)}$ according to \eqref{eqn:restricted-weighted-MAP}. \\
\qquad \enspace Output $\theta^{(\cdot)}$
\end{tabbing}
\end{algo}

\paragraph{Illustrative example} Consider the Bayesian linear regression where coefficients are restricted to the unit sphere in $\mathbb R^p$ after a linear transformation:
\begin{equation}
\begin{split}
    y \mid X, \beta & \sim N(X \beta, \tau_{\epsilon}^{-1} I_n), \\
    \beta & \sim N(0, \tau_{\beta}^{-1}I_p)\times \mathbf 1({\|A \beta\|_2^2 = 1}), \label{eqn:example-posterior}
\end{split}
\end{equation}
where $A \in \mathbb R^{p \times p}$. Denoting the $i$th row of $X$ by $x_i$ and ignoring proportionality constants, we can write \eqref{eqn:restricted-weighted-MAP} as
\begin{equation}
    \argmin_{\|A \beta\|_2^2 = 1} \frac{\tau_{\epsilon}}{2}\sum_{i=1}^n  w_{n,i} \brk[c]{y_i - 
    x_i ^\top \beta}^2 + \frac{\tau_\beta}{2}\|\beta\|_2^2. \label{eqn:example-argmin}
\end{equation}
Let $\tilde y_i = w_{n,i}^{1/2} y_i$ and $\tilde X = \operatorname{diag}(w_{n})^{1/2} X$. The Lagrangian of \eqref{eqn:example-argmin} and resulting gradient is
\begin{align}
    \mathcal L(\beta, \lambda) & = \frac{\tau_{\epsilon}}{2} \| \tilde y - \tilde X \beta \|_2^2 + \frac{\tau_\beta}{2}\|\beta\|_2^2 + \lambda (\|A \beta\|_2^2 - 1), \nonumber \\
    \nabla_{\beta} \mathcal L(\beta, \lambda) & = -\tau_\epsilon \tilde X^\top (\tilde y - \tilde X \beta) + \tau_\beta \beta + 2\lambda A^\top A \beta. \label{eqn:example-lagrangian}
\end{align}
Setting \eqref{eqn:example-lagrangian} to zero for stationarity and solving for $\beta$ gives
\begin{equation}    
    \beta = \brk[c]{\tilde X^\top \tilde X + \tau_\epsilon^{-1}(\tau_\beta I_p + 2\lambda A^\top A)}^{-1} \tilde X^\top \tilde y. \label{eqn:example-multiplier-solve}
\end{equation}
Hence, the $p$-dimensional optimization problem is reduced to a one-dimensional root-finding problem for the Lagrange multiplier, $\lambda$, that satisfies both $\|A\beta\|_2^2 = 1$ and \eqref{eqn:example-multiplier-solve}. In the unconstrained case, computation in the weighted Bayesian bootstrap reduces to calculating ridge regression solutions for randomly weighted data. In the constrained case, CWBB operates similarly, but the regularization parameter automatically adapts to debias the solutions towards the constraint set across iterations. 

Figure \ref{fig:constrained-sample-examples} shows samples from several peer methods for sampling from the constrained posterior of \eqref{eqn:example-posterior} with $n = 100$, $p=2$ and $A = I_2$. Exact constrained samples concentrate on the unit circle around the true value $\beta_0$. Both methods which relax the constraint lead to separate issues: the distance-to-set prior of \cite{presmanDistancetoSetPriorsConstrained2023} has samples very close to the constraint set, but still noticeably far despite a high regularization value of $\rho = 5000$; the constraint relaxation from \cite{duanBayesianConstraintRelaxation2020} generates samples far from regions of high posterior density induced by a local mode from the nonconvex constraint set. Ignoring the geometry of the posterior, orthogonally projected samples may project to low density regions of the posterior. This is ameliorated in the obliquely projected samples since the posterior covariance is known. Meanwhile, CWBB respects the constraint set, and we see that its samples most closely resemble the exact posterior density.

\begin{figure}
\centering
\includegraphics[width=0.48 \textwidth]{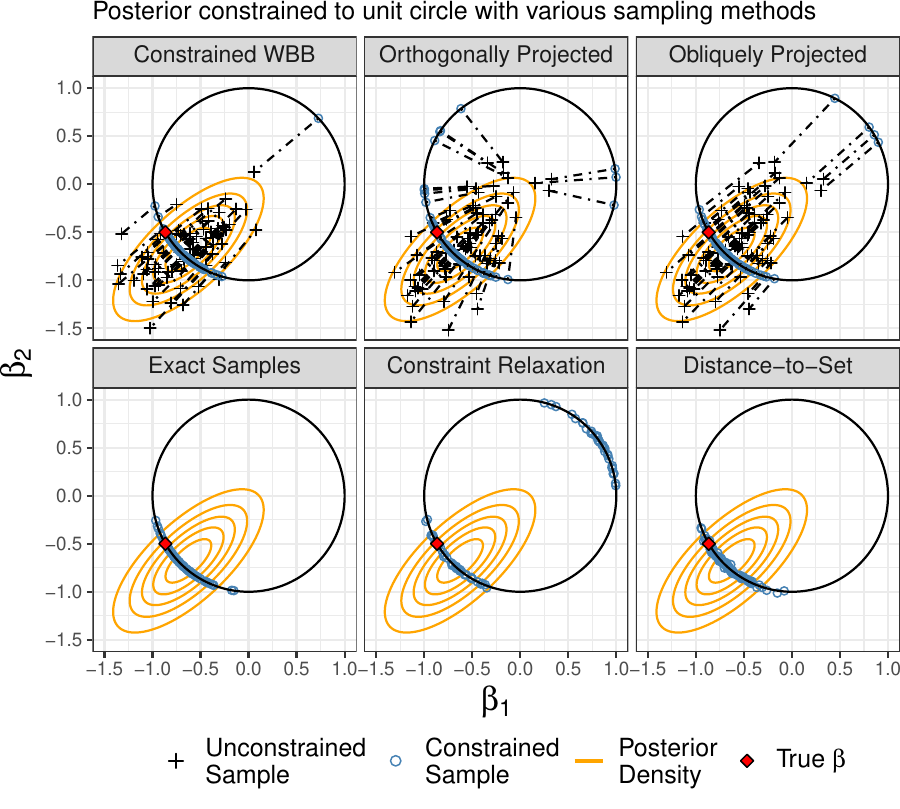}
\caption{Graphs showing 75 samples of $\beta$ restricted to the unit circle under six constrained sampling methods.}
\label{fig:constrained-sample-examples}
\end{figure}

\section{Theory}\label{sec:theory}

We establish several favorable properties of the posterior samples under CWBB. Under standard regularity conditions detailed in the Supplementary Material, the solution to \eqref{eqn:restricted-weighted-MAP} will exist and be conditionally consistent for $\theta_0$, along almost every sample path asymptotically. The regularity conditions are a natural extension to constraints for those used in analysis of the weighted likelihood bootstrap \citep{newtonApproximateBayesianInference1994}. In addition, we establish CWBB has a clear asymptotic distribution that gives valid frequentist confidence intervals. The restricted maximum likelihood estimator from \citet{aitchisonMaximumLikelihoodEstimationParameters1958}, 
\begin{equation}
    \theta_n^* = \argmax_{\theta \in \Theta} \sum_{i=1}^n \log f_{\theta}(x_i), \quad h(\theta) = 0, \label{eqn:restricted-mle}
\end{equation} 
is asymptotically efficient, achieving the Cram\'er-Rao lower bound under parametric constraints \citep{stoicaCramerRaoBoundParametric1998}. We establish that CWBB samples converge in distribution to a multivariate normal centered on this estimator with a matching covariance expression. To this end, we assume that $h$ has continuous bounded second derivatives in a neighborhood of $\theta_0$, and full-rank Jacobian at  $\theta_0$. We note that these conditions do not preclude $h(\theta)$ taking the form of an inequality constraint, but we focus our treatment on equality constraints without loss of generality. Inactive inequality constraints have no effect on the asymptotic distribution while active inequality constraints can be subsumed into the equality constraints.

\begin{theorem} \label{thm:one}
    Let $X_{1:n}$ be independent and identically distributed observations with underlying density function $f_{\theta_0}$. Let the Lagrangian of \eqref{eqn:restricted-weighted-MAP} be 
    \begin{equation*}
        \begin{split}
            \mathcal L(\theta, \lambda) = \sum_{i=1}^n w_{n,i} \ell(x_i \mid \theta) + \log \pi(\theta) + \lambda^\top h(\theta).
        \end{split}
    \end{equation*}
    Under regularity conditions, for all $\delta > 0$, there exists a sequence $\{\check \theta_n, \check \lambda_n\}$ such that
    \begin{align*}
        {\mathrm{pr}}\brk[c]{\nabla \mathcal L(\check \theta_n, \check \lambda_n) = 0 \mid X_{1:n}} & \to 1 \quad a.s.[X_{1:\infty}], \\
        {\mathrm{pr}}\brk[p]{\|\check \theta_n - \theta_0\|_2 < \delta \mid X_{1:n}} & \to 1 \quad a.s.[X_{1:\infty}].
     \end{align*}
\end{theorem}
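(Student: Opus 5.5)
We plan to adapt the classical Cramér-type consistency argument for the unconstrained weighted likelihood bootstrap \citep{newtonApproximateBayesianInference1994} to the constrained setting by working locally on the constraint manifold. Because $h$ is $C^2$ near $\theta_0$ with full-rank Jacobian $H_0 = \nabla h(\theta_0) \in \mathbb R^{r \times p}$, the implicit function theorem gives a neighborhood $U$ of $\theta_0$ and a $C^2$ chart $\phi\colon V \subset \mathbb R^{p-r} \to U$ with $\phi(0) = \theta_0$, $\phi(V) = U \cap \tilde\Theta$, and $\nabla \phi(0)$ of full column rank. We define the reparametrized weighted objective
\begin{equation*}
    M_n(u) = n^{-1}\Bigl\{\sum_{i=1}^n w_{n,i}\, \ell(x_i \mid \phi(u)) + \log \pi(\phi(u))\Bigr\},
\end{equation*}
which is an unconstrained problem in $u$. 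Its interior local maximizers correspond exactly to constrained local maximizers in $U$. At any such point, the first-order condition together with the full-rank Jacobian (linear independence constraint qualification) yields a unique multiplier $\check\lambda_n$ with $\nabla\mathcal L(\check\theta_n,\check\lambda_n)=0$. Here we take $\nabla$ with respect to $(\theta,\lambda)$, where the $\lambda$-component is $h(\check\theta_n)=0$ automatically.

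The core step is showing that, for small $\delta$ (shrinking $\delta$ only strengthens the claim), $M_n$ has a local maximizer in $\{\|u\|<\delta'\}$ with $\phi$ mapping this ball into $\{\|\theta-\theta_0\|<\delta\}$. Following Cramér, it suffices to show that with conditional probability tending to one, $M_n(u) < M_n(0)$ on the sphere $\|u\| = \delta'$. To do this we expand to second order around $u=0$: $M_n(u)-M_n(0) = u^\top S_n + \tfrac12 u^\top J_n u + R_n(u)$. We then need three conditional laws of large numbers along almost every data path.

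First, $S_n = n^{-1}\sum_i w_{n,i}\nabla\phi(0)^\top \nabla\ell(x_i\mid\theta_0) + O(n^{-1}) \to 0$. Second, $J_n \to -\nabla\phi(0)^\top I(\theta_0)\nabla\phi(0)$, which is negative definite by positive definiteness of the Fisher information and the full rank of $\nabla\phi(0)$. The curvature of $\phi$ contributes terms multiplying the score mean, which vanish. Third, $|R_n(u)| \le C\|u\|^3 n^{-1}\sum_i w_{n,i} G(x_i)$ for a dominating function $G$, using the regularity conditions on third derivatives or a local Lipschitz bound. For these conditional laws we use $w_{n,i} = n E_i/\sum_j E_j$ with $E_i$ i.i.d. Exp(1). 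Hence $n^{-1}\sum_i w_{n,i} g(x_i) = \bigl(n^{-1}\sum_i E_i g(x_i)\bigr)/\bigl(n^{-1}\sum_j E_j\bigr)$. Conditional on the data, the numerator has mean $n^{-1}\sum g(x_i)\to E g$ a.s. and variance $n^{-2}\sum g(x_i)^2 \to 0$ a.s. whenever $E g^2<\infty$, or $Eg$ finite after truncation. Chebyshev then gives conditional convergence in probability, a.s. in $X_{1:\infty}$. The prior term is $O(1/n)$ since $\log\pi\circ\phi$ is bounded and continuous near $0$.

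Combining these pieces, on $\|u\|=\delta'$ we get $M_n(u)-M_n(0) \le \delta'\|S_n\| - \tfrac12 c\,\delta'^2 + C'\delta'^3 < 0$ for $\delta'$ small, with conditional probability tending to one. Continuity then gives an interior maximizer $\check u_n$, and we set $\check\theta_n=\phi(\check u_n)$ together with its multiplier; outside this event we define the pair arbitrarily. We expect the main obstacle to be the step of conditional (given data) convergence that holds almost surely over data paths. In particular, this requires controlling the weighted remainder and Hessian uniformly over a neighborhood under only moment-type regularity conditions. To handle it, we would first try the exponential-representation Chebyshev argument above, combined with the strong law applied to $G$ and $G^2$, and if needed truncate $E_i$ to avoid second-moment assumptions.
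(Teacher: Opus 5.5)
Your argument is correct, but it takes a genuinely different route from the paper's.

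\textbf{The paper's route.} The paper follows Aitchison and Silvey. Lemma~\ref{lemma:remove-multipliers} Taylor-expands the weighted score and $h$ about $\theta_0$ and solves explicitly for the multiplier. It then shows that on $U_\delta$ the Lagrange system is equivalent to $-J(\theta_0)(\theta-\theta_0)+\tilde v(\theta)=0$, with $\sup_{U_\delta}\|\tilde v(\theta)\|_2\le c_v\delta^2$ with conditional probability tending to one. A Brouwer-type result (Lemma~\ref{lemma:fixed-point-optima}) then gives a zero in the open ball, because on the sphere $(\theta-\theta_0)^\top\{-J(\theta_0)(\theta-\theta_0)+\tilde v(\theta)\}\le-\delta^2\lambda_{\min}\{J(\theta_0)\}+c_v\delta^3<0$. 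This works purely with estimating equations and yields an explicit multiplier representation in the style of restricted maximum likelihood. However, it certifies only a root of $\nabla\mathcal L$, and it needs a fixed-point theorem plus the bookkeeping on $\tilde v$.

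\textbf{Your route.} Your chart-plus-Cram\'er argument needs only the extreme value theorem and the Lagrange multiplier theorem under the full-rank Jacobian. It also delivers a local constrained maximizer, which is closer to the $\argmax$ in Algorithm~\ref{algorithm:cwbb}.

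\textbf{Details to tidy.} First, since $h$ is only $C^2$, so is $\phi$. The second-order remainder in $u$ therefore contains $\tfrac12 u^\top\sum_j[\tilde S_n(\theta_0)]_j\{\nabla^2\phi_j(\bar u)-\nabla^2\phi_j(0)\}u$ for an intermediate point $\bar u$. This term is $o_{c.p.}(1)\|u\|_2^2$ rather than $O(\|u\|_2^3)$, which still suffices. Second, since $\log\pi$ is only $C^1$, keep the prior outside the expansion and bound it by $2\sup_{U_\alpha}|\log\pi|/n$, as you suggest.

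\textbf{The chart is unnecessary.} The unconstrained Cram\'er bound in $\theta$ has a genuinely cubic remainder by \eqref{eqn:assumption-g4}. It already gives $\Phi_n(\theta)<\Phi_n(\theta_0)$ on the whole sphere $\|\theta-\theta_0\|_2=\delta$, where $\Phi_n(\theta)=n^{-1}\{\sum_i w_{n,i}\ell(x_i\mid\theta)+\log\pi(\theta)\}$. Hence the maximizer of $\Phi_n$ over the compact set $\tilde\Theta\cap U_\delta$, which contains $\theta_0$, lies in the open ball. The multiplier theorem then finishes once $\delta$ is small enough that $D_h$ has full row rank on $U_\delta$.

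\textbf{Your flagged obstacle.} The conditional law of large numbers you flag as the main obstacle is exactly Lemma~\ref{lemma:weighted-avg-convergence}. Your Chebyshev argument needs no truncation: $\mathbb E_{\theta_0}|g(X)|<\infty$ already gives $n^{-1}\max_{i\le n}|g(X_i)|\to 0$ a.s.\ by Borel--Cantelli, hence $n^{-2}\sum_{i\le n}g(X_i)^2\to 0$ a.s.
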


\begin{theorem} \label{thm:two}
    Let $X_{1:n}$ be independent and identically distributed observations with underlying density function $f_{\theta_0}$. Let $\hat \theta_n$ be a strongly consistent estimator for $\theta_0$, satisfying $h(\hat \theta_n) = 0$ and
    \begin{equation}
        \bigg\| \frac{{I - U(\hat \theta_n)}}{n^{1/2}}  \sum_{i=1}^n \nabla_\theta \log f_{\hat \theta_n}(X_i) \bigg\|_2 \to 0\ a.s.[X_{1:\infty}] \label{eqn:strongly-consistent-requirement}
    \end{equation}
    where $D_h(\theta)$ is the Jacobian at $\theta$ for $h$, 
    \begin{align*}
        J(\theta) & = \mathbb E_{\theta_0}\brk[s]{\brk[c]{\nabla_{\theta} \log f_{\theta}(X)}\brk[c]{ \nabla_{\theta} \log f_{\theta}(X)}^\top}, \\
        U(\theta) & = D_h^\top(\theta) \brk[c]{D_h(\theta_0) J(\theta_0)^{-1} D_h^\top(\theta)}^{-1} D_h(\theta_0) J(\theta_0)^{-1}.
    \end{align*}
    Let $\check \theta_n$ be a sample from Algorithm \ref{algorithm:cwbb}. Then, under regularity conditions, for any Borel set $A \in \mathbb R^{p}$,
    \begin{equation*}
        {\mathrm{pr}}\brk[c]1{{n^{1/2}}(\check \theta_n - \hat \theta_n) \in A \mid X_{1:n} } \to {\mathrm{pr}}\brk{Z \in A}\ a.s.[X_{1:\infty}]
    \end{equation*}
    where $Z \sim N\brk[s]{0, J(\theta_0)^{-1} \brk[c]{I - U(\theta_0)}}$.
\end{theorem}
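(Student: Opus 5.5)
We will follow the route of \citet{newtonApproximateBayesianInference1994}: linearize the weighted score equations around $\hat\theta_n$, add the constraint, and read off the limit from a conditional central limit theorem for the weighted score. Theorem \ref{thm:one} provides a conditionally consistent stationary pair $(\check\theta_n,\check\lambda_n)$ of the Lagrangian. Write $S_n(\theta)=\sum_i w_{n,i}\nabla_\theta \ell(X_i\mid\theta)$. Stationarity gives
\begin{equation*}
S_n(\check\theta_n)+\nabla\log\pi(\check\theta_n)+D_h^\top(\check\theta_n)\check\lambda_n=0, \qquad h(\check\theta_n)=0 .
\end{equation*}
We Taylor expand both equations around $\hat\theta_n$. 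Using $h(\hat\theta_n)=0$, the second equation becomes $D_h(\bar\theta_n)(\check\theta_n-\hat\theta_n)=0$ for an intermediate point $\bar\theta_n$ (row-wise mean value). The first becomes $S_n(\hat\theta_n)-n\{J(\theta_0)+o(1)\}(\check\theta_n-\hat\theta_n)+D_h^\top(\check\theta_n)\check\lambda_n+O(1)=0$. Here we use that the weighted Hessian satisfies $-n^{-1}\sum_i w_{n,i}\nabla^2\ell \to J(\theta_0)$ uniformly near $\theta_0$, conditionally a.s. This follows from a weighted strong law (the weights are i.i.d.\ Exp(1) divided by their mean) together with the information identity. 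The prior gradient is bounded near $\theta_0$, so it contributes $O(1)$, which is negligible after division by $n^{1/2}$.

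Next we solve the linearized KKT system. Set $u_n=n^{1/2}(\check\theta_n-\hat\theta_n)$ and $v_n=n^{-1/2}\check\lambda_n$. The system reads
\begin{equation*}
J u_n - D_h^\top v_n = n^{-1/2}S_n(\hat\theta_n)+o_p(1), \qquad D_h u_n = o_p(1),
\end{equation*}
with $J=J(\theta_0)$ and $D_h=D_h(\theta_0)$ up to $o(1)$ perturbations. Because $D_h$ has full rank, the block matrix is invertible. Eliminating $v_n$ gives $u_n = J^{-1}\{I-U\}\, n^{-1/2}S_n(\hat\theta_n)+o_p(1)$, where $U$ is exactly the projection-type matrix in the statement, evaluated at $\theta_0$.

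Finally we split the score:
\begin{equation*}
n^{-1/2}S_n(\hat\theta_n)=n^{-1/2}\sum_i (w_{n,i}-1)\nabla\ell(X_i\mid\hat\theta_n)+n^{-1/2}\sum_i\nabla\ell(X_i\mid\hat\theta_n).
\end{equation*}
After multiplication by $I-U(\hat\theta_n)$, the second term vanishes a.s.\ by assumption \eqref{eqn:strongly-consistent-requirement}. This is precisely why that condition is imposed, and why $U(\hat\theta_n)$ rather than $U(\theta_0)$ appears there. The first term is a weighted sum with mean-zero, unit-variance weights. Conditionally on $X_{1:\infty}$, a Lindeberg CLT for exchangeable Dirichlet weights, as used by Newton and Raftery, shows it converges to $N(0,J(\theta_0))$ a.s. The Lindeberg condition is checked using the strong law for $n^{-1}\sum_i \nabla\ell\nabla\ell^\top$ and a uniform-integrability envelope near $\theta_0$. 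The limit of $u_n$ is therefore normal with covariance $J^{-1}(I-U)J(I-U)^\top J^{-1}$. Using $UJ^{-1}U^\top=UJ^{-1}$, which follows from the definition of $U$, this simplifies to $J^{-1}(I-U)$. Slutsky's lemma, applied conditionally along almost every sample path, then gives convergence for all Borel continuity sets; a normal limit with singular covariance is handled through Portmanteau on its support.

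The main obstacle is making the linearization uniform and almost sure conditionally: the $o_p(1)$ remainders must be controlled in conditional probability for almost every data sequence. This requires (i) $\check\theta_n-\hat\theta_n=O_p(n^{-1/2})$ conditionally, and (ii) uniform conditional convergence of the weighted Hessian on shrinking neighborhoods. We would first obtain a crude rate from Theorem \ref{thm:one} plus invertibility of the KKT matrix, then bootstrap it to $n^{-1/2}$. Uniform Hessian control would come from a dominating function for the third derivatives, which is among the regularity conditions, combined with $\sum_i w_{n,i}/n\to1$.
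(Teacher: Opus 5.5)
Your argument is the paper's in outline. Both use mean-value expansions of the weighted score and of $h$ about $\hat\theta_n$, the bordered KKT system in $(u_n,v_n)$, and the top-left block $J(\theta_0)^{-1}\{I-U(\theta_0)\}$ of the inverse limiting matrix. Both then apply the Lindeberg CLT to $n^{-1/2}\sum_i(w_{n,i}-1)\nabla\ell(X_i\mid\hat\theta_n)$, use hypothesis \eqref{eqn:strongly-consistent-requirement} on the unweighted score, and finish with conditional Slutsky.

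The one real difference is that you turn the mean-value perturbations into additive $o_p(1)$ errors, which is why you want an a priori $O_p(n^{-1/2})$ rate. The paper keeps them inside $M_n(\hat\theta_n,\check\theta_n)$, with blocks $\tilde J_n(\hat\theta_n)-R_n(\check\theta_n)$, $D_h(\check\theta_n)$ and $D_h(\hat\theta_n)+Q_h(\check\theta_n)$. Convergence of $M_n$ to the nonsingular limit needs only consistency. The paper then writes, exactly, $u_n=\tilde M_n\,n^{-1/2}\{S_n(\hat\theta_n)+\nabla\log\pi(\check\theta_n)\}$, with $\tilde M_n$ the top-left block of $M_n^{-1}$ (your $S_n$, the unnormalized weighted score). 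So your obstacle (i) never arises: tightness of $u_n$ is an output, not an input.

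This exactness is what handles the one delicate step. Your sketch, like the paper's Lemma~\ref{lemma:score-cancellation}, treats it as a plain Slutsky step. Put $x_n=n^{-1/2}\sum_i\nabla\ell(X_i\mid\hat\theta_n)$.

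\begin{itemize}
\item The hypothesis controls only $\{I-U(\hat\theta_n)\}x_n$. The component $U(\hat\theta_n)x_n$ lies in the column space of $D_h^\top(\hat\theta_n)$ and need not stay bounded along a sample path. For the restricted MLE it equals $-n^{-1/2}D_h^\top(\hat\theta_n)\hat\lambda_n$, which is only tight over the data.
\item Hence neither $n^{-1/2}S_n(\hat\theta_n)$ nor $v_n$ is conditionally $O_p(1)$. Replacing $D_h(\check\theta_n)$ by $D_h(\theta_0)$ in front of $v_n$, or $\tilde M_n$ by its limit in front of $x_n$, each requires $\|\hat\theta_n-\theta_0\|\,\|x_n\|\to0$. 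So does then trading $U(\theta_0)$ for $U(\hat\theta_n)$. None of this follows from strong consistency without a separate rate argument.
\item The exact inverse avoids this. The top-right block of $M_n^{-1}M_n=I$ gives $\tilde M_nD_h^\top(\check\theta_n)=0$, while $\{I-U(\hat\theta_n)\}D_h^\top(\hat\theta_n)=0$. Write $x_n=\{I-U(\hat\theta_n)\}x_n+D_h^\top(\hat\theta_n)c_n$ with $\|c_n\|\le C\|x_n\|$. Then
\[
\|\tilde M_nx_n\|\le o_p(1)+\|\tilde M_n\|\,\|D_h(\hat\theta_n)-D_h(\check\theta_n)\|\,C\|x_n\|\le o_p(1)+O_p(1)\,\|u_n\|\,n^{-1/2}\|x_n\| .
\]
\item We have $n^{-1/2}\|x_n\|=\|n^{-1}\sum_i\nabla\ell(X_i\mid\hat\theta_n)\|\to0$ a.s., by strong consistency and the envelope in Condition~\ref{condition:density-boundedness}. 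So $u_n$ is tight and $u_n=\tilde M_n\,n^{-1/2}\sum_i(w_{n,i}-1)\nabla\ell(X_i\mid\hat\theta_n)+o_p(1)$, which gives the result.
\end{itemize}

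Two smaller points. First, the identity behind your covariance simplification is $UJU^\top=UJ$, i.e.\ $(I-U)JU^\top=0$. The identity $UJ^{-1}U^\top=UJ^{-1}$ is false in general, since its right side is not even symmetric; your final covariance $J(\theta_0)^{-1}\{I-U(\theta_0)\}$ is nonetheless correct. Second, restricting the conclusion to continuity sets of $Z$ is right. $Z$ is supported on $\ker D_h(\theta_0)$, and weak convergence is all that either argument delivers.
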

Note the restricted maximum likelihood estimator trivially satisfies \eqref{eqn:strongly-consistent-requirement} as demonstrated by the proof of Lemma 1 in \cite{aitchisonMaximumLikelihoodEstimationParameters1958}; see the proof of Lemma \ref{lemma:remove-multipliers} in the Supplementary Material for a similar proof.

Theorem \ref{thm:one} shows consistency of CWBB samples while Theorem \ref{thm:two} gives an asymptotic distribution for these samples. Theorem \ref{thm:two} additionally implies that the asymptotic covariance matrix is obtained by obliquely projecting the rows of the inverse Fisher information onto the kernel of $D_h(\theta_0)$, the Jacobian of the constraints at the true underlying parameter. This result is consistent with the asymptotic covariance of \eqref{eqn:restricted-mle}, giving the constrained form an explicit characterization for the uncertainty in approximate posterior samples. In addition, under these regularity conditions, the asymptotic distribution from Theorem \ref{thm:two} is invariant to the choice of $h$ used to describe $\tilde \Theta$ (see Lemma \ref{lemma:invariant-h} in the Supplementary Materials). Furthermore, we allow for  $\tilde \Theta$ to be a lower dimensional subspace of $\Theta$; this is not handled precisely in the analysis of constraint relaxation and is also lacking theoretical support in projecting the unconstrained posterior after sampling \citep{astfalckPosteriorProjectionInference2024}.

In particular, CWBB converges to a distribution with support only on the constraint set as the sample size grows. This is not the case for some other sampling methods proposed for \eqref{eqn:constrained-posterior}, such as constraint-relaxed posteriors of the form 
\begin{equation}
    \pi_{\rho}(\theta \mid x_{1:n}) \propto \exp\brk[c]3{\sum_{i=1}^n \log f_{\theta}(x_i) -\rho q(\theta)} \label{eqn:relaxed-posterior},
\end{equation}
where $q(\theta) = \operatorname{dist}^2(\theta, \tilde \Theta) / 2$ \citep{presmanDistancetoSetPriorsConstrained2023, zhouProximalMCMCBayesian2024} or more generally $q \geq 0$ such that $\theta \in \tilde \Theta \iff q(\theta) = 0$ \citep{duanBayesianConstraintRelaxation2020}. While these approaches to posterior inference coincide with the constrained problem as the penalty parameter $\rho \to \infty$, this relationship with the sample size is not uniform, so it is difficult to establish the analogous posterior consistency theory. Asymptotically, under a fixed penalty $\rho>0$, samples fall back to the standard Bernstein--von Mises theorem (see Chapter 10 of \citet{vaartAsymptoticStatistics1998}): if $q$ is finite in a neighborhood of $\theta_0$, then $n^{1/2}\theta \mid x_{1:n}$ converges to a distribution with variance $J(\theta_0)^{-1}$. This implies the constraints become less important to the posterior covariance and subsequent uncertainty estimation as more data is ingested, unless the relaxation adjusts to the sample size. In fact, using results on generalized posteriors from \citet{miller_asymptotic_2021}, Proposition \ref{proposition:penalized-posterior-distribution} shows that if the relaxation strength changes linearly with $n$, then the asymptotic covariance is full-rank, regardless of the dimension of $\tilde \Theta$.  

\begin{proposition}\label{proposition:penalized-posterior-distribution}
    Suppose $\Theta \subset \mathbb R^p$, $\rho > 0$, and $q(\theta)$ is three-times continuously differentiable in a closed ball around $\theta_0$. In addition, $f_\theta(x) = \exp\brk[c]{\theta^\top s(x) - \kappa{(\theta)}}$ is a full, regular, and identifiable member of the exponential family in natural form. Finally, the matrix $\Sigma~=~\brk[s]{-\mathbb E_{\theta_0}\brk[c]{\nabla_\theta^2 \log f_{\theta_0}(X)} + \rho \nabla^2_\theta {q(\theta_0)}}^{-1}$ is positive-definite.
    Then if $\theta \sim \pi_{n\rho}$ from \eqref{eqn:relaxed-posterior} there exists a strongly consistent estimator $\hat \theta_n$ such that 
    \begin{equation*}
        d_{\text{TV}}\brk[c]{n^{1/2}(\theta - \hat \theta_n), N(0, \Sigma) } \to 0,
    \end{equation*}
    almost surely, where $d_{\text{TV}}$ is the total variation distance.
\end{proposition}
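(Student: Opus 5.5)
The plan is to apply the generalized-posterior Bernstein--von Mises theorem of \citet{miller_asymptotic_2021}. Write $\pi_{n\rho}(\theta \mid x_{1:n}) \propto \exp\{-n f_n(\theta)\}\pi(\theta)$ with
\begin{equation*}
f_n(\theta) = -\frac{1}{n}\sum_{i=1}^n \log f_\theta(x_i) + \rho q(\theta) = \kappa(\theta) - \theta^\top \bar s_n + \rho q(\theta),
\end{equation*}
where $\bar s_n = n^{-1}\sum_i s(x_i)$. Scaling the penalty by $n$ is exactly what lets it enter at the same order as the log-likelihood, so no separate perturbation argument is needed. If a nontrivial prior $\pi$ is present, it enters as a continuous density positive at $\theta_0$, matching the prior condition in Miller's theorem. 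That theorem says the following. Suppose $f_n \to f$ pointwise (or uniformly) on a ball around $\theta_0$, $f$ has a unique strict minimizer there, the third derivatives of $f_n$ are uniformly bounded on the ball, and $\nabla^2 f(\theta_0)$ is positive definite. Then there is a sequence $\hat\theta_n$ of local minimizers of $f_n$ converging to $\theta_0$, and $d_{TV}$ between $n^{1/2}(\theta - \hat\theta_n)$ and $N(0, \nabla^2 f(\theta_0)^{-1})$ tends to zero. Our $\hat\theta_n$ is the minimizer of $f_n$, which is strongly consistent by construction.

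The first step is the almost sure limit. By the strong law of large numbers, $\bar s_n \to \mathbb E_{\theta_0} s(X) = \nabla\kappa(\theta_0)$ almost surely, using that moments of $s$ exist for a full regular exponential family. Hence $f_n \to f(\theta) = \kappa(\theta) - \theta^\top \nabla\kappa(\theta_0) + \rho q(\theta)$, and since $f_n - f$ is linear in $\theta$ the convergence is uniform on compact sets. Derivatives converge as well: the Hessians $\nabla^2 f_n = \nabla^2\kappa + \rho\nabla^2 q$ do not depend on $n$ at all, and the third derivatives are those of $\kappa$ (smooth on the interior) plus those of $q$ (continuous on the closed ball), so they are uniformly bounded. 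Also $\nabla^2 f(\theta_0) = \nabla^2\kappa(\theta_0) + \rho \nabla^2 q(\theta_0) = \Sigma^{-1}$, because $-\mathbb E\nabla^2\log f_{\theta_0} = \nabla^2 \kappa(\theta_0)$.

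The step I expect to be the main obstacle is the minimizer condition. Miller's theorem needs $f$ to have a local minimizer near $\theta_0$. Here $\nabla f(\theta_0) = \rho\nabla q(\theta_0)$, which is nonzero unless $\theta_0$ is a critical point of $q$; this holds, for instance, when $\theta_0 \in \tilde\Theta$ and $q$ is a smooth squared distance, where $q$ attains its minimum value zero. I would therefore center at $\theta^\dagger$, the minimizer of $f$, rather than at $\theta_0$. If $\theta_0 \in \tilde\Theta$ and $q$ is minimized there, then $\theta^\dagger = \theta_0$. In general, positive definiteness of $\nabla^2 f$ near $\theta_0$ together with continuity gives a unique local minimizer in the ball, and I would read the Hessian in $\Sigma$ at that point, which agrees with the statement in the intended case. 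Strict convexity of $\kappa$, from identifiability and regularity, makes $f$ strictly convex wherever $\nabla^2 q \succeq 0$, which handles uniqueness.

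Finally, I would verify Miller's remaining technical conditions: $\Theta$ is open (natural parameter space of a full regular family), and $\pi$ is continuous and positive at the center. The theorem then delivers the total variation convergence, almost surely over $X_{1:\infty}$, with covariance $\Sigma$.
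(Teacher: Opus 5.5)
Your overall strategy matches the paper's: treat $\pi_{n\rho}$ as a generalized posterior with loss $f_n = h_n + \rho q$, where $h_n(\theta)=\kappa(\theta)-\theta^\top\bar s_n$, and check the hypotheses of Miller's Bernstein--von Mises theorem. Your smoothness, pointwise-convergence and Hessian computations ($\nabla^2 f(\theta_0)=\Sigma^{-1}$) are fine.

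\textbf{The gap.} The version of Miller's theorem you quote has only local hypotheses, and local hypotheses cannot give total variation convergence. The theorem (Condition (1) of his Theorem 5, which the paper verifies explicitly) also needs a global separation condition. There must be a compact $K$ with $\theta_0$ in its interior such that $f(\theta)>f(\theta_0)$ on $K\setminus\{\theta_0\}$ and $\liminf_n \inf_{\theta\in\Theta\setminus K} f_n(\theta) > f(\theta_0)$. Without it, $f_n$ can have a deeper well away from $\theta_0$, for instance if $q$ is very negative somewhere far away. The posterior mass then escapes there and the TV distance tends to one, even though everything near $\theta_0$ is perfect. You only assume $q$ is smooth on a ball, so nothing in your argument rules this out. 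Your strict-convexity remark, valid only where $\nabla^2 q\succeq 0$, gives local uniqueness, not this tail control. Your list of ``remaining technical conditions'' omits it entirely.

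\textbf{How to close it.} The missing ingredient is the standing assumption on the penalty in \eqref{eqn:relaxed-posterior}: $q\ge 0$, with $q(\theta_0)=0$ since $\theta_0\in\tilde\Theta$. This does three things.
\begin{itemize}
\item $q$ is minimized at $\theta_0$, so $\nabla q(\theta_0)=0$ automatically and no recentering is needed.
\item $h$ is strictly convex with $\nabla h(\theta_0)=0$, so $f(\theta)\ge h(\theta)>h(\theta_0)=f(\theta_0)$ for every $\theta\neq\theta_0$.
\item $\inf_{\Theta\setminus K} f_n\ge \inf_{\Theta\setminus K} h_n$, so the separation for $f_n$ is inherited from the unpenalized exponential-family loss. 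That separation holds almost surely by Miller's Theorem 12, which is exactly how the paper argues.
\end{itemize}

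\textbf{Two smaller points.} Your fallback of centering at a minimizer $\theta^\dagger\neq\theta_0$ would not prove the stated result. In that case $\hat\theta_n$ would converge to $\theta^\dagger$, not $\theta_0$, and the limiting covariance would be the inverse Hessian at $\theta^\dagger$ rather than $\Sigma$. Also, $\hat\theta_n$ is not ``strongly consistent by construction''. For nonconvex $q$, a global minimizer of $f_n$ need not be unique or lie near $\theta_0$. You should take $\hat\theta_n$ to be the sequence of local minimizers converging to $\theta_0$ that Miller's theorem supplies.
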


For concreteness, Proposition \ref{proposition:penalized-posterior-distribution} implies the constraint-relaxed posterior for many exponential family distributions with $\rho' = n\rho$ will have an asymptotic covariance slightly informed by the constraints. As an example, if $q(\theta) = \operatorname{dist}^2(\theta, \tilde \Theta)/2$ where $\tilde \Theta = \brk[c]{\theta \colon P \theta = \theta}$ and $P$ is an orthogonal projection matrix, then the asymptotic covariance is $\brk[c]{J(\theta_0) + \rho (I - P)}^{-1}$.

\section{Experiments} \label{section:experiments}
\subsection{Nonnegative Order-Constrained Parameters in Regression} \label{section:nonnegative-ordered}

\begin{figure*}[t]
\begin{subfigure}{0.48\textwidth}
\includegraphics[width=0.99\columnwidth]{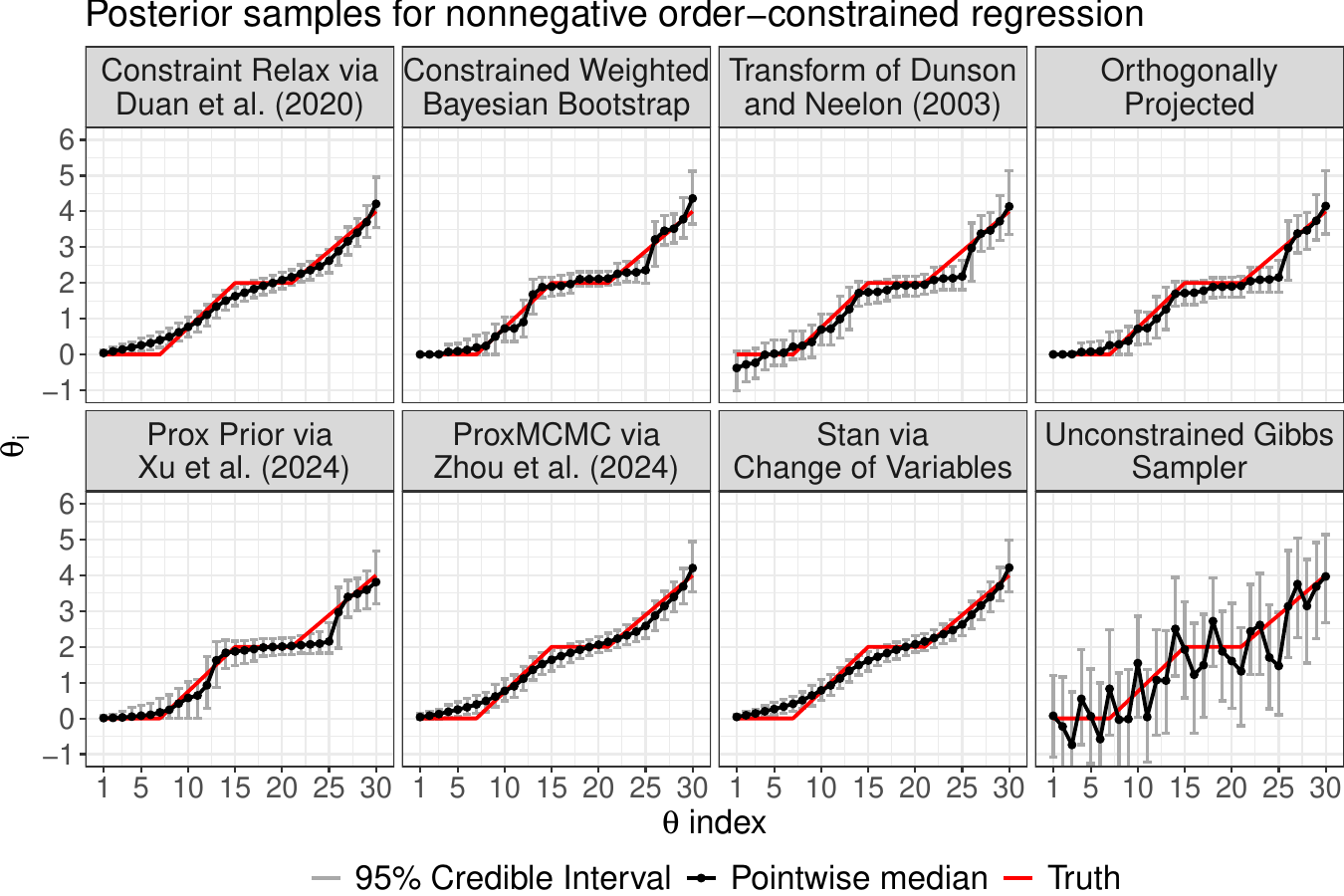}
\caption{A single trial for sampling from \eqref{eqn:posterior-nonnegative-ordered} under several peer methods with $n = 100$.}
\label{fig:ordered-samples}
\end{subfigure} \hspace{12pt}
\begin{subfigure}{0.49\textwidth}
\includegraphics[width=0.99\columnwidth]{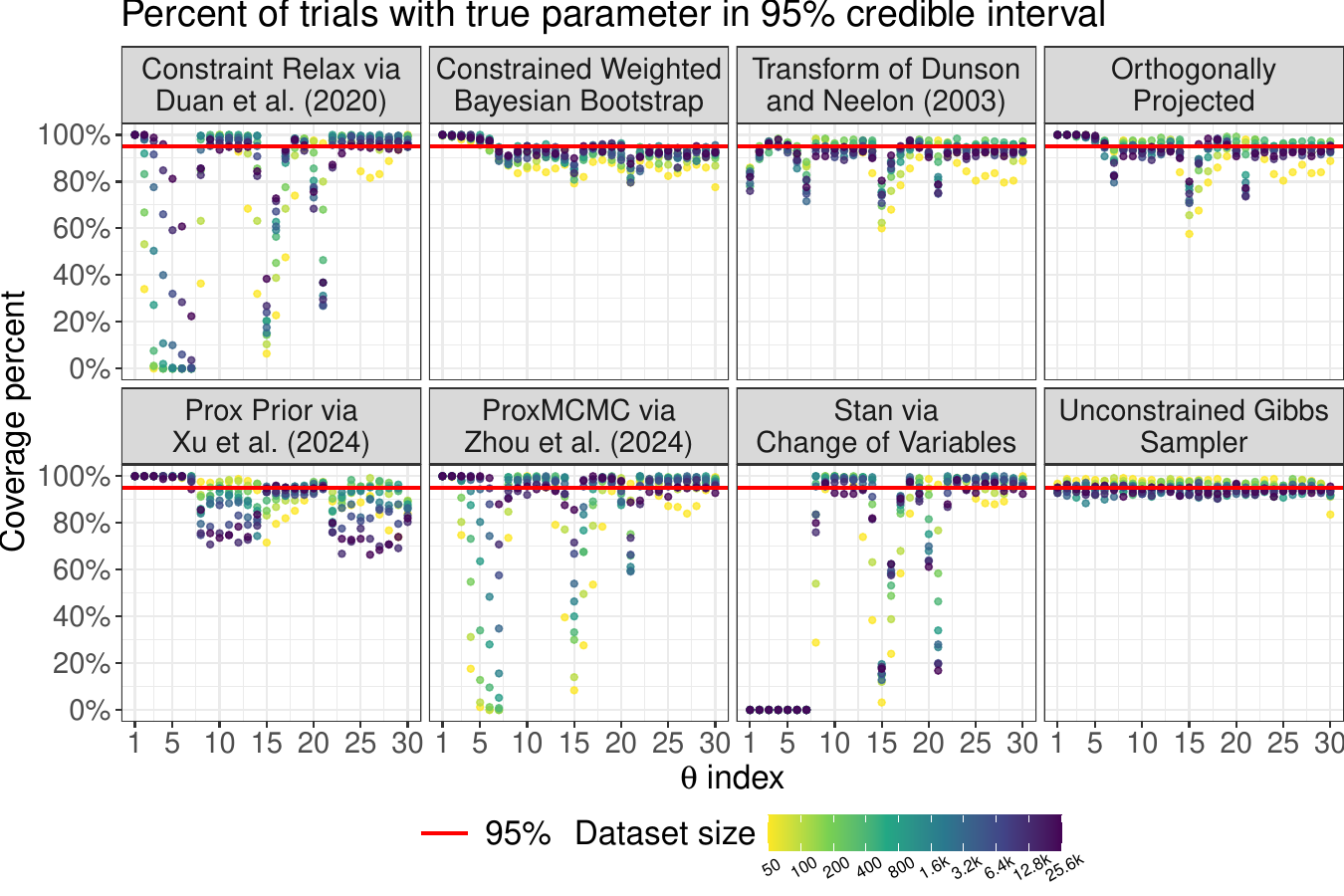} 
\caption{Percent of trials with true $\theta_i$ in the 95\% posterior credible intervals, for each $\theta_i$, dataset size, and sampling method.}
\label{fig:ordered-coverage}
\end{subfigure}
\end{figure*}

In \citet{dunsonBayesianInferenceOrderConstrained2003}, the authors describe a Bayesian linear model under the restriction where the learned coefficients are both nonnegative and nondecreasing according to underlying indices. This model is motivated by a biomedical study with ordered categorical predictors, e.g. increasing intensities of treatment regimes. This setting can be modeled with the Bayesian system
\begin{equation}
\begin{split}
    & y_i \mid \theta, \tau, x_i \sim N(x_i^\top \theta, \tau^{-1}), \\
    & \theta \sim N(\theta_0, \Sigma_0), \quad \tau \sim \text{Gamma}(a_0, b_0), \label{eqn:posterior-nonnegative-ordered}
\end{split}
\end{equation}
where $\theta$ is restricted to the convex space
\begin{equation*}
    \mathcal C = \brk[c]{\theta \in \mathbb R^p \colon 0 \leq \theta_1 \leq \theta_2 \leq \ldots \leq \theta_p},
\end{equation*}
and $x_i$ are covariates. Equivalently, there exists an invertible $D \in \mathbb R^{n\times n}$ such that $\theta \in \mathcal C \iff D\theta \geq 0$. Using arguments from \citet{gelfandBayesianAnalysisConstrained1992} surrounding constrained Gibbs sampling, \citet{dunsonBayesianInferenceOrderConstrained2003} generalize a method from \citet{hwangConfidenceIntervalEstimation1994} to transform unconstrained posterior samples via
\begin{equation}
    \hat \theta_i = \min_{i \leq t} \max_{s \leq i} \frac{\mathbf 1_{t - s + 1}^\top (\Sigma_{[s:t]})^{-1} \theta_{[s:t]}}{\mathbf 1_{t - s + 1}^\top (\Sigma_{[s:t]})^{-1} \mathbf 1_{t-s+1}}, \label{eqn:dunson-neelon}
\end{equation}
where $\Sigma$ is the posterior covariance and $[s:t]$ is indexing for the $s$th to $t$th elements or submatrix.
This does not guarantee transformed parameters are ordered or nonnegative, although the authors note this tends to be a nonissue in practice. The transformation requires solving a quadratic number of linear systems, accruing $O(p^4)$ operations to ensure that the parameter vector obeys the constraint.

CWBB guarantees samples satisfying the constraints much more efficiently. Each iteration samples  $w_n~\sim~n~\times~\text{Dirichlet}(1, \ldots, 1)$ and solves
\begin{equation*}
\begin{split}
    & \argmin_{\theta \in \mathcal C, \tau > 0}\ -(n/2 + a_0 - 1) \log \tau + b_0 \tau \\
    & \ + \frac{\tau}{2} \sum_{i=1}^n w_{n, i} (y_i - x_i^\top \theta)^2 + \frac{1}{2}(\theta - \theta_0)^\top \Sigma_0^{-1} (\theta - \theta_0). \\
\end{split}
\end{equation*}
Solutions to the optimization subproblem are delivered by coordinate descent on the $\theta$ and $\tau$ blocks. For fixed $\tau$, a convex quadratic program with linear constraints emerges, giving a multitude of methods to optimize for $\theta$; for fixed $\theta$, the optimal $\tau$ is available in closed form.

We implement the above to sample from the posterior of \eqref{eqn:posterior-nonnegative-ordered} constrained to $\mathcal C$ using CWBB. Simulation parameters were in a regime with $p = 30$ and true parameters $\tau = 1/25$ and $\theta$, a sequence of nondecreasing values with two constant contiguous subsequences (see Fig.~\ref{fig:ordered-samples} for an illustration). We sample each covariate vector, $x_i$, from a multivariate normal with mean zero and covariance matrix with unit diagonal, and $0.6, 0.3, 0.1$ on the main off-diagonals. For prior hyperparameters, $\theta_0$ is the zero vector, $\Sigma_0 = 2I_p$ and $a_0, b_0 = 1$.

In addition, six other constrained sampling methods are used to sample this posterior, which are listed in Fig.~\ref{fig:ordered-samples}. Note that the Moreau-Yosida envelope of an orthogonal projection is the distance-to-set function, so sampling via ProxMCMC is equivalent to the distance-to-set prior of \citet{presmanDistancetoSetPriorsConstrained2023} in the case of priors with fixed constrained support. We also sample using the \texttt{positive\_ordered} change of variables implemented in Stan \citep{Stan}, which leads to samples strictly in the interior of $\mathcal C$. This leads to difficulties in coverage as the true $\theta_0$ is on the boundary of $\mathcal C$, a setting supported by CWBB according to the theory in Section \ref{sec:theory}.

With these implementations, 2500 total trials across ten different dataset sizes are simulated to determine coverage abilities of the constrained samples from a variety of methods. Figure~\ref{fig:ordered-samples} displays an example set of credible intervals for a single trial with $n = 100$. For each dataset and method, 95\% credible intervals for each $\theta_i$ are constructed. The coverage percent is then the percent of trials which contain the true $\theta_i$ inside the respective credible interval. Figure~\ref{fig:ordered-coverage} displays the coverage percents for all $\theta_i$ over a variety of dataset sizes for several constrained posterior sampling methods and an unconstrained Gibbs sampler as a baseline. Of the constrained sampling methods, CWBB has coverage most resembling that of the unconstrained Gibbs sampler, which does not respect the constraints and has much wider intervals, as seen in Fig.~\ref{fig:ordered-samples}. Many methods struggle with coverage where the true $\theta_i$ is flat or changes slope, with CWBB suffering the least from these difficulties. In the Supplementary Material, Table \ref{tab:nonneg-order-runtimes} has the total runtime to complete all 2500 trials, showing CWBB is orders of magnitude faster than several of the competing methods. In addition, see Fig.~\ref{fig:minimum-coverage} to compare the minimum coverage percent of all $\theta_i$ for the methods, showing CWBB overall has the best worst-case performance as compared to the unconstrained  Gibbs baseline. 

\subsection{Sparse Precision Matrix Estimation}\label{section:bayes-gl}

Assuming data $x_i$ follow a multivariate normal distribution with an unknown sparse precision matrix, the maximum likelihood estimate of the precision matrix is its sample value. Sparsity is a useful characteristic of precision matrices allowing for explicit calculation of conditional dependence among the variables of $x_i$. To induce sparsity in the maximum likelihood estimate, the graphical lasso of \citet{friedmanSparseInverseCovariance2008} optimizes an $\ell_1$ penalty added to the likelihood,
\begin{equation*}
    \log \det \Sigma^{-1} - \operatorname{tr}( S \Sigma^{-1}) - 
    \rho \sum_{i \leq j} |\Sigma_{ij}^{-1}|,
\end{equation*}
where $S$ is the empirical covariance matrix. When a point estimate is desired, the graphical lasso is an effective tool, but uncertainty quantification around classification of zero entries can improve inference. Bayesian methods quantify uncertainty for precision matrix estimation with a variety of sparsity-inducing priors. Given samples from a posterior, a probability of edge inclusion (nonzero entry of precision matrix) is produced from the empirical distribution. The Bayesian graphical lasso of \citet{wangBayesianGraphicalLasso2012} uses a formulation similar to \eqref{eqn:bayesian-gl} where \textit{maximum a posteriori} estimates are sparse and inclusion probabilities are estimated via partial correlations from the posterior samples. The setup of the Bayesian graphical lasso allows a prior over the possible regularization parameters, giving a smaller need to tune. \citet{mohammadiBayesianStructureLearning2015a} uses the G-Wishart prior for more explicit inference of edge inclusion by directly sampling over the many possible graphical representations of variable conditional dependence. Posterior sampling is done via a birth-death process. \citet{wangScalingItStochastic2015} also considers a discrete approach via a spike-and-slab prior for all pairs of variables with sampling done using a block Gibbs sampler. ProxMCMC \citep{zhouProximalMCMCBayesian2024} features sparse precision matrix estimation as a canonical example, where samples are constrained to an $\ell_1$ ball after specifying a prior distribution over potential radii.

To sample from a posterior of sparse precision matrices, we apply CWBB by viewing the graphical lasso as the \textit{maximum a posteriori} estimate of the following Bayesian system:
\begin{equation}
\begin{split}
    x_i \mid \Sigma & \sim N(0, \Sigma), \\
    \pi(\Sigma^{-1}) & \propto \mathbf 1(\Sigma^{-1} \in \mathbb S^{p \times p}_+) {\prod_{i \leq j} \exp(-\rho |\Sigma^{-1}_{ij}|)}, \label{eqn:bayesian-gl}
\end{split}
\end{equation}
also used in \citet{wangBayesianGraphicalLasso2012}. The prior promotes sparsity by positing that entries of the precision matrix follow a Laplace distribution, but together they must also lie in the positive-definite cone, $\Sigma^{-1} \in \mathbb S_+^{p \times p}$, or equivalently $ \lambda_{\min} (\Sigma^{-1}) > 0$.  Hence, CWBB involves simply repeatedly sampling $w_n$ and solving
\begin{equation}
\begin{split}
    & \argmin_{\Sigma^{-1}\in \mathbb S^{p \times p}_+} -\frac{n}{2} \log \det \Sigma^{-1} \\
    & \quad + \frac{n}{2} \operatorname{tr}\brk[c]{\Sigma^{-1} X^\top \operatorname{diag}(w_n) X} + \sum_{i \leq j} q_\rho(\Sigma_{ij}^{-1}), \label{eqn:wbb-gl}
\end{split}
\end{equation}
where $q_\rho(\Sigma_{ij}^{-1}) = \rho |\Sigma_{ij}^{-1}|$. This can be optimized efficiently by the methods in \citet{friedmanSparseInverseCovariance2008}. After many samples from CWBB, we calculate the posterior credible intervals for each entry. 

The generality of CWBB allows uncertainty quantification for other penalties that induce sparsity. For example, the smoothly clipped absolute deviation \citep{fanVariableSelectionNonconcave2001} and minimax concave penalty \citep{zhangNearlyUnbiasedVariable2010} have desirable properties in problems requiring sparsity,
\begin{align}
   q_{\rho,SCAD}(z) & = \begin{cases}
       \rho |z| & |z| \leq \rho, \\
       \frac{2a\rho|z|-z^2-\rho^2}{2(a-1)} & \rho \leq |z| \leq a \rho, \\
       \frac{\rho^2(a+1)}{2} & |z| \geq a\rho,
   \end{cases}\label{eqn:scad} \\
   q_{\rho,MCP}(z) & = \begin{cases}
       \rho |z| - \frac{z^2}{2a} & |z| \leq a \rho, \\
       \frac{1}{2}a \rho^2 & |z| > a\rho.
   \end{cases} \label{eqn:mcp} 
\end{align}
By replacing the penalty $q$ in \eqref{eqn:wbb-gl} with \eqref{eqn:scad}, we can now utilize SCAD toward Bayesian sparse precision estimation efficiently; the point estimate was shown to be effective in \citet{fanNetworkExplorationAdaptive2009}. We also compare with the minimax concave penalty \eqref{eqn:mcp} from \citet{zhangNearlyUnbiasedVariable2010}, which exhibits similar performance and illustrates the flexibility of CWBB. For the analogous problem of sparse covariance estimation, CWBB also fits with principled optimization routines \citep{bien2011sparse, xu2022proximal}.

\begin{figure}
\centering
\includegraphics[width=.46\textwidth]{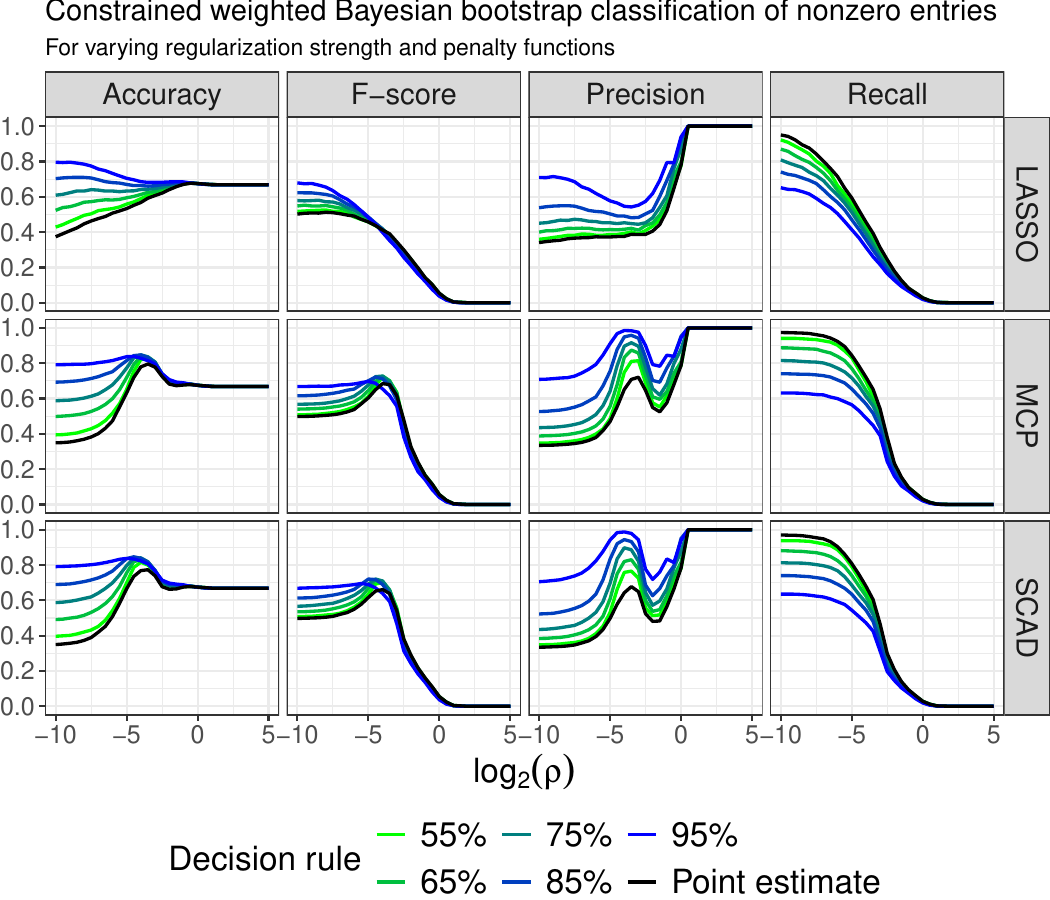}
\caption{Graph showing binary classification performance of nonzero entries of a sparse precision matrix with CWBB. The black line represents classifying each entry according to the point estimate at $\rho$.}
\label{fig:graphical-lasso-classification}
\end{figure}

The posterior credible intervals can guard against the well-documented issue of false selection with $\ell_1$ penalties \citep{su2017false, lafitPartialCorrelationScreening2019}. Rather than use a single graphical lasso estimate to determine non-zero entries, we employ a decision rule that classifies an entry as nonzero if the majority of the credible interval is either strictly positive or negative. To motivate casting the problem as posterior inference, if a credible interval is wide and almost symmetric about zero, it is reasonable to classify the entry as zero compared to one with a similar point estimate but with a credible interval with mostly positive support. 

We perform a simulation study with $p = 100, n = 750$ and $\Sigma^{-1} = L^\top L$, where $L$ is triangular and off-diagonal entries are independent mixtures of a standard normal and a point mass at zero. The sparsity of the true $\Sigma^{-1}$ is about two-thirds. CWBB performs 250 samples from \eqref{eqn:wbb-gl} for each penalty function. We use $a=3.7$ for \eqref{eqn:scad} and $a=3$ for \eqref{eqn:mcp} as both were shown to be reasonable defaults in their respective works. To evaluate performance of each method, entries of the precision matrix are treated as a binary classification problem with nonzero entries considered as positives. The classification metrics considered for evaluation are accuracy, precision, recall and $F_1$-score.

For many values of regularization strength, each row of Fig.~\ref{fig:graphical-lasso-classification} demonstrates an increase in performance when classifying precision matrix entries based on the credible intervals derived from CWBB as opposed to a single point estimate for all penalties considered. This performance is shown for a variety of decision rules dependent on the proportion of posterior samples that are strictly either positive or negative. This decision rule evidently results in an increase in false negatives as seen in the decrease in recall. However, the gain in precision outpaces these faults by both $F_1$-score and classification accuracy. Furthermore, there is a wide range of regularization hyperparameters that have competitive performance with the posterior samples, showing less sensitivity to tuning $\rho$ for all possible penalty functions. Performance of CWBB is competitive with the more sophisticated Bayesian methods mentioned earlier, while involving fewer hyperparameters (see Table~\ref{tab:bayesian-sparse-precision} of the Supplementary Material). The extra machinery of these methods requires some delicate handling of the burn-in period, starting values, and tuning, while CWBB is a simple, easy-to-use approach.

\subsection{Option Pricing Surfaces}\label{section:option-prices}

For a fixed stock, call options may be bought at a variety of strike prices and expiry dates. European-style call options give the buyer the right to purchase 100 shares of a stock at time $t$ for price $s$. Let $C(t, s)$ be the price of this option. The pricing function $C$ follows restrictions which prevent various types of arbitrage. Here, we focus on a set of minimal basic restrictions that are readily justifiable:
\begin{equation}
\begin{split}
    C(t, \cdot) & \text{ is convex and decreasing  for fixed $t$}; \\
    C(\cdot, s) & \text{ is increasing for fixed $s$}. \label{eqn:option-shape-constraints}
\end{split}
\end{equation}
This posits that the underlying price of a call option must decrease as the strike price increases, otherwise a clear arbitrage opportunity is available. Convexity is derived from principles in option pricing theory outside the scope of this work \citep{ait-sahaliaNonparametricOptionPricing2003}. Because call options risk expiry before an advantageous price is reached for the buyer, their value or price increases with lifespan, reflected in the second constraint.

We consider an additional constraint available from market data, as option contracts come with a bid and an ask price. The market price is generally taken to be the midpoint of the bid-ask spread, but this may not be an ideal estimate of the underlying price, and may violate the natural restrictions described above. Assuming the bid-ask is rational and the true option price falls within this spread, we add the constraint
\begin{equation}
    BID(t, s) \leq C(t, s) \leq ASK(t, s). \label{eqn:option-bid-ask-constraints}
\end{equation}

When pricing options that are part of a spread, both covariance and constraints play important roles. As an example, consider buying a vertical call spread, i.e. a trader buys a call expiring at time $t$ for strike $s_1$, and sells a call expiring at the same time for strike $s_2$ with $s_2 > s_1$. The net price is $P = C(t, s_1) - C(t, s_2)$. Therefore, a probabilistic model for $P$ must have support only on nonnegative reals  to satisfy \eqref{eqn:option-shape-constraints}. In addition, uncertainty estimates are essential for portfolio optimization, and the uncertainty for $P$ is changed by the covariance between $C(t, s_1)$ and $C(t, s_2)$.

Similar to prior work \citep{ait-sahaliaNonparametricOptionPricing2003}, we have a simple model specification, complemented by the complexity of the constraints. In particular, we choose a constrained Bayesian normal means formulation with a likelihood and unconstrained prior of
\begin{equation}
    \begin{split}
        y \mid C & \sim N_n(C, \lambda^{-1} I), \qquad  C \sim N_n(0, \delta^{-1} \Sigma), \label{eqn:option-prices-bayesian-system}
    \end{split}
\end{equation}
where $n$ is the number of options available for trade, $y$ is the observed midpoint $(BID + ASK)/{2}$, $\lambda$ and $\delta$ are precision hyperparameters and $\Sigma$ is a positive-definite matrix such that each entry is the prior covariance between a pair $C(t_1, s_1), C(t_2, s_2)$. In the notation of \eqref{eqn:option-prices-bayesian-system}, $C$ is flattened as a vector in $\mathbb R^n$, with indexing by $t$ and $s$ suppressed in the notation. We model the midpoint price as a noisy observation of the underlying price.
The posterior distribution of \eqref{eqn:option-prices-bayesian-system} is multivariate normal with covariance $\Sigma' = \brk[c]{\lambda I + \delta \Sigma^{-1}}^{-1}$ and mean $\lambda \Sigma' y$. As mentioned before, covariance is essential, yet this posterior derives covariance of option prices solely from the choice of hyperparameters $\lambda, \delta$ and $\Sigma$, ignoring the context of the data. By factoring in the constraints \eqref{eqn:option-shape-constraints} and \eqref{eqn:option-bid-ask-constraints} we induce more viable estimates of the covariance of the option prices under pertinent assumptions. 
 
\begin{figure*}[t]
\begin{subfigure}{0.49\textwidth}
\includegraphics[width=0.99\columnwidth]{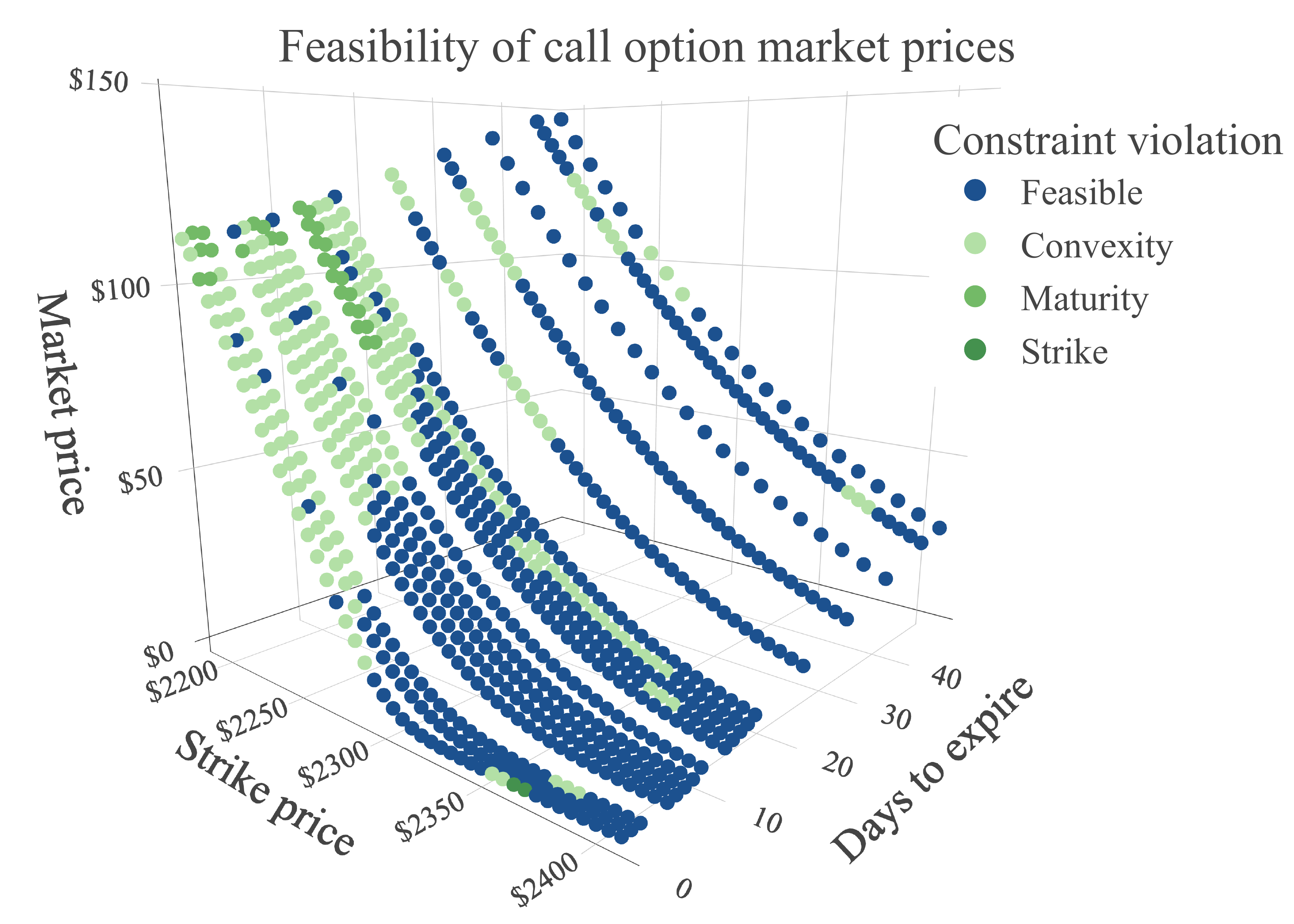}
\caption{For \texttt{RUT} call options, 3D scatter plots showing the midpoint price as it varies in strike price and days to expiration with points colored by constraint violations.}
\label{fig:option-prices-feasibility}
\end{subfigure} \hspace{14pt}
\begin{subfigure}{0.44\textwidth}
\includegraphics[width=0.99\columnwidth]{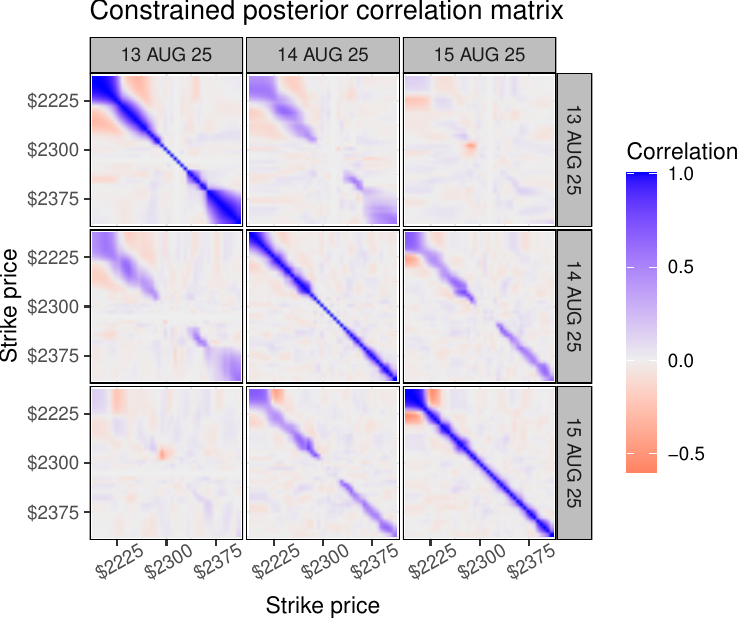} 
\caption{Correlation matrix for $C \mid y$ from CWBB for the first three times to expire in the dataset.}
\label{fig:option-prices-correlation-constrained}
\end{subfigure}
\end{figure*}

Iterations of CWBB solve 
\begin{equation}
    \argmin_{DC \geq b} \frac{\lambda}{2} \sum_{i=1}^n w_{n,i}\brk{C_i - {y_i}}^2 + \frac{\delta}{2} C^\top \Sigma^{-1} C, \label{eqn:options-cwbb}
\end{equation}
where $D$ is a matrix such that $DC \geq b$ implies feasibility. In particular, $D$ contains signed copies of a submatrix $D_1$ where $D_1 x \geq 0$ implies $x$ is nondecreasing and $D_2$ where $D_2 \geq 0$ implies a convex structure (see \citet{kimell_1TrendFiltering2009} for details). In addition, $D$ contains a positive and negative copy of the identity matrix to satisfy the bid-ask box constraints.

We perform experiments using market data from about 10 AM on August 13th, 2025 for the \texttt{RUT} European-style call options. Tradable options that expire within 60 days and have a strike price within 5\% of the underlying price of \$2300 are included in the dataset ($n = 766$). Each data point consists of a bid and an ask, and the midpoint of the bid-ask is used as $y_i$ in \eqref{eqn:options-cwbb}. The strike price and time to expire for each is used to calculate the prior covariance of $C$ while defining the constraints that resulting posterior samples must satisfy. In our exploratory data analysis and resulting inference, there are patterns corresponding to a typical classification in options: (Deep) In-the-money call options have a strike price (well) below the underlying price; near-the-money call options have a strike close to the underlying; (deep) out-of-the-money have a strike (well) above the underlying.

Figure~\ref{fig:option-prices-feasibility} shows which options have an observed midpoint price which violate some type of constraint, i.e. for option $j$ there exists some row $i$ in $D$ where $\operatorname{row}_i(D)^\top (BID + ASK) < 0$, and $D_{ij} \neq 0$. As an example, options expiring in 15 days with strike prices, \$2285, \$2290, and \$2295 have a midpoint price of \$48.00, \$45.00, and \$41.90, respectively. This violates convexity as the first derivative is not monotone increasing with respect to strike price, hence these three points are light green in Fig.~\ref{fig:option-prices-feasibility}. Most of the near-the-money options are not part of a constraint violation. On average, as the time to expire increases, with 15 days to expiry as an exception, the range of feasible options grows. Deep in-the-money call options are likely to violate convexity, the most common violation. Sequences of call options for several expiration dates violate maturity monotonicity.

We choose hyperparameters $\lambda = 1 / 3$ and $\delta = 1 / 35$. The hyperparameter $\Sigma$ is determined by an exponential kernel over the strike price--expiration date space, e.g. if $z_i = (s_i, t_i)$ is the strike and expiration of the $i$th option, then $\Sigma_{ij} = \exp\brk[c]{-\operatorname{dist}(z_i/\sigma_s, z_j/\sigma_t)}$, where $\sigma$ normalizes the coordinates. This makes the unconstrained prior of \eqref{eqn:option-prices-bayesian-system} informative, but the small value of $\delta$ compared to $\lambda$ leads the posterior covariance to have many tridiagonally dominated blocks, with over 93\% of entries having an absolute value less than $10^{-3}$. Using the \texttt{OSQP} solver of \citet{osqp}, 1000 samples of $C \in \mathbb R^{766}$ under the constrained posterior are found via CWBB. With these posterior samples under the constrained model, we can estimate ${\mathrm{cov}}\brk[c]{C(t_1, s_1), C(t_2, s_2) \mid y}$ for all expirations and strikes.

Figure~\ref{fig:option-prices-correlation-constrained} shows the empirical correlation matrix of the posterior parameters (option prices), for the first three times to expire in the snapshot, with blocks highlighting the pairs of expiration dates. Each tile corresponds to the sample correlation between the constrained posterior samples of $C(t_1, s_1)$ and $C(t_2, s_2)$. We see clear structure with blocks of high correlation for deep in-the-money call options as well as increasing amounts of correlation for deep out-of-the-money call options which expire on the same day. These blocks grow smaller as the time to expire increases, seen by the smaller correlation blocks for options expiring on the 15th of August, as opposed to the 13th. Furthermore, there is a small amount of correlation between options on adjacent days to expiry, but this correlation is weaker when examining options that expire two days apart. Near-the-money options also exhibit some weak correlation, although it does not have a block structure.

The constraints lead to uncertainty and point estimates related to the common classification of ``moneyness'' of the option. In the Supplementary Material, we display additional results on the constrained posterior, such as how it compares to the unconstrained problem, and how uncertainty relates to the spread of an option. In addition, we detail the limitations of several other constrained sampling methods for this case study. We find they are less viable, involving intractable runtimes, small effective sample sizes, or excessive constraint violation, highlighting both the difficulty of this problem and the generality of CWBB.

\section{Discussion}

This article demonstrates that extending the weighted Bayesian bootstrap to constrained problems yields a simple yet effective method applicable to a diverse range of statistical tasks. Our approach bridges the advantages of uncertainty quantification via approximate posterior sampling to the computational expediency of optimization routines for constrained problems. We identify standard regularity assumptions that lead to new theoretical guarantees and asymptotic results on the samples from Algorithm \ref{algorithm:cwbb}. Just as the bootstrap enjoys widespread usage due to its simplicity---even beyond settings where all regularity conditions are checked---our method is easy to use with any plug-in optimization solution to constrained problems. 

Compared to other constrained posterior sampling methods, CWBB enjoys competitive coverage, runtimes, and sampling efficiency. Our theory shows that relaxation-based methods may not entirely factor in the structure of constraint sets in their uncertainty estimates, while CWBB is centered around an efficient restricted maximum likelihood estimator. As demonstrated by the runtimes in the Supplementary Material, CWBB maintains a smaller computational footprint than several competing methods which require prohibitively expensive evaluation of proximal operators for the log-density and gradient calculations. Although our method is meant to cover cases where dedicated samplers are not available, we note CWBB performs competitively with the bespoke samplers \texttt{BDGraph} and \texttt{ssgraph} as demonstrated in Section \ref{section:bayes-gl}. Furthermore, the options case study of Section \ref{section:option-prices} is particularly difficult, with CWBB as the only method able to produce viable samples quickly from the constrained posterior, without difficulties from mixing or infeasible samples. 

There are several immediate future directions that these contributions imply. First, it is natural to strengthen the theory to encompass a broader class of settings. Indeed, while Theorem~\ref{thm:two} resembles the standard Bernstein--von Mises result, it requires more stringent conditions than some comparable results in the literature (see Supplementary Material). As many of these technical assumptions are imposed in order to produce valid Taylor expansions around the true parameter in both the constraint and likelihood functions, weaker assumptions may apply to broaden the scope of these results with modern posterior analysis techniques such as those from \citet{miller_asymptotic_2021}. Some specific tasks such as high-dimensional regression and normal means models admit attractive theoretical properties under the weighted Bayesian bootstrap with weaker regularity conditions. These include consistency \citep{ng_random_2022} and minimax posterior contraction rates \citep{nieBayesianBootstrapSpikeandSlab2023} under penalty formulations instead of hard constraints, again suggesting that our theory could be made more applicable to specific cases by considering the geometry of the problem.

Our analysis justifies the constrained weighted Bayesian bootstrap in a first-order sense. The support of the prior is considered in Theorem \ref{thm:two}, but we do not fully make use of its functional form. Techniques such as Edgeworth expansions have been used to produce higher-order approximations of the weighted likelihood/Bayesian bootstrap while accounting for prior structure \citep{newtonApproximateBayesianInference1994, pompeIntroducingPriorInformation2021}. Finally, because of the breadth statistical tasks that can be formulated as constrained optimization problems, we invite readers to explore further applications of CWBB to statistical learning settings where constrained spaces pose a challenge to posterior inference. This includes sampling over valid correlation matrices, intersections of convex sets or split problems \citep{xu2016mm}, and combinatorial problems \citep{xu2023bayesian}. 

\bibliography{ref}

\newpage

\onecolumn

\title{Constrained Weighted Bayesian Bootstrap\\(Supplementary Material)}
\maketitle

\appendix
\section{Additional experiment results}

\subsection{Nonnegative Order-Constrained Parameters in Regression}

This experiment is a comprehensive comparison of the constrained posterior sampling methods mentioned in Section~\ref{sec:intro}. In Table~\ref{tab:nonneg-order-runtimes}, we show that CWBB has runtime orders of magnitude faster than several other constrained sampling methods. Most methods performed 250 posterior samples per trial. However, methods requiring HMC were afforded a warmup and 1000 samples (except Proximal Prior with 250 samples) to account for potential correlation across samples. In addition, Fig.~\ref{fig:minimum-coverage} shows the minimum coverage (the lowest point for each dataset size in Fig.~\ref{fig:ordered-coverage}) over all $p=30$ parameters in $\theta$. CWBB is most similar to the baseline unconstrained Gibbs sampler, as compared to the competing methods. Finally, we use the experiments from Section \ref{section:nonnegative-ordered} to empirically demonstrate Theorems \ref{thm:one} and \ref{thm:two}. Figure \ref{fig:theory-validation} shows convergence of CWBB samples to the true underlying parameter and the covariance matrix of the samples to the projected inverse Fisher information.

\begin{table}[!h]
    \centering
    \caption{Time to complete 2500 trials for coverage simulation study.}\label{tab:nonneg-order-runtimes}
    {
    \begin{tabular}{lrr}
        \toprule
		Method & Samples per Trial & Total Runtime (seconds) \\
		\midrule
        Constraint Relaxation \citep{duanBayesianConstraintRelaxation2020} & 1000 & 37958 \\
		Constrained Weighted Bayesian Bootstrap & 250 & 1242 \\
        Transformation of \citet{dunsonBayesianInferenceOrderConstrained2003} & 250 & 2538 \\
        Orthogonally Projected Gibbs Samples & 250 & 255 \\
        Proximal Prior \citep{xuBayesianInferenceUsing2024} & 250 & 83830 \\
        ProxMCMC/Distance-to-set \citep{presmanDistancetoSetPriorsConstrained2023, zhouProximalMCMCBayesian2024} & 1000 &  3200 \\
        \texttt{positive\_ordered} Change of Variables in Stan & 1000 & 3304 \\
        Unconstrained Gibbs Sampler & 250 & 110 \\
        \bottomrule
	\end{tabular}}
\end{table}

\begin{figure}[htb]
\centering
\includegraphics[width=0.58\textwidth]{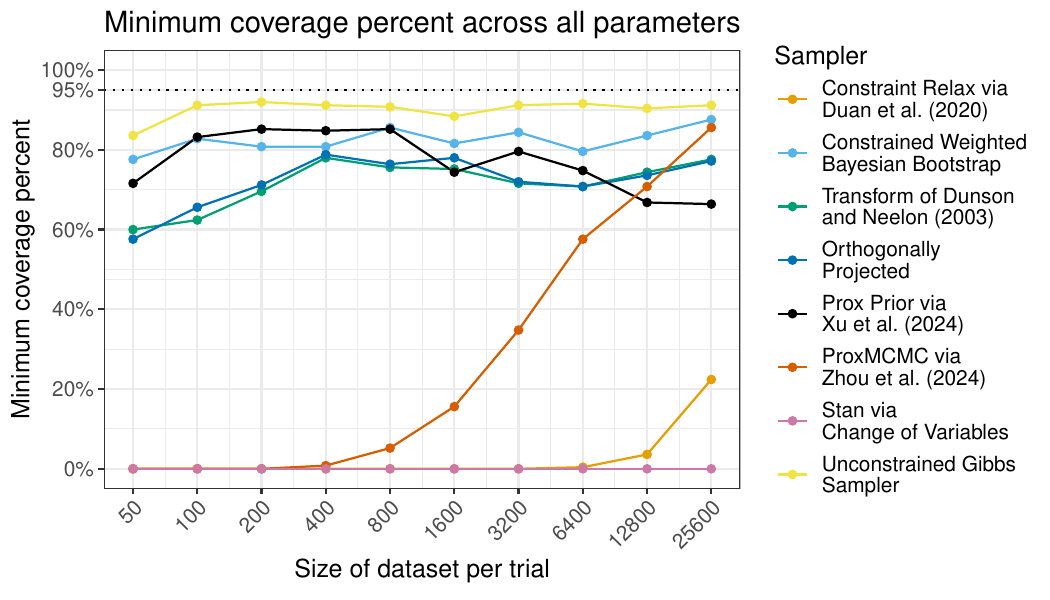}
\caption{For each sampling method, the minimum coverage percent over all $\theta_i$ for each dataset size.}
\label{fig:minimum-coverage}
\end{figure}

\begin{figure}[htb]
\centering
\includegraphics[width=0.50\textwidth]{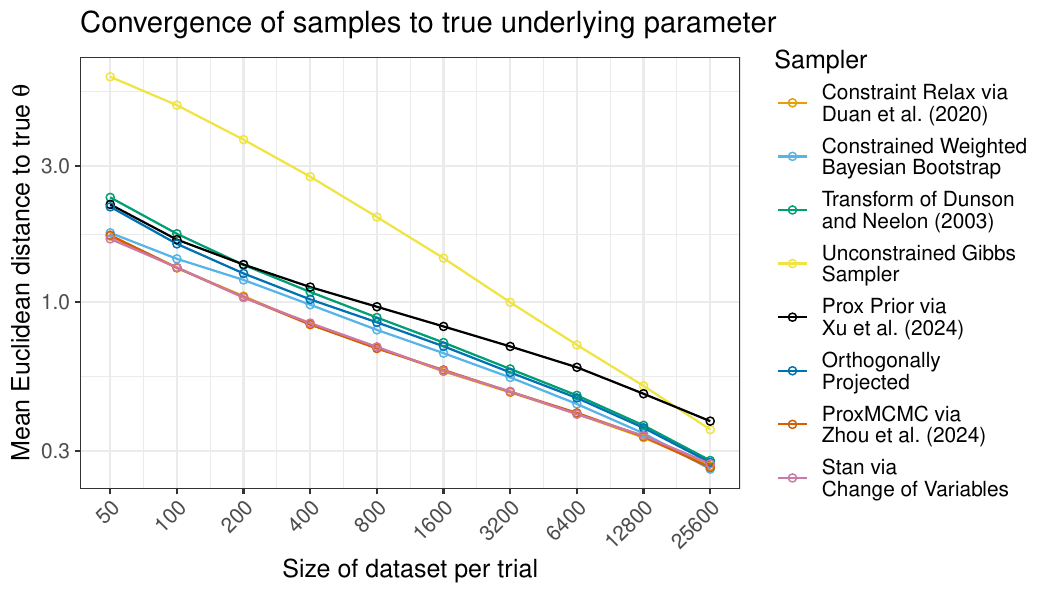}
\includegraphics[width=0.46\textwidth]{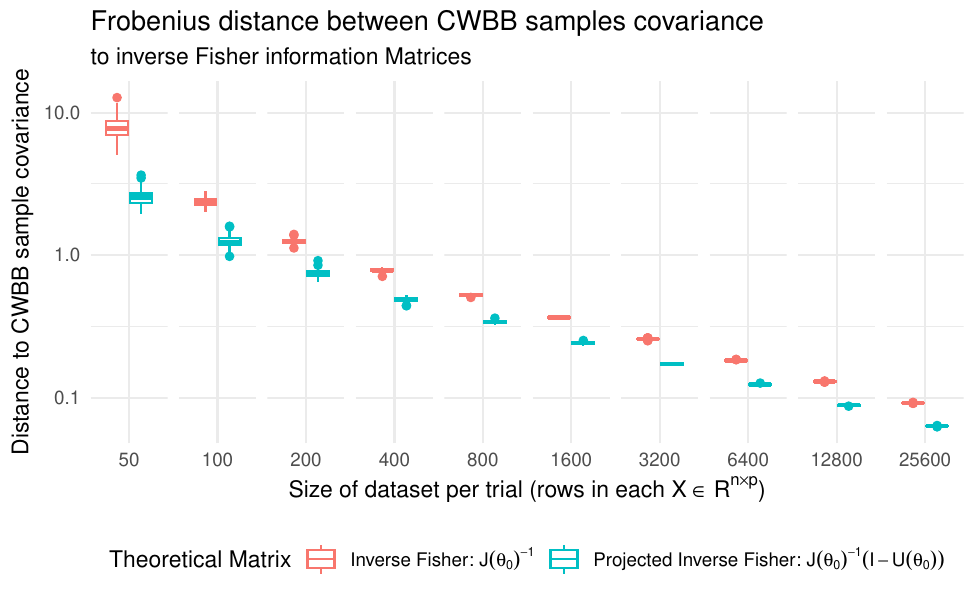}
\caption{(Left) Empirical verification of Theorem \ref{thm:one}, showing CWBB samples converge to the true $\theta$, at a rate similar to that of competing methods. (Right) Empirical verification of Theorem \ref{thm:two}, showing the empirical covariance matrix of CWBB samples is closer in Frobenius norm to the theoretical covariance described in Theorem \ref{thm:two}, as opposed to the theoretical covariance from unconstrained settings.}
\label{fig:theory-validation}
\end{figure}


\newpage
\subsection{Sparse Precision Matrix Estimation}

The first three rows of Table \ref{tab:bayesian-sparse-precision} show the peak performance from Fig.~\ref{fig:graphical-lasso-classification} while the remaining rows show peak performance over a hyperparameter grid for other Bayesian methods. Bayesian methods \texttt{BDgraph} and \texttt{ssgraph} perform 5000 iterations of their respective samplers with the first half discarded as burn-in. ProxMCMC performs 1000 iterations of a No-U-turn sampler \citep{betancourtConceptualIntroductionHamiltonian2018}, along with a dynamic warmup, as implemented in the Julia package \texttt{DynamicHMC.jl}. These methods are afforded more iterations due to potential correlation among successive samples. ProxMCMC is flexible enough to implement the nonconvex penalties SCAD and MCP (see code for implementation details), although performance was not improved with them. The methods of \citet{mohammadiBayesianStructureLearning2015a} and \citet{wangScalingItStochastic2015} have a starting graph where all values of the precision matrix are nonzero; performance was poor otherwise. All $F_1$-scores shown are for the best performing threshold for posterior probability of a precision matrix entry being nonzero. For the first seven rows, this corresponds to the maximum posterior probability of the entry being strictly positive or negative, while the remaining two rows are posterior probabilities of the edge indicator being nonzero. \texttt{R} packages \texttt{BayesianGLasso} \citep{bayesianglasso}, \texttt{BDGraph} \citep{mohammadiBDgraphPackageBayesian2019}, and \texttt{ssgraph} \citep{ssgraph} implement the methods of \citet{wangBayesianGraphicalLasso2012}, \citet{mohammadiBayesianStructureLearning2015a} and \citet{wangScalingItStochastic2015}, respectively. The code to implement ProxMCMC was provided with \citet{zhouProximalMCMCBayesian2024}.

\begin{table}[!h]
    \centering
    \caption{Best performances for Bayesian sparse precision estimates. Runtime is the time to complete samples for the set of given hyperparameters.}\label{tab:bayesian-sparse-precision}
    {
    \begin{tabular}{lrrp{6cm}r}
        \toprule
		Method & $F_1$-score & Decision Rule & Hyperparameters & Runtime (seconds) \\
		\midrule
        CWBB - LASSO & 0.68 & 95\% & $\rho = 2^{-10}$ & 23 \\
		CWBB - SCAD & 0.72 & 75\% & $\rho = 2^{-4.5}, a = 3.7$ & 250  \\
        CWBB - MCP & 0.73 & 65\% & $\rho = 2^{-4}, a = 3$ & 215 \\
        ProxMCMC - LASSO & 0.67 & 95\% & $\lambda = 0.1, \alpha_{\text{scale}} = 400$ & 68 \\
        ProxMCMC - SCAD & 0.66 & 95\% & $\lambda = 10^{-4}, \alpha_{\text{scale}} = 50$ & 294 \\
        ProxMCMC - MCP & 0.67 & 95\% & $\lambda = 10^{-3}, \alpha_{\text{scale}} = 200$ & 145 \\ 
        \texttt{BayesianGLasso} & 0.61 & 85\% & $a = 10^{-4}, b = 2^{14}$ & 343 \\
        \texttt{BDGraph} & 0.76 & 75\% & $\texttt{g.prior} = 0.35$, $\texttt{df.prior} = 3 \times 2^8$, $\texttt{g.start} = ``\text{full}"$ & 78 \\
        \texttt{ssgraph} & 0.78 & 55\% & $\lambda = 2^{-4}$, $\pi = 0.75$, $v_0 = 4\times 10^{-4}$, $v_1 = 1$, $\texttt{g.start} = ``\text{full}"$ & 217 \\       
        \bottomrule
	\end{tabular}}
\end{table}


\subsection{Option Pricing Surfaces}

Examining the left side of Fig.~\ref{fig:option-prices-spread}, the violations explored in Fig.~\ref{fig:option-prices-feasibility} are correlated with a larger spread. Although one might expect options with a wider range of possible prices to have more uncertainty in their underlying price, by factoring in the constraints we are able to incur smaller uncertainty in options with large spread. This is shown in the right side of Fig.~\ref{fig:option-prices-spread} showing many options deep in-the-money (a strike price well below the underlying price of \$2300) have a credible interval width that is a small proportion of the spread, as opposed to those at-the-money which have a width that contains up to half of the spread.

The left side of Fig.~\ref{fig:option-prices-comparison} shows the unconstrained posterior correlation matrix, which contains less rich structure than the constrained one showed in Fig.~\ref{fig:option-prices-correlation-constrained}. There are some similarities, such as weak correlation when examining options that expire two days apart. The right side of Fig.~\ref{fig:option-prices-comparison} shows the difference in point estimates between the constrained and unconstrained posteriors. Deep in-the-money options tend to have a larger difference, although most options have very similar price estimates. However, for a subset of days to expire, there is a noticeable difference in price for some options at-the-money.

\begin{figure}[t]
\centering
\includegraphics[width=0.48\textwidth]{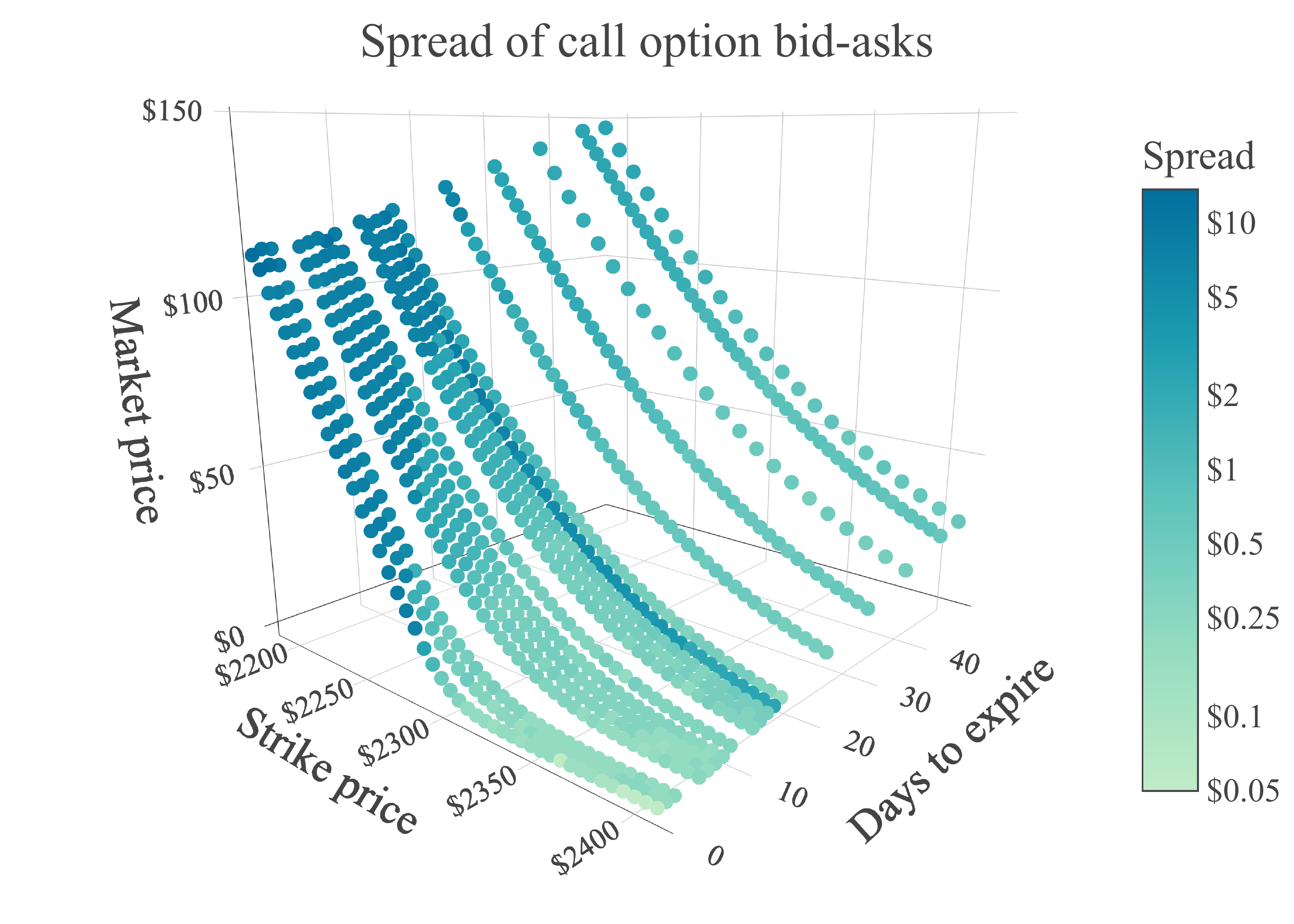}
\includegraphics[width=0.38\textwidth]{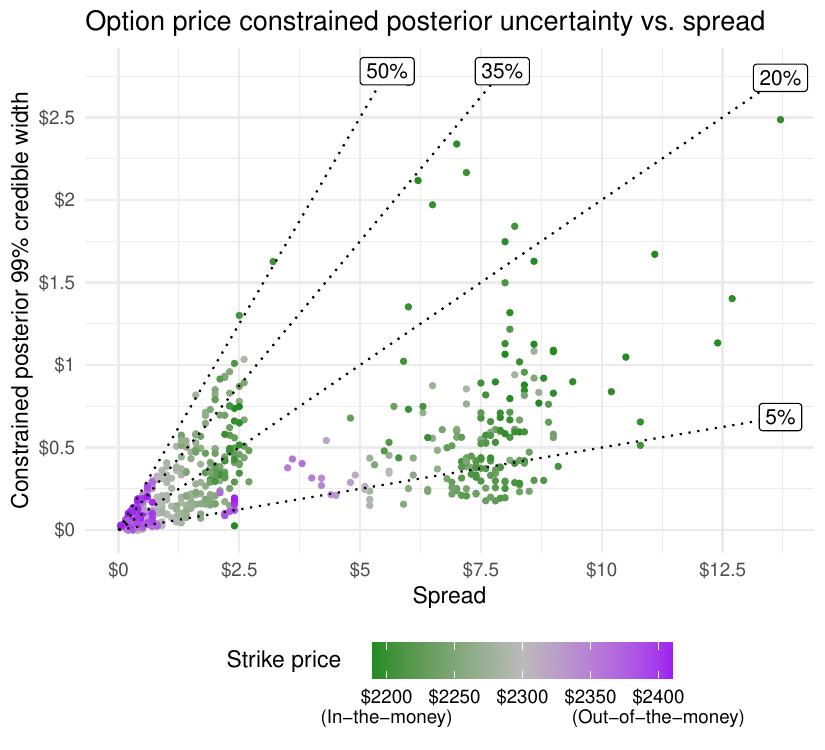}
\caption{(Left) For the dataset of \texttt{RUT} call options, scatter plots in 3D showing the midpoint price (z-axis) as it varies in both strike price (x-axis) and days to expiration (y-axis). Points are colored according to the log of the spread. (Right) Scatter plot showing the spread of options on the x-axis, the posterior credible interval widths on the y-axis, with points colored by underlying strike price. Several references lines are included to show what percent of the spread the credible interval contains.}
\label{fig:option-prices-spread}
\end{figure}

\begin{figure}[!h]
\centering
\includegraphics[width=0.45\textwidth]{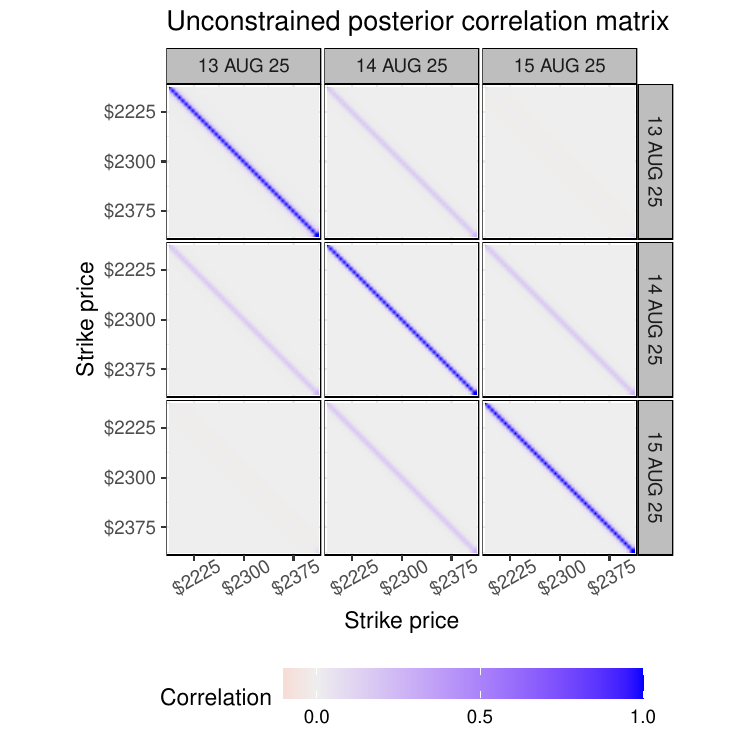}
\includegraphics[width=0.48\textwidth]{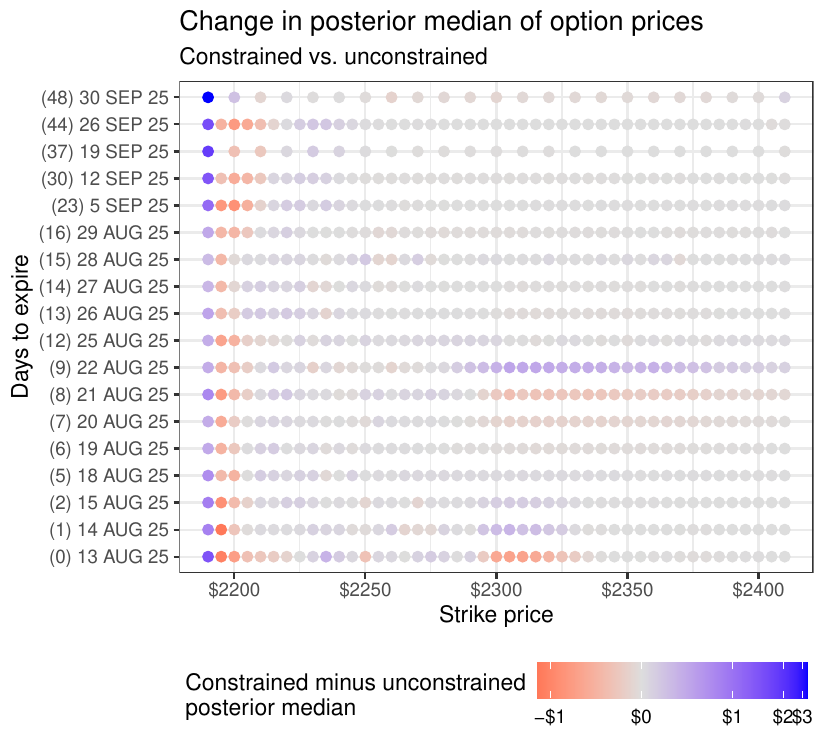}
\caption{(Left) Correlation matrix for unconstrained posterior samples of normal means for the first three times to expire in the dataset. Options are ordered in the matrix by time to expire and then by strike price. (Right) Scatter plot with strike price on the x-axis, days to expire on the y-axis, and points colored by how much larger the constrained posterior median is than the unconstrained.}
\label{fig:option-prices-comparison}
\end{figure}

\paragraph{Comparison with other samplers:} Due to the geometry of the constraint set, modern constrained posterior methods struggle heavily on the options case study, while CWBB is able to produce 1000 independent samples in less than 10 minutes (see Table~\ref{tab:options-runtimes}). Both constraint relaxation \citep{duanBayesianConstraintRelaxation2020} and ProxMCMC \citep{zhouProximalMCMCBayesian2024} (equivalent to distance-to-set priors \citep{presmanDistancetoSetPriorsConstrained2023}) take a significant amount of time longer than CWBB to produce 1000 samples. The samples from ProxMCMC/distance-to-set struggle to mix, giving a mean effective sample size below 5, despite a proper warmup with a No-U-turn-sampler. Both constraint relaxation and ProxMCMC/distance-to-set are unable to produce samples in the constraint set with many samples violating the bid-ask or shape constraints. Attempts to increase the regularization only lead to prohibitive sampling times and difficulty mixing. The proximal prior of \cite{xuBayesianInferenceUsing2024} has extreme computational costs as it requires a numerical approximation of the gradient of the projection onto the constraint polytope. Despite optimized code, it is unable to produce samples in a reasonable amount of time.

\begin{table}[!h]
    \centering
    \caption{Properties of 1000 samples for the case study in Section \ref{section:option-prices}.}\label{tab:options-runtimes}
    {
    \begin{tabular}{lrrr}
        \toprule
		Method & Feasible Samples & Mean Effective Sample Size & Total Runtime (seconds) \\
		\midrule
        Constraint Relaxation & 0 & 603.10 & 60865 \\
		CWBB & 1000 & Gives independent samples & 494 \\
        Proximal Prior & - & - & Failed \\
        ProxMCMC/Distance-to-set & 0 & 4.78 & 6775 \\
        \bottomrule
	\end{tabular}}
\end{table}

\section{Proof of Main Results}
\subsection{Proof Preliminaries}

The techniques used in the resulting proofs are found in Michael Newton's 1991 PhD dissertation at the University of Washington \citep{newtonWeightedLikelihoodBootstrap1991}, and work on restricted maximum likelihood estimation by \citet{aitchisonMaximumLikelihoodEstimationParameters1958}. We combine aspects of notation and regularity conditions from each work and introduce it here. Recall the weights are generated such that
\begin{equation*}
    (w_{n, 1}, \ldots, w_{n,n}) = \frac{n}{\sum_{j=1}^n Y_j}(Y_1, \ldots, Y_n),
\end{equation*}
where $n$ is the sample size, and each $Y_i \sim \text{Exp}(1)$ independently, admitting a $\text{Dirichlet}(1,\ldots,1)$ distribution for $w_{n}/n$. The typical empirical score and information functions, denoting the weighted versions with a tilde, are defined as
\begin{align*}
    S_n(\theta) & = \frac{1}{n}\sum_{i=1}^n \nabla_{\theta} \log f_{\theta}(X_i), &
    \tilde S_n(\theta) = \frac{1}{n}\sum_{i=1}^n w_{n, i} \nabla_{\theta} \log f_{\theta}(X_i), \\
    J_n(\theta) & = -\frac{1}{n} \sum_{i=1}^n \nabla_{\theta}^2 \log f_{\theta}(X_i),
    & \tilde J_n(\theta) = -\frac{1}{n} \sum_{i=1}^n w_{n, i}\nabla_{\theta}^2 \log f_{\theta}(X_i).
\end{align*}
Elements of the Jacobian of the constraint functions and a variant (with expectation always with respect to $\theta_0$) of the Fisher information are defined as
\begin{align*}
    [D_h(\theta)]_{ij} & = \frac{\partial h_i(\theta)}{\partial \theta_j},
    & [J(\theta)]_{ij} = - \mathbb E_{\theta_0}\brk[c]3{\frac{\partial^2 \log f_\theta(X)}{\partial \theta_i \partial \theta_j} }.
\end{align*}
Finally, the third-order partial derivatives of the log-likelihood (see Condition \ref{condition:log-likelihood-partials}) are pertinent for Taylor expansions. Weighted sums of elements of the tensor are denoted
\begin{align*}    
    [\tilde \Psi^j_{1:n}(\theta)]_{k \ell} = \sum_{i=1}^n w_{n,i}\frac{\partial^3 \log f_\theta(X_i)}{\partial \theta_j \partial \theta_k \partial \theta_\ell}.
\end{align*}

Conditional convergence is the end result of the theory. The definition is repeated for clarity. 
\begin{definition}[Convergence in Conditional Probability]
    Let $U$ and $V_1,V_2,\ldots$ be defined on the same probability space. Then $V_n$ converges in conditional probability $a.s.[X_{1:\infty}]$ to $U$ if for all $\epsilon > 0$
    \[
        {\mathrm{pr}}\brk{\|V_n - U\|_2 > \epsilon \mid X_{1:n}} \to 0\quad a.s.[X_{1:\infty}],
    \]
    as $n \to \infty$. In shorthand, this property is denoted $V_n\to_{c.p.} U\ a.s.[X_{1:\infty}]$ and applies for almost every infinite sample sequence.
\end{definition}

There are several shared motifs in the regularity conditions of \citet{newtonWeightedLikelihoodBootstrap1991} and \citet{aitchisonMaximumLikelihoodEstimationParameters1958} as many are used in order to have proper Taylor expansions around the score function. In the following conditions $\theta_0$ refers to the true underlying value of $\theta$.

\begin{condition}[Identifiability]\label{condition:first}\label{condition:identifiability}
For any $\theta_1 \neq \theta_0$, there exists a set $\mathcal D$ where $P_{\theta_1}(\mathcal D) \neq P_{\theta_0}(\mathcal D)$ and $P_\theta$ is the probability measure induced from probability density function $f_{\theta}$. In addition, 
\begin{equation*}
    \mathbb E_{\theta_0}\brk[c]3{\frac{\log f_{\theta_0}(x)}{\log f_{\theta_1}(x)}} < \infty.
\end{equation*}
\end{condition}

\begin{condition}[Feasibility]\label{condition:feasibility}

The support set, $\Theta$, has a nonempty subset $\tilde \Theta = \brk[c]{\theta \colon h(\theta) = 0}$ and $\theta_0 \in \tilde \Theta$.
\end{condition}

\begin{condition}[Neighborhood of $\theta_0$]\label{condition:neighborhood}
There is a closed ball $U_{\alpha} = \{\theta \colon \|\theta - \theta_0\|_2 \leq \alpha\} \subset \Theta$ around $\theta_0$ for some $\alpha$.
\end{condition}

\begin{condition}[Smoothness of unconstrained prior]\label{condition:prior-smoothness}
For every $\theta \in U_\alpha$, the function $\log \pi(\theta)$ is bounded and continuous. Furthermore, for every $\theta \in U_\alpha$, the following derivatives exist, are bounded, and are continuous,
\begin{equation*}
    \frac{\partial \log \pi(\theta)}{\partial \theta_j}, \quad (j=1, \ldots, p).
\end{equation*}
\end{condition}

\begin{condition}[Smoothness of log-likelihood] \label{condition:log-likelihood-partials}
For every $\theta \in U_\alpha$, the following derivatives exist for almost all $x$ and are continuous with respect to $\theta$,
    \begin{equation*}
        \frac{\partial \log f_\theta(x)}{\partial \theta_j}, \quad  
        \frac{\partial^2 \log f_\theta(x)}{\partial \theta_j \partial \theta_k}, \quad  
        \frac{\partial^3 \log f_\theta(x)}{\partial \theta_j \partial \theta_k\partial \theta_\ell},\quad (j,k,\ell=1,\ldots, p).
    \end{equation*}
\end{condition}

\begin{condition}[Boundedness of likelihood]\label{condition:density-boundedness}
For every $\theta \in U_\alpha$, for all $(j,k,\ell=1,\ldots, p)$, there exist functions $m^{(\cdot)}$ such that
    \begin{align}
        \bigg| \frac{\partial f_\theta(x)}{\partial \theta_j} \bigg| & < m_j^{(1)}(x), \quad \int m_j^{(1)}(x) \mu(dx) < \infty, \nonumber \\
        \bigg| \frac{\partial^2 f_\theta(x)}{\partial \theta_j \partial \theta_k} \bigg| & < m_{jk}^{(2)}(x), \quad \int m_{jk}^{(2)}(x) \mu(dx) < \infty, \nonumber \\
        \bigg| \frac{\partial^2 \log  f_\theta(x)}{\partial \theta_j \partial \theta_k} \bigg| & < m_{jk}^{(3)}(x), \quad  \mathbb E_{\theta_0}\brk[c]{ m_{jk}^{(3)}(X)} < \infty, \nonumber \\
        \bigg|\frac{\partial^3 \log f_{\theta}(x)}{\partial \theta_j \partial \theta_k \partial \theta_\ell}\bigg| & < m^{(4)}_{jk\ell}(x), \quad \mathbb E_{\theta_0}\brk[c]{m^{(4)}_{jk\ell}(X)} < \infty, \label{eqn:assumption-g4} \\
        \bigg|\frac{\partial \log f_{\theta}(x)}{\partial \theta_j}\cdot \frac{\partial^2 \log f_{\theta}(x)}{\partial \theta_k \partial \theta_\ell}\bigg| & < m^{(5)}_{jk\ell}(x), \quad \mathbb E_{\theta_0}\brk[c]{m^{(5)}_{jk\ell}(X)} < \infty. \nonumber
    \end{align}

\end{condition}

\begin{condition}[Positive-definite information matrix]\label{condition:pos-def-information}
The Fisher information $J(\theta_0)$ is positive-definite with finite elements. Furthermore, $I(\theta) = \mathbb E_{\theta}\brk[s]{\brk[c]{\nabla_{\theta} \log f_{\theta}(X)}\brk[c]{ \nabla_{\theta} \log f_{\theta}(X)}^\top}$ is positive-definite with finite elements for all for all $\theta \in U_\alpha$. Note that $I(\theta_0) = J(\theta_0)$ under Conditions \ref{condition:log-likelihood-partials} and \ref{condition:density-boundedness} \citep{newtonWeightedLikelihoodBootstrap1991}.
\end{condition}

\begin{condition}[Smoothness of equality constraint function]\label{condition:constraint-smoothness}
For every $\theta \in U_\alpha$, the following derivatives exist and are continuous functions of $\theta$, 
    \begin{equation*}
        \frac{\partial h_k(\theta)}{\partial \theta_j},\quad (j=1,\ldots, p;\ k=1,\ldots, r).
    \end{equation*}
    In addition, for every $\theta \in U_\alpha$, the following derivatives exist, are continuous functions of $\theta$ and are bounded above by some constant $c_h$,
    \begin{equation*}
        \bigg|\frac{\partial^2 h_k(\theta)}{\partial \theta_j\partial \theta_\ell}\bigg| \leq c_h, \quad (j,\ell=1,\ldots, p;\ k=1,\ldots, r). 
    \end{equation*}
    
\end{condition}

\begin{condition}[Full-rank Jacobian at $\theta_0$] \label{condition:full-rank-jacobian}\label{condition:last}
    The $r \times p$ matrix $D_h(\theta_0)$ has full row-rank.
\end{condition}

We quickly prove under the conditions that the covariance matrix from Theorem \ref{thm:two} is invariant to $h$.

\begin{lemma}\label{lemma:invariant-h}
    Suppose $g(\theta)$ and $h(\theta)$ satisfy Conditions \ref{condition:feasibility}, \ref{condition:constraint-smoothness} and \ref{condition:full-rank-jacobian}. If
    \begin{equation*}
        U_h(\theta) = D_h^\top(\theta) \brk[c]{D_h(\theta_0) J(\theta_0)^{-1} D_h^\top(\theta)}^{-1} D_h(\theta_0) J(\theta_0)^{-1},
    \end{equation*}
    then $U_h(\theta_0) = U_g(\theta_0)$.
\end{lemma}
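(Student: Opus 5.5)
The plan is to show that $D_h(\theta_0)$ and $D_g(\theta_0)$ have the same row space. Then one is an invertible left multiple of the other, and the factor cancels in the expression for $U$. Throughout, write $J = J(\theta_0)$, $D_h = D_h(\theta_0)$ and $D_g = D_g(\theta_0)$. Both $g$ and $h$ satisfy Condition \ref{condition:feasibility} with $h(\theta)=0 \iff \theta\in\tilde\Theta \iff g(\theta)=0$, so they have the same zero set. Let $h\colon\mathbb R^p\to\mathbb R^r$ and $g\colon\mathbb R^p\to\mathbb R^{r'}$.

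\textbf{Step 1 (tangent spaces).} By Conditions \ref{condition:constraint-smoothness} and \ref{condition:full-rank-jacobian}, $h$ is $C^1$ near $\theta_0$ with $D_h$ of full row rank $r$. The implicit function theorem then shows that $\tilde\Theta\cap V$, for a small neighborhood $V$ of $\theta_0$, is a $C^1$ submanifold of dimension $p-r$. Its tangent space at $\theta_0$ is exactly $\ker D_h$. In particular, every $v\in\ker D_h$ is the velocity $\gamma'(0)$ of some $C^1$ curve $\gamma\subset\tilde\Theta$ with $\gamma(0)=\theta_0$. The same argument applied to $g$ shows the same set is a $C^1$ submanifold of dimension $p-r'$, so $r=r'$.

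\textbf{Step 2 (equal kernels).} Take $v\in\ker D_h$ and a curve $\gamma$ as in Step 1. Since $\gamma\subset\tilde\Theta$, we have $g(\gamma(t))=0$ for small $t$. Differentiating at $t=0$ gives $D_g v=0$, so $\ker D_h\subseteq\ker D_g$. Both kernels have dimension $p-r$, hence they are equal. Taking orthogonal complements, the row spaces of $D_h$ and $D_g$ coincide. Because both matrices have full row rank $r$, there is an invertible $M\in\mathbb R^{r\times r}$ with $D_g = M D_h$.

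\textbf{Step 3 (algebra).} Substitute $D_g = M D_h$ into the definition of $U_g(\theta_0)$:
\begin{equation*}
U_g(\theta_0) = D_h^\top M^\top \bigl(M D_h J^{-1} D_h^\top M^\top\bigr)^{-1} M D_h J^{-1}.
\end{equation*}
The matrix $D_h J^{-1} D_h^\top$ is invertible, since $J^{-1}$ is positive definite (Condition \ref{condition:pos-def-information}) and $D_h$ has full row rank. Expanding $(M A M^\top)^{-1} = M^{-\top}A^{-1}M^{-1}$ with $A = D_h J^{-1} D_h^\top$, the factors of $M$ cancel and the expression reduces to $U_h(\theta_0)$.

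The main obstacle is Step 2: passing from equality of the zero sets to equality of the Jacobian row spaces. This genuinely needs the full-rank assumption and the implicit function theorem, so that tangent vectors of $\tilde\Theta$ are realized by curves lying in the set. Without full rank, for example with $g=h^2$, the kernels can differ. The dimension count in Step 1 is what rules out $r\neq r'$. Steps 1 and 3 are routine.
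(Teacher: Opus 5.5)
Your proof is correct and takes the same route as the paper: show $\ker D_h(\theta_0)=\ker D_g(\theta_0)$ through the tangent space of $\tilde\Theta$ at $\theta_0$, deduce $D_g(\theta_0)=M D_h(\theta_0)$ for an invertible $M$, and cancel $M$ algebraically. Where the paper cites Lemma 12.2 of Nocedal and Wright (under full row rank, the linearized feasible directions coincide with the tangent cone), you prove that fact directly with the implicit function theorem, and your explicit check that $r=r'$ fills in a point the paper leaves implicit.
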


\begin{proof}
    Let $\mathcal F_h = \{D_h(\theta_0) d = 0\}$, be the set of feasible directions at $\theta_0$, under the definition of $h$ (Definition 12.3 of \citet{nocedalTheoryConstrainedOptimization2006}). Under Condition \ref{condition:full-rank-jacobian} and Lemma 12.2 of \cite{nocedalTheoryConstrainedOptimization2006}, $\mathcal F_h$ is equal to the tangent space of $\tilde \Theta$ at $\theta_0$. The same is true of $\mathcal F_g$. Hence, $D_h(\theta_0)$ and $D_g(\theta_0)$ have same kernel and row space. Hence, there exists some invertible matrix $P \in \mathbb R^{r \times r}$ such that $D_h(\theta_0) = P D_g(\theta_0)$. Then
    \begin{align*}
        U_h(\theta_0) & = D_h^\top(\theta_0) \brk[c]{D_h(\theta_0) J(\theta_0)^{-1} D_h^\top(\theta_0)}^{-1} D_h(\theta_0) J(\theta_0)^{-1}, \\
        & = D_g^\top(\theta_0) P^\top \brk[s]{P \brk[c]{D_g(\theta_0) J(\theta_0)^{-1} D_g^\top (\theta_0)} P^\top }^{-1} P D_g(\theta_0) J(\theta_0)^{-1}, \\
        & = D_g^\top(\theta_0) P^\top (P^\top)^{-1}{ \brk[c]{D_g(\theta_0) J(\theta_0)^{-1} D_g^\top (\theta_0)}  }^{-1} P^{-1} P D_g(\theta_0) J(\theta_0)^{-1}, \\
        & = U_g(\theta_0).
    \end{align*}
\end{proof}

Without proof we repeat the following useful lemmas from \citet{newtonWeightedLikelihoodBootstrap1991}.

\begin{lemma}[Lemma 3 of \citet{newtonWeightedLikelihoodBootstrap1991}] \label{lemma:weighted-avg-convergence}
    Let the data $X_i$ be drawn from density $f_{\theta_0}$ and let the random variables $Y_i$ be independent, identically distributed from a unit exponential, and independent of $X$. If $m$ is a real-valued, measurable function such that $\mathbb E_{\theta_0}\brk[c]{m(X_i)} < \infty$, then
    \begin{equation*}
        \frac{1}{n}\sum_{i=1}^n Y_i m(X_i) \to_{c.p.} \mathbb E_{\theta_0}\brk[c]{m(X_i)}\quad a.s.[X_{1:\infty}].
    \end{equation*}
\end{lemma}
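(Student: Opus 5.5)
This lemma is a weak law of large numbers for exponentially weighted averages, taken conditionally on the data. The plan is a truncation argument. Write $\mu = \mathbb E_{\theta_0}\brk[c]{m(X)}$ and fix a truncation level $M>0$. Split $m = m_M + r_M$, where $m_M(x) = m(x)\mathbf 1(|m(x)|\le M)$. Conditional on $X_{1:n}$, the only randomness is in the $Y_i$, so each term can be handled through its conditional mean and variance.

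First, the bounded part. Since $\mathbb E(Y_i)=1$ and $\mathrm{var}(Y_i)=1$, the conditional mean of $n^{-1}\sum_i Y_i m_M(X_i)$ is $n^{-1}\sum_i m_M(X_i)$. Its conditional variance is $n^{-2}\sum_i m_M(X_i)^2 \le M^2/n$, which tends to $0$ deterministically. Chebyshev's inequality then gives
\begin{equation*}
    {\mathrm{pr}}\brk[c]3{\Big|\frac1n\sum_{i=1}^n (Y_i-1)m_M(X_i)\Big|>\epsilon \mid X_{1:n}} \le \frac{M^2}{n\epsilon^2}
\end{equation*}
for every data sequence. By the strong law, $n^{-1}\sum_i m_M(X_i)\to \mathbb E\brk[c]{m_M(X)}$ almost surely.

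Second, the tail part, handled with Markov's inequality. We have
\begin{equation*}
    {\mathrm{pr}}\brk[c]3{\Big|\frac1n\sum_{i=1}^n Y_i r_M(X_i)\Big|>\epsilon \mid X_{1:n}} \le \frac{1}{n\epsilon}\sum_{i=1}^n |r_M(X_i)| \to \frac{\mathbb E\brk[c]{|r_M(X)|}}{\epsilon}
\end{equation*}
almost surely, again by the strong law. Here I read the hypothesis as $\mathbb E|m|<\infty$, so dominated convergence gives $\mathbb E\brk[c]{|r_M(X)|}\to 0$ and $\mathbb E\brk[c]{m_M(X)}\to\mu$ as $M\to\infty$.

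Combining, restrict to the probability-one event on which the strong law holds simultaneously for every integer $M$; this is a countable intersection of almost-sure events. On that event, for any $\epsilon$ choose $M$ with $\mathbb E|r_M|<\epsilon^2$ and $|\mathbb E m_M - \mu|<\epsilon$. Then
\begin{equation*}
    \limsup_n {\mathrm{pr}}\brk[c]3{\Big|\frac1n\sum_{i=1}^n Y_i m(X_i)-\mu\Big|>3\epsilon \mid X_{1:n}} \le \epsilon,
\end{equation*}
and since $\epsilon$ is arbitrary the limit is $0$.

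The main obstacle is only the bookkeeping of null sets: the strong law must hold for all truncation levels at once. Using integer $M$ makes this a countable intersection and resolves it. No independence between the $X_i$ and $Y_i$ beyond conditioning is needed, because the $Y_i$ are independent of $X$, which justifies the conditional moment computations.
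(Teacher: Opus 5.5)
Your proof is correct. The paper gives no proof of this lemma; it imports it from Newton's dissertation. So there is no in-paper argument to match, and your proposal has to stand on its own, which it does.

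\begin{itemize}
\item Chebyshev handles the bounded part with a data-free bound $M^2/(n\epsilon^2)$.
\item Markov's inequality plus the strong law handles the tail.
\item Intersecting the countably many strong-law events over integer $M$, for both $m_M$ and $|r_M|$, gives one null set that does not depend on $\epsilon$.
\item Reading the hypothesis as $\mathbb E_{\theta_0}|m(X)|<\infty$ is the right interpretation. It is also harmless where the paper applies the lemma, since there $m=m^{(4)}_{jk\ell}\ge 0$.
\end{itemize}

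A shorter standard route exists. Apply the ordinary strong law to the i.i.d.\ pairs $(X_i,Y_i)$ on the product space, using $\mathbb E|Y_1 m(X_1)| = \mathbb E|m(X_1)|<\infty$ by independence. Then Fubini says that for almost every data sequence, the weighted average converges for almost every weight sequence. This is conditional almost sure convergence, which is stronger than the conditional convergence in probability the lemma asks for. That route needs only $\mathbb E Y_1<\infty$. It does require viewing the weights as a single sequence $Y_1,Y_2,\ldots$, but that is without loss of generality, since each conditional probability depends only on the law of $Y_{1:n}$.

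Your truncation argument takes a few more lines, but it has two advantages. It gives explicit finite-$n$ conditional bounds. It also never needs to couple the weights across different $n$, so it applies verbatim to triangular arrays of weights.
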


\begin{lemma}[Lemmas 4 and 5 of \citet{newtonWeightedLikelihoodBootstrap1991}] \label{lemma:weighted-score-information-convergence}
    Under Conditions \ref{condition:first}, \ref{condition:neighborhood}, \ref{condition:log-likelihood-partials}, and \ref{condition:density-boundedness}, as $n \to \infty$,
    \begin{align*}
        \|\tilde S_n(\theta_0)\|_2 & \to_{c.p.} 0 \quad a.s.[X_{1:\infty}], \\
        \| \tilde J_n(\theta) - J(\theta) \|_2 & \to_{c.p.}0 \quad a.s.[X_{1:\infty}].
    \end{align*}
\end{lemma}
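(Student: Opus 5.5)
The plan is to reduce both claims to Lemma \ref{lemma:weighted-avg-convergence}, applied entrywise, after factoring the random normalization out of the weights. Since $w_{n,i} = Y_i/\bar Y_n$ with $\bar Y_n = n^{-1}\sum_{j=1}^n Y_j$, we can write
\begin{equation*}
    \tilde S_n(\theta_0) = \frac{1}{\bar Y_n}\cdot \frac{1}{n}\sum_{i=1}^n Y_i \nabla_\theta \log f_{\theta_0}(X_i), \qquad \tilde J_n(\theta) = \frac{1}{\bar Y_n}\cdot \Big\{-\frac{1}{n}\sum_{i=1}^n Y_i \nabla^2_\theta \log f_{\theta}(X_i)\Big\}.
\end{equation*}
The $Y_j$ are independent of $X_{1:\infty}$, so the strong law gives $\bar Y_n \to 1$ almost surely. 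The same therefore holds conditionally on $X_{1:n}$, that is, $\bar Y_n \to_{c.p.} 1$ $a.s.[X_{1:\infty}]$. A conditional continuous-mapping (Slutsky) step then shows it suffices to establish convergence of the unnormalized averages. The proof of that step is elementary: on the event $|\bar Y_n - 1| < \epsilon$, which has conditional probability tending to one, the ratio is uniformly controlled.

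For the score, fix a coordinate $j$ and set $m(x) = \partial \log f_{\theta_0}(x)/\partial\theta_j$. Condition \ref{condition:pos-def-information} gives $I(\theta_0)$ finite elements, so $\mathbb E_{\theta_0}\{m(X)^2\}<\infty$ and hence $\mathbb E_{\theta_0}|m(X)|<\infty$. Lemma \ref{lemma:weighted-avg-convergence} is stated for functions with finite expectation; to be safe I will apply it separately to $m^+$ and $m^-$ and subtract. This shows the weighted average converges in conditional probability to $\mathbb E_{\theta_0}\{m(X)\}$. That expectation is zero: differentiating $\int f_\theta\, d\mu = 1$ under the integral sign is justified by the dominating function $m^{(1)}_j$ in Condition \ref{condition:density-boundedness} together with dominated convergence. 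Since there are only $p$ coordinates, a union bound turns entrywise convergence into convergence of $\|\tilde S_n(\theta_0)\|_2$.

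For the information, fix $\theta \in U_\alpha$ and indices $j,k$. The entry $\partial^2 \log f_\theta(x)/\partial\theta_j\partial\theta_k$ is dominated by $m^{(3)}_{jk}(x)$, which is $\mathbb E_{\theta_0}$-integrable, so the same positive/negative-part application of Lemma \ref{lemma:weighted-avg-convergence} yields convergence to $-\mathbb E_{\theta_0}\{\partial^2\log f_\theta(X)/\partial\theta_j\partial\theta_k\} = [J(\theta)]_{jk}$. The $p^2$ entries are then combined as before, using equivalence of norms in finite dimensions.

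The main obstacle is whether the statement is meant pointwise in $\theta$ or uniformly over $U_\alpha$, since later Taylor arguments may evaluate $\tilde J_n$ at a random intermediate point. If uniformity is needed, I would use the third-derivative bound \eqref{eqn:assumption-g4}. By the mean value theorem, $\|\tilde J_n(\theta) - \tilde J_n(\theta')\|$ is bounded by $\|\theta-\theta'\|$ times $n^{-1}\sum_i w_{n,i}\sum_{j,k,\ell} m^{(4)}_{jk\ell}(X_i)$. That weighted average converges conditionally to a finite constant by Lemma \ref{lemma:weighted-avg-convergence}, so the family is stochastically equicontinuous. Meanwhile $J(\theta)$ is continuous on $U_\alpha$ by dominated convergence with $m^{(3)}$. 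Covering the compact ball $U_\alpha$ by a finite $\eta$-net, applying pointwise convergence at the net points, and using equicontinuity between them then gives $\sup_{\theta\in U_\alpha}\|\tilde J_n(\theta)-J(\theta)\|_2 \to_{c.p.} 0$ $a.s.[X_{1:\infty}]$. Since there are countably many net points for a sequence $\eta\to 0$, the intersection of the corresponding probability-one sets of $X$-sequences still has probability one.
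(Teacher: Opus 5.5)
Your proof is correct. The paper gives no proof of this lemma; it imports it without proof from Lemmas 4 and 5 of \citet{newtonWeightedLikelihoodBootstrap1991}. Your reduction is the natural way to establish it: factor out $\bar Y_n$, apply Lemma~\ref{lemma:weighted-avg-convergence} entrywise, use $m^{(1)}_j$ to justify the zero-mean score, and use $m^{(3)}_{jk}$ for integrability of the Hessian entries. You also correctly read $J(\theta)$ as the Hessian form from the supplementary preliminaries. Under the outer-product form in the statement of Theorem~\ref{thm:two}, the second claim would fail away from $\theta_0$.

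One small repair. You appeal to Condition~\ref{condition:pos-def-information} for integrability of the score, but that condition is not among the lemma's hypotheses. It is also not needed, since Condition~\ref{condition:density-boundedness} gives directly
\[
\mathbb E_{\theta_0}\Big|\frac{\partial \log f_{\theta_0}(X)}{\partial\theta_j}\Big| \le \int \Big|\frac{\partial f_{\theta_0}(x)}{\partial\theta_j}\Big|\,\mu(dx) \le \int m^{(1)}_j(x)\,\mu(dx) < \infty .
\]

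Your uniform extension over $U_\alpha$ is sound and goes beyond what is stated. It uses Lipschitz control through $m^{(4)}$, continuity of $J$ by domination with $m^{(3)}$, and a finite net with countably many null sets. The lemma's only direct use is pointwise at $\theta_0$, in the proof of Lemma~\ref{lemma:remove-multipliers}. Your uniform version is exactly what would yield $\tilde J_n(\hat\theta_n) \to_{c.p.} J(\theta_0)$ in the proof of Theorem~\ref{thm:two}, a step the paper instead defers to Theorem 7 of \citet{newtonWeightedLikelihoodBootstrap1991}.
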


\begin{lemma}[Lemma 8 of \citet{newtonWeightedLikelihoodBootstrap1991}] \label{lemma:unit-vector-convergence}
    Let $z \in \mathbb R^{p}$ be a unit vector. Let $a_{in} = z^\top \nabla_{\theta} \brk[c]{\log f_{\hat \theta_n}(X_i)}$, for a strongly consistent estimator, $\hat \theta_n$. Under Conditions \ref{condition:identifiability}, \ref{condition:neighborhood}, \ref{condition:log-likelihood-partials}, \ref{condition:density-boundedness}, and \ref{condition:pos-def-information}, as $n\to\infty$,
    \begin{align*}
        \frac{1}{n} \sum_{i=1}^n a_{in}^2 & \to z^\top J(\theta_0) z \quad a.s.[X_{1:\infty}], \\
        \frac{1}{n}\max_{i=1,\ldots,n} a_{in}^2 & \to 0 \quad a.s.[X_{1:\infty}].
    \end{align*}
\end{lemma}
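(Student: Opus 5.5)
The plan is to compare $a_{in}$ with its value at the true parameter, $a_{i0} = z^\top \nabla_\theta \log f_{\theta_0}(X_i)$. I would then control the difference using the dominating function $m^{(5)}$ of Condition \ref{condition:density-boundedness}, together with strong consistency of $\hat\theta_n$. Write $g_x(\theta) = \{z^\top \nabla_\theta \log f_\theta(x)\}^2$. By Condition \ref{condition:log-likelihood-partials}, $g_x$ is continuously differentiable on $U_\alpha$, with
\[
\nabla_\theta g_x(\theta) = 2\{z^\top \nabla_\theta \log f_\theta(x)\}\,\nabla^2_\theta \log f_\theta(x)\, z .
\]
Each component of this gradient is a finite sum of terms of the form $z_j z_\ell\, \partial_j \log f_\theta(x)\, \partial^2_{k\ell}\log f_\theta(x)$. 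Since $\|z\|_2=1$, these are bounded in absolute value by $M(x) = 2\sum_{j,k,\ell} m^{(5)}_{jk\ell}(x)$, uniformly in $\theta \in U_\alpha$, and $\mathbb E_{\theta_0}\{M(X)\}<\infty$.

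Fix a sample path on which $\hat\theta_n \to \theta_0$. For $n$ large enough, $\hat\theta_n \in U_\alpha$. The ball is convex, so the mean value theorem applies along the segment from $\theta_0$ to $\hat\theta_n$ and gives
\[
|a_{in}^2 - a_{i0}^2| \le c_p\, M(X_i)\,\|\hat\theta_n - \theta_0\|_2
\]
for a constant $c_p$ depending only on $p$. This is the key step: it avoids needing a second moment of the Hessian, which the conditions do not supply.

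For the first claim, I would use the strong law of large numbers. Because $J(\theta_0)=I(\theta_0)$ is finite (Condition \ref{condition:pos-def-information}), $a_{i0}^2$ is integrable, so
\[
\frac{1}{n}\sum_{i} a_{i0}^2 \to \mathbb E_{\theta_0}\{z^\top \nabla \log f_{\theta_0}(X)\}^2 = z^\top I(\theta_0) z = z^\top J(\theta_0) z \quad \text{a.s.}
\]
The remainder is bounded by
\[
c_p \|\hat\theta_n-\theta_0\|_2 \cdot \frac{1}{n}\sum_i M(X_i).
\]
The average converges a.s. to $\mathbb E_{\theta_0}\{M(X)\}$, and the norm tends to zero a.s., so the remainder vanishes. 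Intersecting the relevant probability-one events gives the first claim a.s.$[X_{1:\infty}]$.

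For the second claim, I would bound
\[
\frac{1}{n}\max_i a_{in}^2 \le \frac{1}{n}\max_i a_{i0}^2 + c_p\|\hat\theta_n-\theta_0\|_2\,\frac{1}{n}\sum_i M(X_i).
\]
The second term vanishes as before; one could even use $\max_i M(X_i)/n$ here. For the first term, I would use the standard fact that for i.i.d. integrable nonnegative $V_i$, $\max_{i\le n} V_i/n \to 0$ a.s. To see this, note that $\sum_n \Pr(V_n > \epsilon n) \le \mathbb E(V)/\epsilon < \infty$, so by Borel--Cantelli $V_n/n \to 0$ a.s. This in turn forces $\max_{i\le n} V_i / n \to 0$: for any fixed $N$, the first $N$ terms divided by $n$ vanish, and the later terms satisfy $V_i/n \le V_i/i$. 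Applying this with $V_i = a_{i0}^2$ completes the argument.

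The main obstacle I expect is ensuring the mean-value bound is legitimate. This requires that $\hat\theta_n$ eventually lies in $U_\alpha$ on the same full-measure set where the laws of large numbers hold, and that the $m^{(5)}$ domination holds uniformly over the whole segment, not just at its endpoints. Both follow from the stated conditions, since $U_\alpha$ is convex and the domination is assumed for every $\theta \in U_\alpha$.
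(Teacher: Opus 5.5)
The paper does not prove this lemma; it imports it from \citet{newtonWeightedLikelihoodBootstrap1991} ``without proof,'' so there is no in-paper argument to compare yours against. Your argument is correct and complete from the stated conditions.

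Applying the mean value theorem to $g_x(\theta)=\{z^\top\nabla_\theta\log f_\theta(x)\}^2$, rather than to the score itself, is what lets the product bound $m^{(5)}$ of Condition~\ref{condition:density-boundedness} do the work. The paper never invokes that bound elsewhere, and it appears to exist for exactly this step. As you note, this also avoids needing a second moment of the Hessian, which the conditions do not provide. Reducing everything to the i.i.d.\ sequences $a_{i0}^2$ and $M(X_i)$ handles the fact that $a_{in}$ depends on all of $X_{1:n}$ through $\hat\theta_n$. Your Borel--Cantelli argument for $\max_{i\le n}V_i/n\to 0$ is also right.

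You tacitly use the standard readings of Conditions~\ref{condition:log-likelihood-partials} and~\ref{condition:density-boundedness}. First, the exceptional null set of $x$ does not depend on $\theta$. Second, the dominating functions $m^{(\cdot)}$ do not depend on $\theta\in U_\alpha$; as literally written, the quantifiers allow both to vary with $\theta$. You also use $J(\theta_0)=I(\theta_0)$ to identify the limit. The paper relies on the same readings, for example when it bounds $\tilde v^{(1)}$ by $m^{(4)}$ at the intermediate point $\theta^*$. So this is not a gap in your proof.
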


\begin{lemma}[Theorem 15 of \citet{newtonWeightedLikelihoodBootstrap1991}] \label{lemma:lindeberg-clt-corollary}
    Let $Z_1, Z_2,\dots$ be independent and identically distributed random variables with mean $\mu$ and variance $\sigma^2$. Let $(a_{in})$ be a non-vanishing sequence of constants satisfying 
    \begin{equation*}
        \frac{\sum_{i=1}^n a_{in}^2}{\max_{i=1,\ldots,n} a_{in}^2} \to \infty, \quad n\to\infty.
    \end{equation*} Then for $T_n = \sum_{i=1}^n a_{in} Z_i,\ \mu_n = \mu \sum_{i=1}^n a_{in},$ and $\sigma^2_n = \sigma^2 \sum_{i=1}^n a_{in}^2$,
    \begin{equation*}
        \frac{T_n - \mu_n}{\sigma_n}
    \end{equation*}
    converges to a standard normal distribution as $n\to\infty$.
\end{lemma}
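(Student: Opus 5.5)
We would prove this directly from the Lindeberg--Feller central limit theorem for triangular arrays, reducing to a standardized setting first. Assume $\sigma^2 \in (0,\infty)$; the case $\sigma^2 = 0$ is degenerate and excluded by the normalization. Set $\xi_i = (Z_i - \mu)/\sigma$, so the $\xi_i$ are independent and identically distributed with mean zero and unit variance. Write $s_n^2 = \sum_{i=1}^n a_{in}^2$, which is positive because the sequence is non-vanishing. Then
\begin{equation*}
    \frac{T_n - \mu_n}{\sigma_n} = \sum_{i=1}^n \frac{a_{in}}{s_n}\,\xi_i ,
\end{equation*}
since $\sigma_n = \sigma s_n$. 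This is a row-wise independent triangular array with summands $X_{ni} = a_{in}\xi_i / s_n$. Each summand has mean zero, and $\sum_{i=1}^n \mathrm{var}(X_{ni}) = 1$ for every $n$.

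It then suffices to verify the Lindeberg condition: for every $\epsilon > 0$,
\begin{equation*}
    \sum_{i=1}^n \mathbb E\brk[s]{X_{ni}^2 \mathbf 1(|X_{ni}| > \epsilon)} = \sum_{i=1}^n \frac{a_{in}^2}{s_n^2}\, \mathbb E\brk[s]{\xi_1^2 \mathbf 1(|\xi_1| > \epsilon s_n / |a_{in}|)} \to 0 .
\end{equation*}
The key step uses the fact that the $\xi_i$ are identically distributed, so every expectation involves the single function $g(t) = \mathbb E\brk[s]{\xi_1^2 \mathbf 1(|\xi_1| > t)}$, which is nonincreasing in $t$. Let $M_n = \max_{i \le n} |a_{in}|$. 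For every $i$ we have $\epsilon s_n/|a_{in}| \ge \epsilon s_n / M_n$. Hence the sum is bounded by
\begin{equation*}
    g(\epsilon s_n/M_n) \sum_{i=1}^n a_{in}^2 / s_n^2 = g(\epsilon s_n / M_n).
\end{equation*}
By hypothesis $s_n^2/M_n^2 \to \infty$, so $\epsilon s_n / M_n \to \infty$. Since $\mathbb E\xi_1^2 = 1 < \infty$, dominated convergence gives $g(t) \to 0$ as $t \to \infty$. The Lindeberg condition therefore holds, and Lindeberg--Feller yields convergence to $N(0,1)$.

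The main point requiring care is the uniformity over $i$ of the truncated second moments. Here that uniformity is free because the $Z_i$ are identically distributed and $g$ is monotone. The ratio condition $\sum a_{in}^2 / \max a_{in}^2 \to \infty$ is exactly what makes the truncation level diverge uniformly in $i$, so no further work is needed.
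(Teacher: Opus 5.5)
Your proof is correct. Standardizing, verifying the Lindeberg condition, and using the single nonincreasing function $g(t)=\mathbb E[\xi_1^2\mathbf 1(|\xi_1|>t)]$ to get the bound uniformly in $i$ is the standard argument for this H\'ajek--\v{S}id\'ak-type central limit theorem. The paper itself gives no proof to compare against; it quotes the result from Newton's dissertation. Your two implicit conventions, $\sigma^2>0$ and that indices with $a_{in}=0$ contribute nothing, both hold in the paper's use in Lemma~\ref{lemma:score-cancellation}. There $Z_i=Y_i\sim\mathrm{Exp}(1)$, and the data-dependent but conditionally fixed $a_{in}(z)$ satisfy the ratio condition via Lemma~\ref{lemma:unit-vector-convergence}, since $z^\top J(\theta_0)z>0$.
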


\subsection{Proof of Proposition \ref{proposition:penalized-posterior-distribution}}

\begin{proof}
    First, the conditions in the proof statement are made rigorous by Condition 11 from \cite{miller_asymptotic_2021} which we assume is satisfied.
    Under Theorem 12 of \cite{miller_asymptotic_2021}, with probability 1, $h_n(\theta) = -n^{-1} \sum_{i=1}^n \log f_\theta(x_i)$ satisfies the conditions of Theorem 5 of \cite{miller_asymptotic_2021}. We now show that $g_n(\theta) = h_n(\theta) + \rho q(\theta)$ also satisfies these conditions with probability 1, which implies the proposition statement.

    Continuity of $g_n$ at $\theta_0$ is followed by the assumed continuity of $q$ at $\theta_0$. In addition, there is some $E$, with $\theta_0 \in E$, which is open and bounded where $h_n$ has continuous third derivatives which are uniformly bounded. Let $B$ be the interior of the intersection of $E$ and the closed ball around $\theta_0$ where $q(\theta)$ is three-times continuously differentiable. Then, $B$ is an open set containing $\theta_0$ where $g_n$ has continuous third derivatives. Furthermore, $g_n$ has uniformly bounded third derivatives on the closure of $B$, as it is compact. This implies uniformly bounded third derivatives on $B$ itself. Trivially, as $h_n$ converges pointwise to some function $h$ (established by Theorem 5 of \cite{miller_asymptotic_2021}), then $g_n$ converges pointwise to $g$ where $g(\theta) = h(\theta) + \rho q(\theta)$. We also have $\nabla^2_{\theta} h(\theta_0) = -\mathbb E_{\theta_0}\brk[c]{\nabla^2_\theta \log f_{\theta_0}(X)}$ which is positive-definite via properties of natural exponential families. Hence, under assumption, $\nabla^2_\theta g(\theta_0)$ is also positive-definite.

    Finally we show Condition (1) of Theorem 5 of \citet{miller_asymptotic_2021}. Apply Theorem 12 of \citet{miller_asymptotic_2021} which says with probability one, for any open ball $G$ with $\theta_0 \in G$ and $\operatorname{closure}(G) \subset \Theta$ that there exists some compact set $K$ with $\theta_0$ in the interior where $\liminf_n \inf_{\theta \in \Theta \setminus K} h_n(\theta) > h(\theta_0)$. Hence,
    \begin{align}
        g(\theta_0) = h(\theta_0) < \liminf_n \inf_{\theta \in \Theta \setminus K} h_n\brk{\theta} \leq \liminf_n \inf_{\theta \in \Theta \setminus K} h_n\brk{\theta} + \rho q(\theta) = \liminf_n \inf_{\theta \in \Theta \setminus K} g_n(\theta), 
    \end{align}
    via $q(\theta) \geq 0$ and $q(\theta_0) = 0$. Also on $K$ we have $h(\theta) > h(\theta_0)$ for all $\theta \in K\setminus\{\theta_0\}$. Because $q(\theta_0) = 0$ and $q(\theta) \geq 0$, this further implies $g(\theta) > g(\theta_0)$ for all $\theta \in K\setminus\{\theta_0\}$.
\end{proof}

\subsection{Proof of Theorem \ref{thm:one}}

The following two lemmas of \citet{aitchisonMaximumLikelihoodEstimationParameters1958} help establish the existence of a constrained weighted Bayesian bootstrap solution. The first is equivalent to Brouwer's fixed-point theorem, and the second is reproved for conditional probabilities.

\begin{lemma}[Lemma 2 of \citet{aitchisonMaximumLikelihoodEstimationParameters1958}] \label{lemma:fixed-point-optima}
If $m$ is a continuous function mapping a closed ball to itself, such that for every $\|z\|_2 = 1$, $z^\top m(z) < 0$, then there exists a point $\hat z$ such that $\|\hat z\|_2 < 1$ and $m(z) = 0$.    
\end{lemma}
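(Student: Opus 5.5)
The plan is to argue by contradiction using Brouwer's fixed-point theorem on the closed unit ball $B=\{z\colon \|z\|_2\le 1\}$. I read the conclusion as $m(\hat z)=0$ for some $\hat z$ with $\|\hat z\|_2<1$. The hypothesis that $m$ maps $B$ into itself will not actually be needed; only continuity of $m$ on $B$ and the boundary sign condition matter.

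First I will dispose of the boundary. If $\|z\|_2=1$, then $z^\top m(z)<0$, so $m(z)\neq 0$. Hence any zero of $m$ in $B$ automatically lies in the open ball, and it suffices to show that $m$ vanishes somewhere in $B$.

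Now suppose, for contradiction, that $m(z)\neq 0$ for every $z\in B$. Define $g(z)=m(z)/\|m(z)\|_2$.
\begin{itemize}
\item $g$ is continuous on $B$, because $m$ is continuous and $\|m(z)\|_2$ is bounded away from zero on $B$.
\item $g$ takes values in the unit sphere, which is contained in $B$, so $g$ maps $B$ to itself.
\end{itemize}
Brouwer's theorem therefore gives a point $z^\ast\in B$ with $g(z^\ast)=z^\ast$. Since $g$ takes values on the sphere, $\|z^\ast\|_2=1$. The fixed-point relation reads $m(z^\ast)=\|m(z^\ast)\|_2\, z^\ast$, and hence
\[
z^{\ast\top} m(z^\ast)=\|m(z^\ast)\|_2\,\|z^\ast\|_2^2=\|m(z^\ast)\|_2>0 .
\]
This contradicts the hypothesis $z^{\ast\top}m(z^\ast)<0$ at the boundary point $z^\ast$. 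So $m$ has a zero $\hat z\in B$, and by the first step $\|\hat z\|_2<1$.

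The only delicate point is the direction of normalization. Using $-m/\|m\|_2$ instead of $m/\|m\|_2$ yields a fixed point that is consistent with the sign hypothesis rather than contradicting it. Everything else is routine.
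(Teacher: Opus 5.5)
Your argument is correct. Assuming $m$ has no zero, applying Brouwer to $m/\|m\|_2$ gives a boundary fixed point with $z^{\ast\top}m(z^\ast)=\|m(z^\ast)\|_2>0$, which contradicts the hypothesis; the boundary sign condition then forces the zero into the open ball. The paper gives no proof of its own, only citing Aitchison and Silvey and noting that the lemma is equivalent to Brouwer's fixed-point theorem, so your derivation from Brouwer is the intended route, and you are right that the self-map hypothesis is not needed.
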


\begin{lemma}[Analogue to Lemma 1 of \citet{aitchisonMaximumLikelihoodEstimationParameters1958}]\label{lemma:remove-multipliers}
    Consider the Lagrangian multiplier system of the constrained weighted Bayesian bootstrap with no inequality constraints,
    \begin{align}
        0 & = n \tilde S_n(\theta) + \nabla_\theta \log \pi (\theta) +  D_h^\top(\theta) \lambda, \label{eqn:weighted-system-lagrangian} \\ 
        0 & = h(\theta). \label{eqn:weighted-system-feasibility}
    \end{align}
    Subject to Conditions \ref{condition:first}--\ref{condition:full-rank-jacobian}, for $\delta < \alpha$ and almost all $w_n$ we have that 
    $\check \theta_n \in U_{\delta}$ and $\check \lambda_n$ satisfy \eqref{eqn:weighted-system-lagrangian} and \eqref{eqn:weighted-system-feasibility} if and only if $\check \theta_n$ satisfies an equation of the form
    \begin{equation}
        -J(\theta_0) (\check \theta_n - \theta_0) + \tilde v(\check \theta_n) = 0. \label{eqn:critical-equation}
    \end{equation}
    Furthermore, $\tilde v$ is continuous with respect to $\theta$ on $U_\delta$ for almost all $X_{1:n}$ and $w_n$. In addition, there exists a constant $c_v$ (that does not depend on $w_n, X, \delta$, or $n$) such that 
    \begin{equation*}
        {\mathrm{pr}}\brk[c]3{\sup_{\theta \in U_{\delta}} \|\tilde v(\theta) \|_2 \leq \delta^2 c_{v} \mid X_{1:n}} \to 1\quad a.s.[X_{1:\infty}].
    \end{equation*}
\end{lemma}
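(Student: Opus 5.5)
Following Lemma 1 of \citet{aitchisonMaximumLikelihoodEstimationParameters1958}, the plan is to Taylor-expand the score and the constraint about $\theta_0$, solve the resulting linear system in the multiplier $\lambda$, and collect every higher-order or random fluctuation term into $\tilde v$.

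\emph{Expanding the stationarity system.} For $\theta\in U_\delta$, expand $\tilde S_n$ to second order with third-order remainder, entry by entry:
\[
n\tilde S_n(\theta)=n\tilde S_n(\theta_0)-n\tilde J_n(\theta_0)(\theta-\theta_0)+\tfrac12 R_n(\theta),
\]
where $[R_n]_j=(\theta-\theta_0)^\top\tilde\Psi^j_{1:n}(\bar\theta_j)(\theta-\theta_0)$ for intermediate points $\bar\theta_j$. Similarly expand $h(\theta)=D_h(\theta_0)(\theta-\theta_0)+\tfrac12 Q(\theta)$, with $\|Q(\theta)\|\le c_h' \|\theta-\theta_0\|^2$ by Condition \ref{condition:constraint-smoothness}. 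Also write $D_h(\theta)=D_h(\theta_0)+E(\theta)$, with $\|E(\theta)\|\le c\,\delta$.

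Divide \eqref{eqn:weighted-system-lagrangian} by $n$, set $\mu=\lambda/n$, and multiply by $D_h(\theta_0)J(\theta_0)^{-1}$. Because $J(\theta_0)$ is invertible and $D_h(\theta_0)$ has full row rank (Condition \ref{condition:full-rank-jacobian}), the matrix $D_h(\theta_0)J(\theta_0)^{-1}D_h^\top(\theta)$ is invertible for $\theta\in U_\delta$ once $\delta$ is small. Using the constraint expansion to replace $D_h(\theta_0)(\theta-\theta_0)$ by $-\tfrac12 Q(\theta)$, we can solve for $\mu$ as a continuous function of $\theta$. The solved $\mu$ is a linear combination of $\tilde S_n(\theta_0)$, $\nabla\log\pi/n$, $(\tilde J_n(\theta_0)-J(\theta_0))(\theta-\theta_0)$, $R_n/n$ and $Q$.

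Substituting $\mu$ back gives
\[
-J(\theta_0)(\theta-\theta_0)+\tilde v(\theta)=0,
\]
where $\tilde v$ is built from the same terms. The reverse implication holds because the construction is invertible: define $\lambda$ by the same formula, then recover $h(\theta)=0$ from the projected equation. Note that $\{I-U\}$-type projections appear naturally here.

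\emph{Bounding $\tilde v$.} The terms are controlled as follows:
\begin{itemize}
\item $\tilde S_n(\theta_0)\to_{c.p.}0$ and $\|\tilde J_n(\theta_0)-J(\theta_0)\|\to_{c.p.}0$ by Lemma \ref{lemma:weighted-score-information-convergence}.
\item $\nabla\log\pi/n\to0$ uniformly on $U_\alpha$ by Condition \ref{condition:prior-smoothness}.
\item $\|R_n(\theta)/n\|\le \delta^2\,n^{-1}\sum_i w_{n,i}m^{(4)}(X_i)$. Since $w_{n,i}=Y_i/\bar Y$ with $\bar Y\to1$, Lemma \ref{lemma:weighted-avg-convergence} shows this is at most $\delta^2\,2\mathbb E m^{(4)}$ with conditional probability tending to one a.s.
\item $\|Q\|\le c\,\delta^2$, and the $E(\theta)$ terms multiply quantities of order $\delta$.
\end{itemize}
Collecting these gives $\sup_{U_\delta}\|\tilde v\|\le\delta^2 c_v$ with conditional probability tending to one, with $c_v$ depending only on $J(\theta_0)$, $D_h(\theta_0)$, $c_h$ and $\mathbb E m^{(4)}$. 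Continuity of $\tilde v$ in $\theta$ follows from Conditions \ref{condition:log-likelihood-partials} and \ref{condition:constraint-smoothness}.

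\emph{Main obstacle.} The delicate step is making $\delta^2 c_v$ absorb the terms that are only $o(1)$ rather than $O(\delta^2)$: $\tilde S_n(\theta_0)$, the prior gradient over $n$, and $(\tilde J_n-J)\delta$. These must be made smaller than $\delta^2$ with probability tending to one. This works because $\delta$ is fixed while $n\to\infty$. The Jacobian-mismatch terms $E(\theta)\mu$ also need care: they are $\delta\cdot O(\|\mu\|)$, and $\|\mu\|$ is itself $O(\delta)$ plus vanishing terms, so they too are $O(\delta^2)$.
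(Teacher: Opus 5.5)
Your proposal is correct and follows essentially the same route as the paper. You Taylor-expand $\tilde S_n$ and $h$ about $\theta_0$, project the stationarity equation with $D_h(\theta_0)J(\theta_0)^{-1}$, and use the constraint expansion to solve for $\lambda/n$. You then substitute back to define $\tilde v$, reverse the steps for the converse, and bound each piece by a multiple of $\delta^2$ with conditional probability tending to one for fixed $\delta$.

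Two small remarks. The split $D_h(\theta)=D_h(\theta_0)+E(\theta)$ is unnecessary, because the bracket multiplying $D_h^\top(\theta)\{D_h(\theta_0)J(\theta_0)^{-1}D_h^\top(\theta)\}^{-1}$ is already $O(\delta^2)$, so $\mu$ is $O(\delta^2)$ rather than merely $O(\delta)$. Also, invertibility of that matrix is best secured by shrinking $\alpha$ once, as the paper does, rather than requiring $\delta$ small; this keeps $c_v$ uniform over all $\delta<\alpha$.
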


\begin{proof}
    Let $\theta \in U_\delta$ with $U_\delta$ well defined from Conditions \ref{condition:first}--\ref{condition:neighborhood}. Under Conditions \ref{condition:feasibility}, \ref{condition:log-likelihood-partials}, and \ref{condition:constraint-smoothness}, perform a Taylor expansion of the weighted score function and the constraint function around $\theta_0$ and define $\tilde v^{(1)}$ and $v^{(2)}$ such that
    \begin{align}
        n \tilde S_n(\theta) & = n \tilde S_n(\theta_0) - n \tilde J_n(\theta_0)(\theta - \theta_0) + \tilde v^{(1)}(\theta), \label{eqn:weighted-score-expansion} \\
        h(\theta) & = D_h(\theta_0)(\theta - \theta_0) + v^{(2)}(\theta). \label{eqn:constraint-expansion}
    \end{align}
    Components of $\tilde v^{(1)}(\theta) \in \mathbb R^p$ and $v^{(2)}(\theta) \in \mathbb R^r$ can be written as
    \begin{align}
        \tilde v^{(1)}_j(\theta) & = \frac{1}{2}(\theta - \theta_0)^\top \tilde \Psi_{1:n}^j(\theta^{(1,*)}) (\theta - \theta_0), \label{eqn:v1-components} \\
        v^{(2)}_j(\theta) & = \frac{1}{2}(\theta - \theta_0)^\top \nabla_{\theta}^2 h_j(\theta^{(2, *)}) (\theta - \theta_0), \label{eqn:v2-components}
    \end{align}
    where $\theta^{(\cdot, *)}$ are some vectors on the line segment between $\theta$ and $\theta_0$. Substitute the expansions \eqref{eqn:weighted-score-expansion} and \eqref{eqn:constraint-expansion} into the Lagrangian system \eqref{eqn:weighted-system-lagrangian} and \eqref{eqn:weighted-system-feasibility}
    \begin{align}
        n \tilde S_n(\theta_0) - n \tilde J_n(\theta_0) ( \theta - \theta_0) + D_h^\top(\theta) \lambda + \nabla_{\theta}\log \pi(\theta) + \tilde v^{(1)}(\theta) & = 0, \label{eqn:thm-expansion-1} \\
        D_h(\theta_0) (\theta - \theta_0) + v^{(2)}(\theta) & = 0. \label{eqn:thm-expansion-2}
    \end{align}
     Now assume $\check \theta_n \in U_\delta$ and $\check \lambda_n$ satisfy \eqref{eqn:weighted-system-lagrangian} and \eqref{eqn:weighted-system-feasibility}, substitute $n \tilde S_n(\check \theta_n) = -D_h^\top(\check \theta_n) \lambda - \nabla_{\theta}\log \pi(\check \theta_n)$ into \eqref{eqn:thm-expansion-1}, $h(\check \theta_n)$ into \eqref{eqn:thm-expansion-2} and then simplify:
    \begin{align}
        n \tilde S_n(\theta_0) - n \tilde J_n(\theta_0) ( \check \theta_n - \theta_0) + \tilde v^{(1)}(\check \theta_n) & = n\tilde S_n(\check \theta_n), \label{eqn:intermediate-1} \\
        D_h(\theta_0) (\check \theta_n - \theta_0) + v^{(2)}(\check \theta_n) & = h(\check \theta_n) = 0. \label{eqn:intermediate-2}
    \end{align}
    Divide \eqref{eqn:intermediate-1} by $n$ and define $\tilde v^{(3)}(\theta)$ as 
    \begin{align}
        -J(\theta_0)(\check \theta_n - \theta_0) + \tilde v^{(3)}(\check \theta_n) & = \tilde S_n(\check \theta_n) = -\frac{D_h^\top(\check \theta_n) \check \lambda_n}{n} - \frac{\nabla_\theta \log \pi(\check \theta_n)}{n}, \label{eqn:intermediate-3} \\
        \tilde v^{(3)}(\theta) & = \tilde S_n(\theta_0) - \brk[c]{\tilde J_n(\theta_0) - J(\theta_0)}(\theta - \theta_0) + \frac{\tilde v^{(1)}(\theta)}{n}. \label{eqn:v3-def}
    \end{align}
    Via the definition of $\tilde v^{(1)}(\theta)$ we also have
    \begin{equation}
        \tilde v^{(3)}(\theta) = \tilde S_n(\theta) + J(\theta_0)(\theta - \theta_0). \label{eqn:v3-sol-def}
    \end{equation}
    Multiply both sides of \eqref{eqn:intermediate-3} by $D_h(\theta_0) J(\theta_0)^{-1}$ and substitute $D_h(\theta_0) (\check \theta_n - \theta_0)$ from \eqref{eqn:intermediate-2} to get
    \begin{equation*}
        v^{(2)}(\check \theta_n) + D_h(\theta_0) J(\theta_0)^{-1} \brk[c]3{ \tilde v^{(3)}(\check \theta_n) + \frac{\nabla_\theta \log \pi(\check \theta_n)}{n}} = - D_h(\theta_0) J(\theta_0)^{-1}D_h^\top(\check \theta_n)\frac{ \check \lambda_n}{n}.
    \end{equation*}
    This gives an expression for $\check \lambda_n$
    \begin{equation*}
        \check \lambda_n = -n \brk[c]{D_h(\theta_0) J(\theta_0)^{-1} D_h^\top(\check \theta_n)}^{-1}\brk[s]3{D_h(\theta_0) J(\theta_0)^{-1} \brk[c]3{\tilde v^{(3)}(\check \theta_n) + \frac{\nabla_\theta \log \pi(\check \theta_n)}{n}}+ v^{(2)}(\check \theta_n)  }.
    \end{equation*}
    Substitute $\check \lambda_n$ into \eqref{eqn:intermediate-3} to get
    \begin{equation*}
        -J(\theta_0)(\check \theta_n - \theta_0) + \tilde v(\check \theta_n) = 0,
    \end{equation*}
    where
    \begin{equation}
        \begin{split}
            \tilde v(\theta) & = -D_h^\top(\theta)\brk[c]{D_h(\theta_0) J(\theta_0)^{-1} D_h^\top(\theta)}^{-1} \brk[s]3{D_h(\theta_0) J(\theta_0)^{-1} \brk[c]3{\tilde v^{(3)}(\theta) + \frac{\nabla_\theta \log \pi(\theta)}{n}}+ v^{(2)}(\theta)  } \\
            & \quad + \tilde v^{(3)}(\theta) + \frac{\nabla_\theta \log \pi(\theta)}{n}. \label{eqn:v-def}
        \end{split}
    \end{equation}
    Using $\eqref{eqn:weighted-system-lagrangian}$ for $\tilde S_n(\check \theta_n) = -D_h^\top(\check \theta_n)\check \lambda_n / n - \nabla_\theta \log \pi(\check \theta_n)/ n $, \eqref{eqn:weighted-system-feasibility} for $h(\check \theta_n) = 0$, and $\tilde v^{(3)}(\check \theta_n)$ from \eqref{eqn:v3-sol-def}, it is clear $-J(\theta_0) (\check \theta_n - \theta_0) + \tilde v(\check \theta_n) = 0$. 

    Suppose now that $-J(\theta_0)(\theta - \theta_0) + \tilde v(\theta) = 0$ and $\theta \in U_\delta$. Multiply this expression by $D_h(\theta_0) J(\theta_0)^{-1}$ to get
    \begin{equation*}
        \begin{split}
        0 & = D_h(\theta_0) J(\theta_0)^{-1} \brk[c]{\tilde v(\theta) -J(\theta_0)(\theta - \theta_0)} \\
        & = -D_h^\top(\theta_0) J(\theta_0)^{-1} \brk[c]3{\tilde v^{(3)}(\theta) + \frac{\nabla_\theta \log \pi (\theta) }{n}} - v^{(2)}(\theta) \\
        & \quad + D_h(\theta_0) J(\theta_0)^{-1} \brk[c]3{\tilde v^{(3)}(\theta) + \frac{\nabla_\theta \log \pi (\theta)}{n}} - D_h(\theta_0) (\theta - \theta_0)
        \end{split}
    \end{equation*}
    Substituting \eqref{eqn:constraint-expansion} implies $h(\theta) = 0$. Using $h(\theta) = 0$ and \eqref{eqn:v3-sol-def} we can do similar substitution as above for
    \begin{equation*}
       \tilde S_n(\theta) + \frac{\nabla_\theta \log \pi(\theta)}{n} = D_h^\top(\theta)\underbrace{\brk[c]{D_h(\theta_0)J(\theta_0)^{-1} D_h^\top(\theta)}^{-1}D_h(\theta_0) J(\theta_0)^{-1}\brk[c]3{\tilde S_n(\theta) + \frac{\nabla_\theta \log \pi(\theta)}{n} } }_{-\lambda / n}.
    \end{equation*}
    This then implies
        $n\tilde S_n(\theta) + \nabla_\theta \log \pi(\theta) + D_h^\top(\theta) \lambda = 0$, for some $\lambda$ due to \eqref{eqn:critical-equation} being satisfied.

    We now show for small $\delta$ that $\tilde v(\theta)$ is bounded and continuous for $\theta \in U_\delta$ with high conditional probability. We reiterate definitions for $\tilde v^{(1)}, v^{(2)}$, $\tilde v^{(3)}$ and show each is bounded. We can easily bound the components of $v^{(2)}$ defined in \eqref{eqn:v2-components} due to Condition \ref{condition:constraint-smoothness}:
    \begin{align}
        v^{(2)}_j(\theta) & = \frac{1}{2}(\theta - \theta_0)^\top \nabla_{\theta}^2  h_j(\theta^{(2, *)}) (\theta - \theta_0), \nonumber \\
        & \leq \sup_{\theta' \in U_\delta} \frac{1}{2} |\lambda_{\max}\brk[c]{\nabla_\theta^2 h_j(\theta')}| \|\theta' - \theta\|_2^2, \nonumber \\
        & \leq \frac{\delta^2 p c_h}{2}. \label{eqn:gershgorin-bound}
    \end{align}
    Line \eqref{eqn:gershgorin-bound} is due to the maximum eigenvalue being bounded by the magnitude of the entries times the number of rows from the Gershgorin circle theorem. This holds almost surely with respect to the random weights. Similarly for $\tilde v^{(1)}(\theta)$ defined in \eqref{eqn:v1-components}:
    \begin{align*}
        |\tilde v^{(1)}_j(\theta)| & = \frac{1}{2} \bigg| \sum_{k=1}^p \sum_{\ell=1}^p [\theta - \theta_0]_k [\theta - \theta_0]_\ell \brk[c]3{\sum_{i=1}^n w_{n,i}\frac{\partial^3 \log f_{\theta^*}(X_i)}{\partial \theta_j \partial \theta_k \partial \theta_\ell}} \bigg|, \\
        & \leq \frac{1}{2} \|\theta - \theta_0\|_\infty^2 \sum_{k=1}^p \sum_{\ell=1}^p {\sum_{i=1}^n \bigg| w_{n,i}\frac{\partial^3 \log f_{\theta^*}(X_i)}{\partial \theta_j \partial \theta_k \partial \theta_\ell}} \bigg|, \\
        & \leq \frac{\delta^2}{2} \sum_{k=1}^p \sum_{\ell=1}^p {\sum_{i=1}^n w_{n,i} m_{jk\ell}^{(4)}(X_i)}, \\
        & = \frac{\delta^2}{2\bar Y} \sum_{k=1}^p\sum_{\ell=1}^p \frac{1}{n} \brk[c]3{\sum_{i=1}^n Y_i m_{jk\ell}^{(4)}(X_i)} \to_{c.p.} \frac{\delta^2 p^2 c_4}{2} \quad a.s.[X_{1:\infty}].
    \end{align*}
    Convergence in probability is due to \eqref{eqn:assumption-g4} in Condition \ref{condition:density-boundedness} and Lemma \ref{lemma:weighted-avg-convergence}. The above implies
    \begin{align*}
        {\mathrm{pr}}\brk[c]3{\|\tilde v^{(1)}(\theta)\|_\infty > {\delta^2p^2c_4} \mid X_{1:n}} & = {\mathrm{pr}}\brk[c]3{\bigcup_{j=1}^p |\tilde v^{(1)}_j(\theta)| > {\delta^2p^2c_4} \mid X_{1:n}}, \\
        & \leq \sum_{j=1}^p {\mathrm{pr}}\brk[c]3{ |\tilde v^{(1)}_j(\theta)| > {\delta^2p^2c_4} \mid X_{1:n}}, \\
        & \leq \sum_{j=1}^p {\mathrm{pr}}\brk[c]3{\frac{\delta^2}{2\bar Y} \sum_{k=1}^p\sum_{\ell=1}^p \frac{1}{n} \brk[c]3{\sum_{i=1}^n Y_i m_{jk\ell}^{(4)}(X_i)} > \delta^2 p^2 c_4 \mid X_{1:n}}, \\
        & \to 0 \quad a.s.[X_{1:\infty}].
    \end{align*}
    Using the explicit definition of $\tilde v^{(3)}(\theta)$ from \eqref{eqn:v3-def}, Cauchy-Schwarz, triangle inequality, and norm equivalence imply
    \begin{equation*}
        \|\tilde v^{(3)}(\theta)\|_2 \leq \|\tilde S_n(\theta_0)\|_2 + \|\tilde J_n(\theta_0) - J(\theta_0)\|_2\|\theta - \theta_0\|_2 + \frac{p^{1/2}}{n}\|\tilde v^{(1)}(\theta)\|_\infty.
    \end{equation*}
    Via Lemma \ref{lemma:weighted-score-information-convergence} we can choose $n$ large enough so ${\mathrm{pr}}\brk[c]{\|\tilde S_n(\theta_0)\|_2 \geq \delta^2 \mid X_{1:n}} < \eta / 3$, ${\mathrm{pr}}\brk[c]{\|\tilde J_n(\theta_0) - J(\theta_0)\|_2 \geq \delta \mid X_{1:n}} < \eta/3 $, and ${\mathrm{pr}}\brk[c]{\|\tilde v^{(1)}(\theta)\|_\infty > \delta^2 p^2 c_4 \mid X_{1:n}} < \eta/3 $, for $\eta > 0$. This then implies via a standard union bound
    \begin{equation*}
        {\mathrm{pr}}\brk[c]3{\|\tilde v^{(3)}(\theta)\|_2 > \delta^2\brk3{2 + \frac{p^{5/2}c_4}{2n}} \mid X_{1:n}} \leq \eta. \label{eqn:v3-bound}
    \end{equation*}
    
    We examine the terms of \eqref{eqn:v-def} to continue. We see $\nabla_\theta \log \pi$ is bounded and defined in $U_\delta$ via Conditions \ref{condition:neighborhood} and \ref{condition:prior-smoothness}. Choosing $n$ large enough will ensure $\nabla \log \pi(\theta)/n$ will have entries below $\delta^2$. Bounding the operator norms of the matrices in $\eqref{eqn:v-def}$ will further bound $\tilde v$. The operator norms of $D_h(\theta_0)$ and $J(\theta_0)$ are constants and hence bounded. Because $D_h$ is continuous it will have bounded entries within the ball $U_\alpha$. Since $\delta < \alpha$, the operator norm of $D_h(\theta)$ is uniformly bounded in $U_\delta$. By choosing $\alpha$ small enough, $D_h(\theta)$ will also be full rank for all $\theta \in U_\alpha$ ensuring $\brk[c]{D_h(\theta_0) J(\theta_0)^{-1} D_h^\top(\theta)}^{-1}$ is nonsingular, has bounded entries, and bounded operator norm. All of these operator norms will factor into $c_v$. Given all components not containing $w_n$ in \eqref{eqn:v-def} can be bounded below a factor of $\delta^2$ for all $\theta \in U_\delta$, combined with \eqref{eqn:v3-bound} the $\ell_2$-norm of the $\tilde v$ function for $\theta \in U_\delta$ is bounded by a factor of $\delta^2$ with arbitrarily high conditional probability as $n$ grows.

    Continuity follows from similar arguments. Via \eqref{eqn:constraint-expansion}, $v^{(2)}$ is continuous in $U_\delta$, since $h$ is also continuous in this neighborhood (Condition \ref{condition:constraint-smoothness}). Each component of $\tilde v^{(1)}(\theta)$ is continuous due to the continuity of the third partial derivatives of $\log f_\theta$ for all $\theta \in U_\delta$ (Condition \ref{condition:log-likelihood-partials}). Continuity of $\tilde v^{(3)}$ follows from continuity of $\tilde v^{(1)}$. Both $D_h$ and $\nabla_{\theta} \log \pi$ are continuous by the same conditions that have boundedness. This leaves showing $\brk[c]{D_h(\theta_0) J(\theta_0)^{-1} D_h^\top( \theta)}^{-1}$ is continuous on $U_\delta$. We have that there is some $\delta$ where $D_h(\theta)$ is full row-rank for all $\theta \in U_{\delta}$, since $D_h(\theta_0)$ is full row-rank and $D_h$ is continuous (Conditions \ref{condition:constraint-smoothness}, \ref{condition:full-rank-jacobian}), all perturbations below a certain magnitude do not change the rank. As a result, $D_h(\theta_0) J(\theta_0)^{-1} D_h^\top( \theta)$ is full-rank and bounded in $U_\delta$, for $\delta$ small enough, so the inverse is also continuous. Finally, all terms in the definition of $\tilde v$ on $U_\delta$ are continuous so it is as well. This holds for almost all $w_n$.
\end{proof}

We now prove Theorem 1.

\begin{proof}
    We emulate the proof of Theorem 1 from \citet{aitchisonMaximumLikelihoodEstimationParameters1958}, while taking additional care for the conditional probabilities. Let $c_{v}$ be defined in Lemma \ref{lemma:remove-multipliers}. Fix $\delta < \min\brk[s]{\lambda_{\min}\brk[c]{J(\theta_0)} / c_{v}, c_{v}^{-1/2} \lambda_{\max}^{1/2}\brk[c]{J(\theta_0)}, \alpha, 1}$. Let $\theta$ be such that $\|\theta - \theta_0\|_2 = \delta$. Define on the closed unit ball in $\mathbb R^p$,
    \begin{equation}
        m\brk3{\frac{\theta - \theta_0}{\delta}} = \frac{1}{2 \lambda_{\max}\brk[c]{J(\theta_0)}}\brk[c]{-J(\theta_0)(\theta - \theta_0) + \tilde v(\theta)}. \label{eqn:m-def}
    \end{equation}
    Via Lemma \ref{lemma:remove-multipliers}, $m$ is continuous for all $\theta \in U_{\delta}$ and almost all $X_{1:n}, w_n$. In addition, for all $\theta \in U_{\delta}$,
    \begin{equation*}
        \bigg \|m\brk3{\frac{\theta - \theta_0}{\delta}} \bigg\|_2 \leq \frac{\|\theta - \theta_0\|_2}{2} + \frac{1}{2\lambda_{\max}\brk[c]{J(\theta_0)}}\|\tilde v(\theta)\|_2.
    \end{equation*}
    If $\|\tilde v(\theta)\|_2 < \lambda_{\max}\brk[c]{J(\theta_0)}$, then $\|m\|_2 \leq 1$ for all $\theta \in U_{\delta}$. Let $\epsilon < 1$, so via Lemma \ref{lemma:remove-multipliers} there exists $n(\delta, \epsilon)$ large enough such that
    \begin{equation*}
        {\mathrm{pr}}\brk[c]3{\sup_{\theta \in U_{\delta}} \|\tilde v(\theta) \|_2 \leq \delta^2 c_{v} \mid X_{1:n}} > 1 - \epsilon / 2, \quad (n=n(\delta, \epsilon), \ldots, \infty).
    \end{equation*}
    Then for $\delta <  {c_{v}^{-1/2} \lambda_{\max}^{1/2}\brk[c]{J(\theta_0)}}$, we have $\|\tilde v(\theta)\|_2 < \lambda_{\max}\brk[c]{J(\theta_0)}$ with conditional probability at least $1-\epsilon/2$. Hence, all $m$ maps its domain to itself with conditional probability at least $1-\epsilon/2$ for $n > n(\delta, \epsilon)$.
    
    We now show that Lemma \ref{lemma:fixed-point-optima} can be applied with high conditional probability. With probability at least $1-\epsilon/2$, for $\theta$ with $\|\theta - \theta_0\|_2 = \delta$,
    \begin{align}
        \frac{1}{\delta} (\theta - \theta_0)^\top m\brk3{\frac{\theta - \theta_0}{\delta}} & = \frac{1}{2\delta \lambda_{\max}\brk[c]{J(\theta_0)}}\brk[c]{-(\theta - \theta_0)^\top J(\theta_0)(\theta - \theta_0) + (\theta - \theta_0)^\top\tilde v(\theta)}, \nonumber \\
        & \leq \frac{1}{2\delta \lambda_{\max}\brk[c]{J(\theta_0)}}\brk[s]{-\lambda_{\min}\brk[c]{J(\theta_0)}\|\theta - \theta_0\|_2^2 + \|\theta - \theta_0\|_2 \|\tilde v(\theta)\|_2}, \nonumber \\ 
        & \leq \frac{1}{2\delta \lambda_{\max}\brk[c]{J(\theta_0)}}\brk[s]{-\lambda_{\min}\brk[c]{J(\theta_0)}\|\theta - \theta_0\|_2^2 + \delta^2 c_{v} \|\theta - \theta_0\|_2}, \nonumber \\
        & = \frac{1}{2\lambda_{\max}\brk[c]{J(\theta_0)}}\brk[s]{-\delta \lambda_{\min}\brk[c]{J(\theta_0)} + \delta^2 c_{v}}. \label{eqn:thm2-negative-terms}
    \end{align}
    Line \eqref{eqn:thm2-negative-terms} is negative when $\delta < {\lambda_{\min}\brk[c]{J(\theta_0)} / c_{v}}$. As a result, we can apply Lemma \ref{lemma:fixed-point-optima} and see there must exist a point $\check \theta_{n}$ where $\check \theta_{n} \in U_\delta$, but not on the boundary, and $m\brk[c]{(\check \theta_{n} - \theta_0) / \delta} = 0$. \eqref{eqn:m-def} is satisfied by $\check \theta_{n}$ so via Lemma \ref{lemma:remove-multipliers}, $\check \theta_{n}$ also satisfies \eqref{eqn:weighted-system-lagrangian} and \eqref{eqn:weighted-system-feasibility} and this holds with arbitrarily high conditional probability as $n$ grows. Furthermore, the Lagrange multipliers exist from Lemma \ref{lemma:remove-multipliers}.
    We immediately see $\check \theta_{n}$ is conditionally consistent because there exists $n(\delta, \epsilon)$ for arbitrarily small $\delta$ and $\epsilon$ such that both qualifiers for Lemma \ref{lemma:fixed-point-optima} are satisfied and
    \begin{equation}
        {\mathrm{pr}}\brk{\|\check \theta_n - \theta_0\|_2 < \delta \mid X_{1:n}} > 1 - \epsilon,
    \end{equation}
    for all $n > n(\delta, \epsilon)$.
\end{proof}
\subsection{Proof of Theorem \ref{thm:two}}

We start with an additional lemma to assist in the proof.

\begin{lemma} \label{lemma:score-cancellation}
    Let $U(\theta) = D_h^\top(\theta) \brk[c]{D_h(\theta_0) J(\theta_0)^{-1}D_h^\top(\theta)}^{-1} D_h(\theta_0) J(\theta_0)^{-1}$ be defined under Conditions \ref{condition:pos-def-information} and \ref{condition:full-rank-jacobian}. Let $\hat \theta_n$ be a strongly consistent estimator for $\theta_0$ such that $h(\hat \theta_n) = 0$ almost surely and
    \begin{equation}
        \|n^{1/2} \brk[c]{I - U(\hat \theta_n)} S_n(\hat \theta_n)\| \to 0 \quad a.s.[X_{1:\infty}]. \label{eqn:score-projection-convergence}
    \end{equation}
    Furthermore, let $\tilde M_n$ be such that
    \begin{equation}
        \tilde M_n \to_{c.p.} J(\theta_0)^{-1}\brk[c]{I - U(\theta_0)} \quad a.s.[X_{1:\infty}]. \label{eqn:M-prob-convergence}
    \end{equation}
    Then
    \begin{equation*}
        n^{1/2} \tilde M_n \tilde S_n(\hat \theta_n) \mid X_{1:n} \Rightarrow N\brk[s]{0, J(\theta_0)^{-1}\brk[c]{I - U(\theta_0)}}\quad a.s.[X_{1:\infty}].
    \end{equation*}
\end{lemma}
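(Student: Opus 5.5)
The plan is to split $n^{1/2}\tilde S_n(\hat\theta_n)$ into a weight-fluctuation part and a deterministic part. The fluctuation part gets a conditional CLT via Cramér–Wold and Lemma \ref{lemma:lindeberg-clt-corollary}. The deterministic part is removed using the projection structure of $I-U$ together with \eqref{eqn:score-projection-convergence}.

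\textbf{Decomposition.} Write $s_i=\nabla_\theta\log f_{\hat\theta_n}(X_i)$ and $w_{n,i}=Y_i/\bar Y$. Then
\begin{equation*}
n^{1/2}\tilde S_n(\hat\theta_n)=\frac{1}{\bar Y}\,n^{-1/2}\sum_{i=1}^n (Y_i-1)s_i+\frac{1}{\bar Y}\,n^{1/2}S_n(\hat\theta_n).
\end{equation*}
Since $Y_i$ is independent of $X$, Lemma \ref{lemma:weighted-avg-convergence} with $m\equiv 1$ gives $\bar Y\to_{c.p.}1$ a.s.

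\textbf{CLT for the fluctuation part.} Fix a unit vector $z$ and set $a_{in}=z^\top s_i$. By Lemma \ref{lemma:unit-vector-convergence}, $n^{-1}\sum_i a_{in}^2\to z^\top J(\theta_0)z>0$ (Condition \ref{condition:pos-def-information}) and $n^{-1}\max_i a_{in}^2\to0$, both a.s. Hence the ratio condition of Lemma \ref{lemma:lindeberg-clt-corollary} holds along almost every data sequence. Conditional on $X_{1:\infty}$, the $a_{in}$ are constants and the $Y_i$ are i.i.d.\ with mean one and variance one. Lemma \ref{lemma:lindeberg-clt-corollary} therefore gives
\begin{equation*}
\frac{\sum_i a_{in}(Y_i-1)}{(\sum_i a_{in}^2)^{1/2}}\Rightarrow N(0,1).
\end{equation*}
Combining this with the variance limit gives $n^{-1/2}\sum_i (Y_i-1)z^\top s_i\Rightarrow N(0,z^\top J(\theta_0)z)$. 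By Cramér–Wold, $n^{-1/2}\sum_i(Y_i-1)s_i\mid X_{1:n}\Rightarrow N\{0,J(\theta_0)\}$ a.s.

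\textbf{Slutsky and the covariance identity.} Slutsky's lemma, together with \eqref{eqn:M-prob-convergence} and $\bar Y\to_{c.p.}1$, makes $\tilde M_n$ times the fluctuation part converge to $N(0,MJM^\top)$ with $M=J^{-1}(I-U(\theta_0))$; write $J=J(\theta_0)$, $D=D_h(\theta_0)$, $A=(DJ^{-1}D^\top)^{-1}$. Then
\begin{equation*}
M=J^{-1}-J^{-1}D^\top ADJ^{-1},
\end{equation*}
which is symmetric, and $UJ^{-1}D^\top=D^\top$. A short computation then gives $MJM=M$, so the limit covariance is $J(\theta_0)^{-1}\{I-U(\theta_0)\}$ as claimed.

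\textbf{Deterministic part (main obstacle).} It remains to show $\tilde M_n\,n^{1/2}S_n(\hat\theta_n)/\bar Y\to_{c.p.}0$. By \eqref{eqn:score-projection-convergence},
\begin{equation*}
n^{1/2}S_n(\hat\theta_n)=U(\hat\theta_n)\,n^{1/2}S_n(\hat\theta_n)+o(1),
\end{equation*}
and $U(\hat\theta_n)=D_h^\top(\hat\theta_n)(\cdots)$, so the leading term lies in the range of $D_h^\top(\hat\theta_n)$. Two facts handle the target matrix:
\begin{itemize}
\item $\{I-U(\theta)\}D_h^\top(\theta)=0$ for every $\theta$ near $\theta_0$, by direct substitution.
\item $J^{-1}(I-U(\theta_0))D_h^\top(\hat\theta_n)\to0$, by continuity of $D_h$ and strong consistency of $\hat\theta_n$.
\end{itemize}
The difficulty is that $n^{1/2}S_n(\hat\theta_n)$ need not be a.s.\ bounded; heuristically it grows like a law-of-the-iterated-logarithm factor $(\log\log n)^{1/2}$. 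So the product $(\tilde M_n-M)\,n^{1/2}S_n(\hat\theta_n)$ is not automatically $o_{c.p.}(1)$. I would first try to control the multiplier $n^{1/2}\{\cdots\}$, i.e.\ show the coefficient vector in the $D_h^\top(\hat\theta_n)$-range is $O(1)$ a.s.\ up to the $\log\log$ factor. If that fails, I would use the concrete form of $\tilde M_n$ arising in the proof of Theorem \ref{thm:two}, namely $\tilde J_n^{-1}(I-U(\check\theta_n))$-type expressions. Such a matrix annihilates the range of $D_h^\top$ at a nearby point up to $O(\|\check\theta_n-\hat\theta_n\|)$, and that error, multiplied by $n^{1/2}S_n(\hat\theta_n)$, vanishes in conditional probability.
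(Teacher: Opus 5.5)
Your first three steps are the paper's proof. You use the same split of $n^{1/2}z^\top\tilde S_n(\hat\theta_n)$ into a weight-fluctuation term plus $n^{1/2}z^\top S_n(\hat\theta_n)$. You then use Lemma \ref{lemma:unit-vector-convergence} to verify the condition of Lemma \ref{lemma:lindeberg-clt-corollary}, followed by Cram\'er--Wold and conditional Slutsky; your check of $MJM=M$ is something the paper uses without stating.

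The gap is the deterministic part, which you leave open. Your first fallback cannot close it for the lemma as stated. An a.s.\ bound $\|n^{1/2}S_n(\hat\theta_n)\|=O\{(\log\log n)^{1/2}\}$ does not control $(\tilde M_n-M)\,n^{1/2}S_n(\hat\theta_n)$ when $\tilde M_n\to_{c.p.}M$ at no specified rate. For $r\ge 1$, take $\hat\theta_n$ to be the restricted MLE, so $n^{1/2}S_n(\hat\theta_n)=-D_h^\top(\hat\theta_n)\lambda_n^*/n^{1/2}$ and \eqref{eqn:score-projection-convergence} holds with left side identically zero. Take also the deterministic choice $\tilde M_n=M+(\log\log n)^{-1/4}I$. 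Then $\lambda_n^*/n^{1/2}$ is a.s.\ a full-rank linear image of $n^{1/2}S_n(\theta_0)$ up to $o(1)$, so it obeys a law of the iterated logarithm. This makes the $X$-measurable shift $\tilde M_n n^{1/2}S_n(\hat\theta_n)$ unbounded along almost every sample path. The conditional laws therefore fail to be tight along a subsequence and cannot converge, so with an arbitrary $\tilde M_n$ the statement is not provable. The paper's own proof makes exactly the move you flagged. It applies conditional Slutsky to $\tilde M_n\,n^{1/2}S_n(\hat\theta_n)$ as if that vector were bounded, and then swaps $U(\hat\theta_n)$ for $U(\theta_0)$ ``trivially''. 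So do not try to match the paper at this step.

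Your second fallback is the right repair, but it changes the hypothesis and needs three ingredients you have not supplied.
First, for the $\tilde M_n$ of Theorem \ref{thm:two} (the top-left block of $M_n^{-1}$), the top-right block of $M_n^{-1}M_n=I$ gives $\tilde M_n D_h^\top(\check\theta_n)=0$ exactly.
Second, you need two a.s.\ facts about $\hat\theta_n$. They follow from $h(\hat\theta_n)=0$, \eqref{eqn:score-projection-convergence}, a Taylor expansion of $S_n$ at $\theta_0$, and the Hartman--Wintner LIL for $n^{1/2}S_n(\theta_0)$. They are $\|\hat\theta_n-\theta_0\|=O\{(\log\log n/n)^{1/2}\}$, and $n^{1/2}S_n(\hat\theta_n)=D_h^\top(\hat\theta_n)\mu_n+o(1)$ with $\|\mu_n\|=O\{(\log\log n)^{1/2}\}$.
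Third, you need a preliminary conditional rate $\|\check\theta_n-\hat\theta_n\|=O\{(\log\log n/n)^{1/2}\}$. Read it off the linear system in the proof of Theorem \ref{thm:two}, using that $M_n^{-1}$ is bounded in conditional probability. This avoids circularity with the $n^{-1/2}$ rate that the theorem itself delivers.

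With these in hand, $\tilde M_n n^{1/2}S_n(\hat\theta_n)=\tilde M_n\{D_h^\top(\hat\theta_n)-D_h^\top(\check\theta_n)\}\mu_n+o_{c.p.}(1)$. By the bounded second derivatives of $h$, this is of order $\log\log n/n^{1/2}$, so it vanishes; no replacement of $U(\hat\theta_n)$ by $U(\theta_0)$ is needed. Correspondingly, the lemma should be restated with an extra hypothesis: either this annihilation property of $\tilde M_n$, or a rate $\|\tilde M_n-M\|=o_{c.p.}\{(\log\log n)^{-1/2}\}$.
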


\begin{proof}
    Let $t_n(z) = n^{1/2}z^\top \tilde S_n(\hat \theta_n)$  and $a_{in}(z) = z^\top \nabla_{\theta} \brk[c]{\log f_{\hat \theta_n}(X_i)}$. Rearranging terms we have
    \begin{align*}
        t_n(z) & = n^{1/2} \sum_{j=1}^p z_j \brk3{\frac{\sum_{i=1}^n Y_i \frac{\partial }{\partial \theta_j} \log f_{\hat \theta_n}(X_i)}{\sum_{i=1}^n Y_i}} = \frac{1}{n^{1/2} \bar Y} \sum_{i=1}^n Y_i a_{in}(z).
    \end{align*}
    The strong law of large numbers implies $\bar Y$ converges almost surely (with respect to the weights probability measure) to one, so $t_n(z)$ will converge in distribution similarly to $n^{-1/2} \sum_{i=1}^n Y_i a_{in}(z)$.
    Under Lemma \ref{lemma:unit-vector-convergence} we have 
    \begin{equation*}
        \frac{1}{n}\sum_{i=1}^n a_{in}^2(z) \to z^\top J(\theta_0) z, \quad \frac{1}{n}\max_{i=1,\ldots,n} a_{in}^2(z) \to 0 \quad a.s.[X_{1:\infty}],
    \end{equation*}
    so the conditions to apply Lemma \ref{lemma:lindeberg-clt-corollary} hold $a.s.[X_{1:\infty}]$ on this quantity and as a result,
    \begin{align*}
        t_n(z) - n^{1/2} z^\top S_n(\hat \theta_n) \mid X_{1:n} \Rightarrow N\brk[c]{0, z^\top J(\theta_0) z} \quad a.s.[X_{1:\infty}].
    \end{align*}
    Applying the Cram\'er-Wold theorem,
    \begin{equation*}
        n^{1/2} \brk[c]{\tilde S_n(\hat \theta_n) - S_n(\hat \theta_n)} \mid X_{1:n} \Rightarrow N\brk[c]{0, J(\theta_0)} \quad a.s.[X_{1:\infty}].
    \end{equation*}
    
    Multiplying both sides by $\tilde M_n$ and applying a conditional version of Slutsky's lemma (along almost every sample path) with \eqref{eqn:M-prob-convergence} gives 
    \begin{equation*}
        n^{1/2} \tilde M_n \brk[c]{\tilde S_n(\hat \theta_n) - S_n(\hat \theta_n)} \mid X_{1:n} \Rightarrow N\brk[s]{0, J(\theta_0)^{-1} \brk[c]{I - U(\theta_0)}} \quad a.s.[X_{1:\infty}].
    \end{equation*}
    To complete the proof we show the asymptotic distribution of $n^{1/2} \tilde M_n S_n(\hat \theta_n) \mid X_{1:n}$ is a point mass at zero. First, by \eqref{eqn:M-prob-convergence}, using conditional Slutsky's, for almost every sample path,
    \begin{equation*}
        n^{1/2} \tilde M_n S_n(\hat \theta_n) \mid X_{1:n} \to_{c.p.} J(\theta_0)^{-1} \brk[c]{I - U(\theta_0)} S_n(\hat \theta_n)\quad a.s.[X_{1:\infty}].
    \end{equation*}
    As a consequence of \eqref{eqn:score-projection-convergence}, $n^{1/2}\brk[c]{I - U(\theta_0)} S_n(\hat \theta_n)$ is trivially conditionally consistent for 0. As a result, for almost every sample path, $n^{1/2} \tilde M_n S_n(\hat \theta_n)$ is conditionally consistent to 0 and the proof statement holds.
\end{proof}

We now prove the full theorem.

\begin{proof}
    Using Condition \ref{condition:constraint-smoothness}, assume without loss of generality, that $\alpha$ is small enough such that $D_h(\theta)$ is full-rank with linearly independent rows for all $\theta \in U_\alpha$. Then Conditions \ref{condition:feasibility} and \ref{condition:full-rank-jacobian} apply and a necessary condition for $\check \theta_n$ to be a local optima (as a sample from Algorithm \ref{algorithm:cwbb}) is satisfying the system
    \begin{align}
        \begin{split} \label{eqn:thm2-intermediate-1}
            0 & = \tilde S_n(\check \theta_n) + \frac{\nabla_{\theta} \log \pi (\check \theta_n)}{n} + \frac{D_h^\top(\check \theta_n) \check \lambda_n}{n}, \\
            0 & = h(\check \theta_n). \\
        \end{split}
    \end{align}
    Via Theorem \ref{thm:one} we have shown that $\check \theta_n$ is conditionally consistent for $\theta_0$, so we can let $n$ be large enough that $\check \theta_n \in U_\alpha$ with conditional probability at least $1-\epsilon$. 
    
    Using the strong consistency of $\hat \theta_n$ we can let $n$ be large enough so Condition \ref{condition:log-likelihood-partials} applies almost surely, giving a Taylor expansion around $\hat \theta_n$ of $\tilde S_n(\theta)$ to get
    \begin{equation}
        \tilde S_n(\theta) = \tilde S_n(\hat \theta_n) - \tilde J_n(\hat \theta_n) (\theta - \hat \theta_n) + r(\theta), \label{eqn:thm2-proof-expanded}
    \end{equation}
    where $r(\theta)$ has $j$th entry $\frac{1}{2}(\theta - \hat \theta_n)^\top \tilde \Psi_{1:n}^j(\theta_S^*) (\theta - \hat \theta_n)$ such that $\theta_S^*$ is some point on the line segment between $\theta$ and $\hat \theta_n$. Equation \eqref{eqn:thm2-proof-expanded} can then be rearranged to 
    \begin{equation}
        \tilde S_n(\hat \theta_n) = \tilde S_n(\theta) + \brk[c]{\tilde J_n(\hat \theta_n) - R_n(\theta)} (\theta - \hat \theta_n), \label{eqn:thm2-intermediate-2}
    \end{equation}
    where $R_n \in \mathbb R^{p\times p}$ such that 
    \begin{equation*}
    \operatorname{row}_j\brk[c]{R_n(\theta)} = \frac{1}{2}(\theta - \hat \theta_n)^\top \tilde \Psi_{1:n}^j(\theta_S^*). 
    \end{equation*}
    Substitute the stationary condition of \eqref{eqn:thm2-intermediate-1} into \eqref{eqn:thm2-intermediate-2} evaluated at $\check \theta_n$ for
    \begin{equation}
        \tilde S_n(\hat \theta_n) = -\frac{\nabla_\theta \log \pi(\check \theta_n)}{n} - \frac{D_h^\top(\check \theta_n) \check \lambda_n}{n} + \brk[c]{\tilde J_n(\hat \theta_n) - R_n(\check \theta_n)} (\check \theta_n - \hat \theta_n). \label{eqn:thm2-system-1}
    \end{equation}

    By definition $h(\hat \theta_n) = 0$. From Condition \ref{condition:constraint-smoothness} we can expand $h$ around $\hat \theta_n$ and putting $\check \theta_n$ into the expansion gives
    \begin{align}
        0 & = \brk[c]{D_h^\top(\hat \theta_n) + Q_h(\check \theta_n)}^\top(\check \theta_n - \hat \theta_n), \label{eqn:thm2-system-2} \\
        \operatorname{row}_j\brk[c]{Q_h(\check \theta_n)} & = (\check \theta_n - \hat \theta_n)^\top \nabla_\theta^2 h_j(\theta_h^*), \nonumber
    \end{align}
    where $\theta_h^*$ is some point on the line segment between $\check \theta_n$ and $\hat \theta_n$.

    Let 
    \begin{equation*}
        M_n(\hat \theta_n, \check \theta_n) = \begin{bmatrix}
            \tilde J_n(\hat \theta_n) - R_n(\check \theta_n) & - D_h^\top(\check \theta_n) \\
            -\brk[c]{D_h^\top(\hat \theta_n) + Q_h(\check \theta_n)}^\top & 0
        \end{bmatrix}.
    \end{equation*}
    In the proof of Theorem 7 of \citet{newtonWeightedLikelihoodBootstrap1991}, from Conditions \ref{condition:identifiability}, \ref{condition:neighborhood}, \ref{condition:log-likelihood-partials}, \ref{condition:density-boundedness}, \ref{condition:pos-def-information}, and the strong consistency of $\hat \theta_n$, they show $\tilde J_n(\hat \theta_n) - R_n(\check \theta_n) \to_{c.p.} J(\theta_0)\ a.s.[X_{1:\infty}]$. Due to the continuity of $D_h, Q_h$ (Condition \ref{condition:constraint-smoothness}) and the conditional consistency of $\check \theta_n$, the continuous mapping theorem implies $D_h(\check \theta_n) \to_{c.p.} D_h(\theta_0)\ a.s.[X_{1:\infty}]$ and $D_h(\hat \theta_n) + Q_h(\check \theta_n) \to_{c.p.} D_h(\theta_0)\ a.s.[X_{1:\infty}]$. From these conditional consistency results, we have
    \begin{equation*}
        M_n(\hat \theta_n, \check \theta_n) \to_{c.p.} \begin{bmatrix}
            J(\theta_0) & - D_\star^\top \\
            -D_\star & 0 
    \end{bmatrix} \quad a.s.[X_{1:\infty}],
    \end{equation*}
    where $D_\star^\top =  D_h^\top (\theta_0)$. The matrix $M_n$ converges to is nonsingular as $J(\theta_0)$ and $D_h(\theta_0)$ are full-rank.
    
    We combine \eqref{eqn:thm2-system-1} and \eqref{eqn:thm2-system-2} and multiply by $n^{1/2}$ to get the following linear system
    \begin{equation*}
        \begin{bmatrix}
            n^{1/2} \tilde S_n(\hat \theta_n) + n^{-1/2}\nabla_\theta \log \pi(\check \theta_n) \\
            0
        \end{bmatrix} = M_n(\hat \theta_n, \check \theta_n) \begin{bmatrix}
            {n}^{1/2}(\check \theta_n - \hat \theta_n) \\
            n^{-1/2} \check \lambda_n
        \end{bmatrix}.
    \end{equation*}
    On a set with arbitrarily high conditional probability we may write 
    \begin{equation*}
        M_n^{-1}(\hat \theta_n, \check \theta_n)\begin{bmatrix}
            n^{1/2} \tilde S_n(\hat \theta_n) + n^{-1/2} {\nabla_\theta \log \pi(\check \theta_n)} \\
            0
        \end{bmatrix} = \begin{bmatrix}
            n^{1/2}(\check \theta_n - \hat \theta_n) \\
            n^{-1/2}{\check \lambda_n}
        \end{bmatrix}.
    \end{equation*}
    The matrix inverse is continuous in neighborhoods of full-rank, and all functions in $M_n$ are continuous, so we can apply the continuous mapping theorem for
    \begin{equation}
        M_n^{-1}(\hat \theta_n, \check \theta_n) \to_{c.p.} \begin{bmatrix}
                J(\theta_0) & - D_\star^\top \\
                -D_\star & 0 
        \end{bmatrix}^{-1} = \begin{bmatrix}
            P & Q \\ Q^\top & R
        \end{bmatrix}, \label{eqn:block-inv-converge}
    \end{equation}
    where, using formulas for the inverse of block matrices \citep{matrixcookbook},
    \begin{align*}
        P & = J(\theta_0)^{-1}\brk{I - U}, \\
        Q & = -J(\theta_0)^{-1}D_\star^\top \brk[c]{D_\star J(\theta_0)^{-1}D_\star^\top}^{-1}, \\
        R & = -\brk[c]{D_\star J(\theta_0)^{-1}D_\star^\top}^{-1}, \\
        U & = D_\star^\top \brk[c]{D_\star J(\theta_0)^{-1}D_\star^\top}^{-1} D_\star J(\theta_0)^{-1}.
    \end{align*}
    Let $\tilde M_n$ be the top-left block of $M_n^{-1}(\hat \theta_n, \check \theta_n)$, so
    $n^{1/2} (\check \theta_n - \hat \theta_n) = n^{1/2} \tilde M_n \brk[c]{\tilde S_n(\hat \theta_n) + n^{-1/2}\nabla_\theta \log \pi(\check \theta_n) } $.
    Applying Lemma \ref{lemma:score-cancellation} with \eqref{eqn:block-inv-converge} and noting $n^{-1/2} \nabla_{\theta} \log \pi(\check \theta_n)$ converges in conditional distribution to zero (from Condition \ref{condition:prior-smoothness}), we have the proof statement.
\end{proof}

\end{document}